\newcommand{\Ninfty}{\mathbb{N}_{\infty}}
\newcommand{\Nsup}{\infty}
\newcommand{\N}{\mathbb{N}}
\newcommand{\A}{\mathbb{A}}
\newcommand{\B}{\mathbb{B}}
\newcommand{\set}[1]	{\left\{{#1}\right\}}
\newcommand{\perm}[1]{\langle #1\rangle}
\renewcommand{\vec}[1]{\text{\textbf{#1}}}
\newcommand{\notL}{\overline{L}}
\newcommand{\sem}[1]	{[\![#1]\!]}
\newcommand{\nsem}[1]	{[\![#1]\!]_{\neg}}
\newcommand{\semB}[1]	{[\![#1]\!]_B}
\newcommand{\semS}[1]	{[\![#1]\!]_S}
\newcommand{\aut}{\mathcal{A}}
\newcommand{\Fin}{\mathit{Fin}}
\newcommand{\expr}{\mathrm{Expr}}
\newcommand{\oexpr}{\mathrm{Oexpr}}
\newcommand{\coe}{\mathrm{C_{OE}}}
\newcommand{\ose}{$\omega\sharp$-expression}
\newcommand{\se}{$\sharp$-expression}
\newcommand{\intro}[1]{\emph{#1}}
\newcommand{\bowtien}{\underset{n\to\infty}{\bowtie}}
\newcommand{\iffb}{\rightleftharpoons}
\newcommand{\ic}{\mathtt{ic}}
\renewcommand{\i}{\mathtt{i}}
\newcommand{\ccr}{\mathtt{cr}}
\renewcommand{\r}{\mathtt{r}}
\newcommand{\e}{\varepsilon}
\newcommand{\val}{\mathrm{val}}
\newcommand{\eval}{\mathrm{eval}}
\newcommand{\cl}[1]{\overline{#1}}
\newcommand{\semi}{\mathbf{S}}
\newcommand{\Smin}{\semi_\mathrm{min}}
\newcommand{\M}{\mathbf{M}}
\newcommand{\ltlq}{\text{LTL}^\leq}
\newcommand{\nltlq}{\text{LTL}^>}
\newcommand{\UN}{U^{\leq N}}
\newcommand{\RgN}{R^{>N}}
\newcommand{\sph}{\sem{\phi}}
\newcommand{\sphapp}{\sem{\phi}^\approx}
\newcommand{\ssp}{\semi_{\sphapp}}
\begin{document}
\title{Linear temporal logic for regular cost functions}
\author{Denis Kuperberg}
\address{{\sc Liafa}/CNRS/Universit\'e Paris 7, Denis Diderot, France}
\keywords{LTL, cost functions, cost automata, stabilization semigroup, aperiodic, syntactic congruence}
\subjclass{F.1.1,F.4.3}
\titlecomment{Updated version 08/02/2017}

\begin{abstract}
Regular cost functions have been introduced recently as
an extension to the notion of regular languages with counting
capabilities, which retains strong closure, equivalence, and
decidability properties. The specificity of cost functions is
that exact values are not considered, but only estimated.

In this paper, we define an extension of Linear Temporal Logic (LTL) over finite words to describe cost functions. We give an explicit translation from this new logic to two dual form of cost automata, and we show that the natural decision problems for this logic are PSPACE-complete, as it is the case in the classical setting. We then algebraically characterize the expressive power of this logic, using a new syntactic congruence for cost functions introduced in this paper.
\end{abstract}
\maketitle
\section{Introduction}

Since the seminal works of Kleene and Rabin and Scott, the theory of regular languages is one of the
cornerstones in computer science. Regular languages have many good properties, of
closure, of equivalent characterizations, and of decidability, which makes them
central in many situations.

Recently, the notion of regular cost function for words has been presented as a candidate for being a
quantitative extension to the notion of regular languages, while retaining most of the
fundamental properties of the original theory such as the closure properties, the various equivalent characterizations, and the
decidability \cite{Col09}. A cost function is an equivalence class of
the functions from the domain (words in our case) to $\N\cup\set{\infty}$, modulo an equivalence relation $\approx$
which allows some distortion, but preserves the boundedness property over each subset of the domain.
The model is an extension to the notion of languages in the following sense: one can identify a language
with the function mapping each word inside the language to~$0$, and each word outside the language to~$\Nsup$.
It is a strict extension since regular cost functions have counting capabilities, e.g., counting the number of
occurrences of letters, measuring the length of intervals, etc...

This theory grew out of two main lines of work: research by Hashiguchi \cite{Hashiguchi82}, Kirsten \cite{Kirsten05}, and others who were studying problems which could be reduced to whether or not some function was bounded over its domain (the most famous of these problems being the star height problem); and research by Boja\'{n}czyk and Colcombet \cite{Bojanczyk04,BojanczykC06} on extensions of monadic second-order logic (MSO) with a quantifier $U$ which can assert properties related to boundedness.

Linear Temporal Logic (LTL), which is a natural way to describe logical constraints over a linear structure, have also been a fertile subject of study, particularly in the context of regular languages and automata \cite{VarWol}. Moreover quantitative extensions of LTL have recently been successfully introduced. For instance the model Prompt-LTL introduced in \cite{PromptLTL} is interested in bounding the waiting time of all requests of a formula, and in this sense is quite close to the aim of cost functions.

In this paper, we extend LTL (over finite words) into a new logic with quantitative features ($\ltlq$), in order to describe cost functions over finite words with logical formulae. We do this by adding a new operator $\UN$ : a formula $\phi\UN\psi$ means that $\psi$ holds somewhere in the future, and $\phi$ has to hold until that point, except at most $N$ times (we allow at most $N$ "mistakes" of the Until formula). The variable $N$ is unique in the formula, and the semantic of the formula is the least value of $N$ which makes the statement true.



\subsection*{Related works and motivating examples}

Regular cost functions are the continuation of a sequence of works that intend to solve difficult questions in language theory.
Among several other decision problems, the most prominent example is the star-height problem: given a regular language~$L$ and an integer~$k$,
decide whether~$L$ can be expressed using a regular expression using at most $k$-nesting of Kleene stars.
The problem was resolved by Hashigushi \cite{Hashiguchi88} using a very intricate proof, and later by Kirsten~\cite{Kirsten05} using an automaton that has counting features.


Finally, also using ideas inspired from \cite{BojanczykC06},
the theory of those automata over words has been unified in~\cite{Col09},
in which cost functions are
introduced, and suitable models of automata, algebra, and logic for
defining them are presented and shown equivalent. Corresponding
decidability results are provided.  The resulting theory is a neat
extension of the standard theory of regular languages to a
quantitative setting. 

On the logic side, Prompt-LTL, introduced in \cite{PromptLTL}, and PLTL \cite{AETP01}, which are similar, show an interesting way to extend LTL in order to look at boundedness issues, and already gave interesting decidability and complexity results. In \cite{DJP04}, the logics $k$TL was introduced, which uses an explicit bound $k$ to express some desired boundedness properties.

These logics are only interested in bounding the wait time, i.e. consecutive events. It would correspond in the framework of regular cost functions to the subclass of temporal cost functions introduced in \cite{CKL}.

We will introduce here a logic $\ltlq$ with a more general purpose : it can bound the wait time before an event, but also non-consecutive events, like the number of occurences of a letter in a word. 

These quantitative issues are a quite natural preoccupation in the context of verification: for instance one would expect that a system can react in a bounded time. The new features of $\ltlq$ could possibly be used to allow some mistakes in the behaviour fo the program, but guarantee a global bound on the number of mistakes. An other issue is the consumption of resources: for instance it is interesting to know whether we can bound the number of times a program stores something in the memory.

\subsection*{Contributions}
It is known from~\cite{Col09} that regular cost functions are the ones recognizable by stabilization semigroups (or in an equivalent way, stabilization monoids), and from \cite{CKL} than there is an effective quotient-wise minimal stabilization semigroup for each regular cost function.
This model of semigroups extends the standard approach for languages.


We introduce a quantitative version of LTL in order to describe cost functions by means of logical formulas. The idea of this new logic is to bound the number of "mistakes" of Until operators, by adding a new operator $\UN$. The first contribution of this paper is to give a direct  translation from $\ltlq$-formulae to $B$-automata, which is an extension of the classic translation from LTL to Büchi automaton for languages. This translation preserves exact values (i.e. not only cost functions equivalence), which could be interesting in terms of future applications.
We also use dual forms of logic and cost automata to describe a similar translation, and show that the boundedness problem for $\ltlq$-formulae is PSPACE-complete (as it was the case in the classical setting). Therefore, we do not lose anything in terms of computational complexity, when generalizing from LTL to $\ltlq$.

We then show that regular cost functions described by LTL formulae are the same as the ones computed by aperiodic stabilization semigroups, and this characterization is effective.
The proof uses a syntactic congruence for cost functions, introduced in this paper, which generalizes the Myhill-Nerode equivalence for regular languages. This congruence present a general interest besides this particular context, since it can be used for any regular cost function.

This work validates the algebraic approach for studying cost functions, since it shows that the generalization from regular languages extends also to syntactic congruence. It also allows a more user-friendly way to describe cost functions, since temporal logic is often more intuitive than automata or stabilization semigroups to describe a given cost function.

As it was the case in \cite{CKL} for temporal cost functions, the characterization result obtained here for $\ltlq$-definable cost functions follows the spirit of Schützenberger's theorem, which links star-free languages with aperiodic monoids \cite{Schutz65}.

\subsection*{Organisation of the paper}

After some notations, and reminder on cost functions and stabilization semigroups,  we introduce in Section~\ref{ltl} $\ltlq$ as a quantitative extension of LTL, and  give an explicit translation from $\ltlq$-formulae to $B$ and $S$-automata in Sections \ref{sec:ltlBaut} and \ref{sec:ltlSaut}. We then present in Section \ref{synccong} a syntactic congruence for cost functions,  and show that it indeed computes the minimal stabilization semigroup of any regular cost function. We finally use this new tool to show that $\ltlq$ has the same expressive power as aperiodic stabilization semigroups.

\subsection*{Notations}
We will note \intro{$\N$} the set of non-negative integers and \intro{$\N_\infty$} the set $\N\cup\{\infty\}$, ordered by~$0<1<\dots<\infty$. We will say that a set $X\subseteq \N_\infty$ is bounded if there is a number $N\in\N$ such that for all $x\in X$, we have $x<N$. In particular, if $X$ contains $\infty$ then $X$ is unbounded.
If $E$ is a set, \intro{$E^\N$} is the set of infinite sequences of elements of~$E$ (we will not use here the notion of infinite word). Such sequences will be denoted by bold letters ($\vec a$, $\vec b$,...).
We will work with a fixed finite alphabet \intro{$\A$}.
The set of words over~$\A$ is \intro{$\A^*$} and the empty word will be noted~\intro{$\epsilon$}. The concatenation of words $u$ and~$v$ is~$uv$. The length of~$u$ is \intro{$|u|$}. The number of occurrences of letter~$a$ in $u$ is~\intro{$|u|_a$}. We will use $|\cdot|$ (resp. $|\cdot|_a$) to note the function $u\mapsto |u|$ (resp. $u\mapsto|u|_a$).
Functions $\N\to\N$ will be denoted by letters $\alpha, \beta,\dots$, and will be extended to $\N\cup\set{\infty}$ by $\alpha(\infty)=\infty$. Such functions will be called \intro{corrections functions}.

\section{Regular Cost functions}\label{costfunc}

\subsection{Cost functions and equivalence}
Let $\mathcal{F}$ be the set of functions from $\A^*$ to $\N_\infty$.
If $L\subseteq\A^*$, we will note $\chi_L$ the function of $\mathcal{F}$ defined by $\chi_L(u)=0$ if $u\in L$, $\infty$ if $u\notin L$.
For $f,g\in\mathcal{F}$, we say that $f\preccurlyeq g$ if for all set $W\subseteq \A^*$, if $g(W)$ is bounded then $f(W)$ is bounded. We define the equivalence relation $\approx$ on $\mathcal{F}$ by $f\approx g$ if $f\preccurlyeq g$ and $g\preccurlyeq f$. Notice that $f\approx g$ means that $f$ and $g$ are bounded on the same sets of words, i.e. for all $W\subseteq\A^*$, we have $f(W)$ is bounded if and only if $g(W)$ is bounded. 
This equivalence relation does not pay attention to exact values, but preserves the existence of bounds.

We also introduce another relation, which is parametrized by a correction function.
If $\alpha$ is a correction function (see Notations), we say that $f\leq_\alpha g$ if $f\leq\alpha\circ g$, and $f\approx_\alpha g $ if $f\leq_\alpha g$ and $g\leq_\alpha f$. Intuitively, $f\approx_\alpha g$ means that one can be obtained from the other by ``distorting'' the value according to the correction function $\alpha$. In particular, $f\approx_\mathit{id} g$ if and only if $f=g$ (where $\mathit{id}$ is the identity function).

\begin{lem}\cite{Col09}
Let $f,g\in\mathcal{F}$. We have $f\preccurlyeq g$ (resp. $f\approx g$) if and only if there exists a correction function $\alpha$ such that $f\leq_\alpha g$ (resp. $f\approx_\alpha g$).
\end{lem}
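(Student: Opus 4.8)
The plan is to prove the two equivalences separately, in each case establishing the nontrivial implication by constructing a correction function explicitly, the easy implication being immediate.

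First, the easy directions. If $f \leq_\alpha g$ for some correction function $\alpha$, then for any $W \subseteq \A^*$ with $g(W)$ bounded, say $g(W) \subseteq \{0,\dots,N-1\}$ with $N \in \N$ (in particular $\infty \notin g(W)$), we get $f(W) \subseteq \{0,\dots,\alpha(N-1)\}$, which is bounded; hence $f \preccurlyeq g$. The statement for $\approx$ versus $\approx_\alpha$ then follows from the statement for $\preccurlyeq$ versus $\leq_\alpha$: if $f \approx_\alpha g$ then $f \leq_\alpha g$ and $g \leq_\alpha f$, so $f \preccurlyeq g$ and $g \preccurlyeq f$, i.e. $f \approx g$.

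For the hard direction, assume $f \preccurlyeq g$; I want to produce $\alpha : \N \to \N$ with $f \leq \alpha \circ g$. For each $n \in \N$, consider the set $W_n = g^{-1}(\{0,1,\dots,n\}) = \{u \in \A^* : g(u) \leq n\}$. Since $g(W_n) \subseteq \{0,\dots,n\}$ is bounded, the hypothesis $f \preccurlyeq g$ gives that $f(W_n)$ is bounded, so there exists a least integer $m_n \in \N$ with $f(u) < m_n$ for all $u \in W_n$; in particular $f(u) \neq \infty$ whenever $g(u) \neq \infty$. Define $\alpha(n) = \max\{m_0, m_1, \dots, m_n\}$, which makes $\alpha$ nondecreasing, and set $\alpha(\infty) = \infty$ as per the convention on correction functions. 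Now check $f \leq \alpha \circ g$ pointwise: for $u \in \A^*$, if $g(u) = \infty$ then $\alpha(g(u)) = \infty \geq f(u)$ trivially; if $g(u) = n \in \N$ then $u \in W_n$, so $f(u) < m_n \leq \alpha(n) = \alpha(g(u))$. Hence $f \leq_\alpha g$. Finally, for the $\approx$ case: given $f \approx g$, apply what was just shown to both $f \preccurlyeq g$ and $g \preccurlyeq f$ to obtain correction functions $\alpha_1, \alpha_2$; then $\alpha = \max(\alpha_1, \alpha_2)$ (extended by $\alpha(\infty)=\infty$) satisfies $f \leq_\alpha g$ and $g \leq_\alpha f$, i.e. $f \approx_\alpha g$.

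The only subtle point — and the main thing to get right — is the argument that $f(W_n)$ being bounded yields a genuine integer bound $m_n$ that is also finite (equivalently, that $\infty \notin f(W_n)$); this is exactly where the definition of "bounded" from the Notations section does the work, since an unbounded set may contain $\infty$ and a bounded one cannot. Everything else is a routine verification, and the extension of $\alpha$ to $\N_\infty$ by $\alpha(\infty) = \infty$ is precisely what makes the pointwise check go through in the $g(u) = \infty$ case.
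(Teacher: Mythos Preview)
Your proof is correct and follows essentially the same approach as the paper: define $W_n = \{u : g(u) \leq n\}$, use $f \preccurlyeq g$ to bound $f(W_n)$, and set $\alpha(n)$ to that bound (the paper takes $\alpha(n) = \sup f(W_n)$ directly, with a separate clause for $W_n = \emptyset$, while your $\max\{m_0,\dots,m_n\}$ is an unnecessary but harmless monotonization since the $W_n$ are already nested). One small slip: in the easy direction you write $f(W) \subseteq \{0,\dots,\alpha(N-1)\}$, but since correction functions need not be monotone this should be $f(W) \subseteq \{0,\dots,\max_{k<N}\alpha(k)\}$; the conclusion that $f(W)$ is bounded is unaffected.
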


\begin{proof}
Assume $f\leq_\alpha g$ for some $\alpha$. If $g(W)$ is bounded by $M$ for some set $W\subseteq\A^*$, then $f(W)$ is bounded by $\alpha(M)$, so we get $f\preccurlyeq g$.

Conversely, if $f\preccurlyeq g$, we want to to build $\alpha$ such that $f\leq_\alpha g$.
For each $n\in\N$, we define $W_n=\set{u\in\A^*~|~g(u)\leq n}$.
We define $\alpha(n)=\sup f(W_n)$ if $W_n\neq\emptyset$, and $\alpha(n)=n$ otherwise. As always, $\alpha(\infty)=\infty$.
Notice that because $f\preccurlyeq g$, for every $n\in\N$ we have $\alpha(n)\in\N$, since $f$ is bounded on $W_n$. Let $u\in\A^*$. If $g(u)$ is finite, then let $n=g(u)$, we have $u\in W_n$, so $f(u)\leq\alpha(n)=\alpha\circ g(u)$. If $g(u)=\infty$, then we always have $f(u)\leq\alpha\circ g(u)=\infty$.

We showed that $f\preccurlyeq g$ if and only if there exists a correction function $\alpha$ such that $f\leq_\alpha g$.
It directly follows that if $f\approx_\alpha g$, then $f\preccurlyeq g$ and  $g\preccurlyeq f$, thus $f\approx g$.
Conversely, if $f\approx g$, then there are correction function $\alpha,\beta$ such that $f\leq_\alpha g$ and  $g\leq_\beta f$. We get $f\approx_{\max(\alpha,\beta)} g$.
\end{proof}
Notice that saying $f\approx_\alpha g$ is more precise than saying $f\approx g$: in addition to preserving the qualitative information on bound, the correction function $\alpha$ gives a quantitative information on the distortion of bounds.

A \intro{cost function} is an equivalence class of $\mathcal{F}/{\approx}$. In practice, cost functions will always be represented by one of their elements in $\mathcal{F}$. If $f$ is a function in $\mathcal{F}$, we will note $f^\approx$ the cost function containing $f$. We will say that an object (automaton, logical formula) \intro{recognizes a cost function}, when it defines a function in $\mathcal{F}$, but the notion of equivalence we are mostly interested in is the $\approx$-equivalence instead of the equality of functions.

Notice that the value $\infty$ is considered unbounded, so if $L$ and $L'$ are languages of $\A^*$, then $\chi_L\approx\chi_{L'}$ if and only if $L=L'$. This shows that considering languages as cost functions does not lose any information on these languages, and therefore cost function theory properly extends language theory.

\begin{rem}
They are uncountably many cost functions in $\mathcal{F}/\approx$, and each cost function contains uncountably many functions. Therefore it is hard to give an explicit description of all the functions in a $\approx$-class, other than all the functions equivalent to a particular representative.
\end{rem}

We will now introduce two models of cost automata recognizing cost functions. These definitions are from \cite{Col09}, the reader can report to it for more details. In both cases, we define the semantic of an automaton $\aut$ as a function $\sem{\aut}$ in $\mathcal F$, which we will mainly look as a representative of the cost function $\sem{\aut}^\approx$.

\subsection{$B$-automata}\label{sec:aut}
A $B$-automaton is a tuple $\perm{Q,\A,\mathit{In},\mathit{Fin},\Gamma,\Delta}$ where $Q$ is the set of states, $\A$ the alphabet, $\mathit{In}$ and $\mathit{Fin}$ the sets of initial and final states, $\Gamma$ the set of counters, and $\Delta\subseteq Q\times\A\times\set{\e,\ic,\r}^\Gamma\times Q$ is the set of transitions.

Counters have integers values starting at $0$, and an atomic action $\sigma\in\set{\e,\ic,\r}^\Gamma$ update the value of every counter $\gamma$ in the following way: $\ic$ increments by $1$, $\r$ resets to $0$, and $\e$ leaves the counter value unchanged. If $e$ is a run, let $C(e)$ be the set of values reached during $e$, at any point of the run and on any counter of $\Gamma$. The notation ``$\ic$'' stands for ``increment check'', meaning that as soon as we increment a counter, we put its value in $C(e)$.

A $B$-automaton $\aut$ recognizes a cost function $\semB{\aut}^\approx$ via the following semantic: 
$$\semB{\aut}(u)=\inf\set{\sup C(e), e\text{ run of }\aut\text{ over }u}.$$

With the usual conventions that $\sup\emptyset=0$ and $\inf\emptyset=\infty$.
It means that the value of a run is the maximal value reached by a counter, and the nondeterminism resolves in taking the run with the least value.
If there is no accepting run on a word $u$, then $\semB{\aut}(u)=\infty$.

Notice that in particular, if the automaton does not have any counter, then it is a classical automaton recognizing a language $L$, and its semantic is $\semB{\aut}=\chi_L$, with $\chi_L(u)=0$ if $u\in L$ and $\chi_L(u)=\infty$ if $u\notin L$.

\begin{exa}\label{example:automata}
Let $\A=\set{a,b}$. The functions $|\cdot|_a$ and $2|\cdot|_a+5$ represent the same cost function, which is recognized by the following one-counter $B$-automaton on the left-hand side. The cost function containing $u\mapsto\min\set{n\in\N,~a^n\text{ factor of }u}$ is recognized by the nondeterministic one-counter $B$-automaton on the right-hand side.
\begin{center}
\begin{tikzpicture}[shorten >=1pt,node distance=2.5cm,on grid,auto,initial text=,
every state/.style={inner sep=0pt,minimum size=6mm}, accepting/.style=accepting by arrow]
   	\node[state,initial,accepting] 	(q_0) {};

   \path[->] 
    (q_0) 	edge [in=60,out=120,loop] node {$a:\ic$} ()
    		edge [in=-60,out=-120,loop] node [anchor=north] {$b:\e$} ();

\begin{scope}[xshift=4cm]
  \node[state,initial]	(q_1) {};    		
  \node[state] (q_2)[right=of q_1] {};
		\node at (2,-.8) (i) {};
    \node at (3,-.8) (f) {};
   \node[state,accepting] (q_3) [right=of q_2] {};
    
  \path[->] 
    (q_1) edge [in=60,out=120,loop] node  {$a,b:\e$} ()
    		edge [right] node [below] {$b:\e$} (q_2)
		(i) edge (q_2)
    (q_2) 	edge [in=60,out=120,loop] node {$a:\ic$ } ()
    		edge [right] node [below] {$b:\r$} (q_3)
				edge (f)
    (q_3)  edge [in=60,out=120,loop] node  {$a,b:\e$} ();
    \end{scope}		
\end{tikzpicture}

\end{center}
\end{exa}

\subsection{$S$-automata}\label{sec:Saut}
The model of $S$-automaton is dual to the one $B$-automaton. The aim of this model is to mimic completation: as we cannot complement a function, we get around it by reversing the semantic of the automata defining it.

An $S$-automaton is a tuple $\perm{Q,\A,\mathit{In},\mathit{Fin},\Gamma,\Delta}$ where $Q$ is the set of states, $\A$ the alphabet, $\mathit{In}$ and $\mathit{Fin}$ the sets of initial and final states, $\Gamma$ the set of counters, and $\Delta\subseteq Q\times\A\times\set{\e,\i,\r,\ccr}^\Gamma\times Q$ is the set of transitions.

Counters have integers values starting at $0$, and an action $\sigma\in(\set{\e,\i,\r,\ccr}^*)^\Gamma$ performs a sequence of atomic actions on each counter, where atomic actions are either $\i$ (increment by $1$), $\r$ (reset to $0$), $\e$ (do nothing on the counter), or $\ccr$ (check the counter value and reset it). If $e$ is a run, let $C(e)$ be the set of values checked during $e$ on all counters of $\Gamma$. This means that this time, contrary to what happened in $B$-automata, we only put in $C(e)$ values witnessed during an operation $\ccr$. This is because we will be interested in the minimum of these values, and therefore we do not want to observe all intermediate values.

An $S$-automaton $\aut$ computes a cost function $\semS{\aut}^\approx$ via the following semantic : 
$$\semS{\aut}(u)=\sup\set{\inf C(e), e\text{ run of }\aut\text{ over }u}.$$

Notice that $\inf$ and $\sup$ have been switched, compared to the definition of the $B$-semantic. It means that the value of a run of an $S$-automaton is the minimal checked value, and the automaton tries to maximize its value among all runs.

In particular, if $\aut$ is a classical automaton for $L$, then its $S$-semantic is $\semS{\aut}=\chi_{\notL}$, where $\notL$ is the complement of $L$. This conforts the intuition that switching between $B$ and $S$-automata corresponds to complementation.

\begin{exa}\label{exaSaut}
We will redefine the two cost functions from example \ref{example:automata}, this time with $S$-automata. The first one counts the number of $a$, and guess the last letter to check the value. Notice that the exact function it computes is between $|\cdot|_a-1$ and $|\cdot|_a$, so is equivalent to $|\cdot|_a$ up to $\approx_\alpha$, with $\alpha(x)=x+1$.
The second automaton counts all blocks of $a$, and also needs guess the last letter, in order to count the last block (-1 if the last letter is $a$).
\begin{center}
\begin{tikzpicture}[shorten >=1pt,node distance=2.5cm,on grid,auto,initial text=,
every state/.style={inner sep=0pt,minimum size=6mm}, accepting/.style=accepting by arrow]
   	\node[state,initial] 	(q_0) {};
		\node[state,accepting,right=of q_0] 	(q_f) {};
   \path[->] 
    (q_0) 	edge [in=60,out=120,loop] node {$a:\i$} ()
    		edge [in=-60,out=-120,loop] node [anchor=north] {$b:\e$} ()
				edge node{$a,b:\ccr$} (q_f);

\begin{scope}[xshift=6cm]
    	\node[state,initial] 	(q_1) {};
		\node[state,accepting,right=of q_1] 	(q_2) {};
 
   \path[->] 
    (q_1) 	edge [in=60,out=120,loop] node {$a:\i$} ()
    		edge [in=-60,out=-120,loop] node [anchor=north] {$b:\ccr$} ()
				edge node{$a,b:\ccr$} (q_2);
    \end{scope}		
\end{tikzpicture}

\end{center}

\end{exa}

\begin{thm}\cite{Col09}\label{BSeq}
If $f^\approx$ is a cost function, there is a $B$-automaton for $f^\approx$ if and only if there is an $S$-automaton for $f^\approx$.  That is to say, $B$ and $S$-automata have same expressive power (up to $\approx$) in term of recognized cost functions.
\end{thm}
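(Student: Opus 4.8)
The plan is to prove both implications simultaneously by routing the two automaton models through a common algebraic recognizer, namely a \emph{stabilisation monoid} (equivalently, a stabilisation semigroup), and establishing the four implications: $f^\approx$ is $B$-recognizable $\Rightarrow$ $f^\approx$ is recognized by a stabilisation monoid; $f^\approx$ is $S$-recognizable $\Rightarrow$ $f^\approx$ is recognized by a stabilisation monoid; $f^\approx$ is recognized by a stabilisation monoid $\Rightarrow$ $f^\approx$ is $B$-recognizable; $f^\approx$ is recognized by a stabilisation monoid $\Rightarrow$ $f^\approx$ is $S$-recognizable. Chaining these gives $B$-recognizable $\Leftrightarrow$ (recognized by a stabilisation monoid) $\Leftrightarrow$ $S$-recognizable, and since every step is effective this also yields the effective version of the statement.

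For the passage from a $B$-automaton $\aut$ to a stabilisation monoid, I would take as underlying finite monoid a refinement of the usual transition monoid: an element records, for each ordered pair of states $(p,q)$, whether there is a run from $p$ to $q$ and, if so, a finite abstraction of the counter-minimal such run — for every counter, whether along it the counter is left untouched, incremented without an intervening reset, reset, or reset-then-incremented. Product is induced by concatenation, and the stabilisation operation $e\mapsto e^\sharp$ on an idempotent $e$ records what becomes of the counters when the corresponding block is iterated many times (a surviving increment becomes ``unbounded''). One then checks that, taking as accepting set the elements whose $\sharp$-closure stays bounded along all state pairs linking an initial to a final state, the monoid computes a function $\approx_\alpha \semB{\aut}$ for an explicit $\alpha$. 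The construction for an $S$-automaton is entirely symmetric, using the dual abstraction that tracks checked ($\ccr$) values instead of running maxima.

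The heart of the proof is the converse direction: from a stabilisation monoid $M$ with a recognizing morphism $h$, build a $B$-automaton and an $S$-automaton for the same cost function. Here I would invoke Simon's factorisation forest theorem: relative to $h$, every word admits a factorisation tree of depth bounded by a constant depending only on $|M|$, whose non-leaf nodes are either binary concatenation nodes or ``idempotent'' nodes with an arbitrary number of children all sent by $h$ to one idempotent $e$. The $B$-automaton nondeterministically guesses such a tree while scanning $u$ and uses its counters to count, along each branch, the idempotent nodes at which the jump from $e$ to $e^\sharp$ is actually relevant to acceptance; a run is accepting when the element computed at the root lies in the accepting ideal, and its counter value is $\approx_\alpha$-related, uniformly in $u$, to the value the monoid assigns to $u$. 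The $S$-automaton is obtained dually, guessing the same kind of tree but using $\ccr$-actions so that its $\sup$-$\inf$ semantics computes the matching minimum over branches. The main obstacle is exactly this last construction: organizing the walk over a guessed factorisation tree so that the increments, resets and checks faithfully encode ``number of relevant stabilisation nodes,'' and then verifying that Simon's bound delivers a single correction function $\alpha$ valid for all inputs; by contrast, the automaton-to-monoid directions amount to comparatively routine finite-state bookkeeping.
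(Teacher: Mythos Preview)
The paper does not prove this theorem: it is stated with a citation to \cite{Col09} and no proof is given in the present paper, so there is nothing to compare your attempt against here. That said, your outline is essentially the strategy of the cited reference: route both automaton models through stabilisation monoids, using a transition-monoid-with-counter-abstractions construction in one direction and a factorisation-tree-guided automaton in the other. The paper even sketches one half of this later (Section~\ref{sec:algo}), where it builds the stabilisation semigroup $\semi_\aut$ of an $S$-automaton from exactly the kind of counter abstractions you describe.

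One point worth sharpening: what you invoke as ``Simon's factorisation forest theorem'' is really its \emph{extension} to stabilisation semigroups (Theorem~\ref{height} in the paper), where idempotent nodes with more than $n$ children are labelled $e^\sharp$ rather than $e$. The classical Simon theorem alone gives no quantitative information; it is precisely this $n$-dependent distinction between idempotent and stabilisation nodes that lets the guessed tree carry the value of the word, and hence lets the $B$- (or $S$-) automaton recover it with its counters. Your description implicitly uses this when you speak of ``the jump from $e$ to $e^\sharp$,'' but the proof hinges on having the extended theorem, not Simon's original.
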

%

\section{Stabilization semigroups}

\subsection{Classical ordered semigroups, and regular languages}

An ordered semigroup is a tuple $\semi=\perm{S,\cdot,\leq}$, where $\cdot$ is a product $S\times S\to S$, and $\leq$ is a partial order compatible with $\cdot$, i.e. $\forall x,y,z \in S, x\leq y$ implies $z\cdot x\leq z\cdot y$ and $x\cdot z\leq y\cdot z$. We will always write $\semi$ for the whole structure, and $S$ for the underlying set.

An \intro{ideal} of $\semi$ is a set $I\subseteq S$ which is $\leq$-closed, i.e. such that for all $x\in I$ and $y\leq x$, we have $y\in I$.

We remind how a classical semigroup can recognize a regular language $L\subseteq \A^*$. The order is not necessary here.
Let $h:\A\to S$ be a function, canonically extended to a morphism $h:\A^+\to S$.
Let $P\subseteq S$ be a subset of $S$, called \intro{accepting subset}.

Then the language recognized by $\semi,h,P$ is $L=h^{-1}(P)$. It is well-known that a language is regular if and only if it can be recognized by a finite semigroup.

This section explains how to generalize this to the cost functions setting, as it was done in \cite{Col09}.

\subsection{Cost sequences}\label{subsection:suites}
The aim is to give a semantic to stabilization semigroups. Some mathematical preliminaries are required.

Let~$(E,\leq)$ be an ordered set, $\alpha$ a function from $\N$ to $\N$,
and~$\vec a,\vec b\in E^\N$ two infinite sequences. We define the relation $\preceq_\alpha$ by $\vec a$\intro{$\preceq_\alpha$}$\vec b$ if :
$$
\forall n.\forall m.\quad \alpha(n)\leq m\rightarrow \vec a(n)\leq \vec b(m)\ .
$$
A sequence $\vec a$ is said to be \intro{$\alpha$-non-decreasing} if~$\vec a\preceq_\alpha\vec a$.
We define~\intro{$\sim_\alpha$} as~$\preceq_\alpha\cap\succeq_\alpha$,
and $\vec a$\intro{$\preceq$}$\vec b$ (resp. $\vec a$\intro{$\sim$}$\vec b$) if $\vec a\preceq_\alpha\vec b$
(resp. $\vec a \sim_\alpha\vec b$) for some~$\alpha$.

\emph{Remarks:}
\begin{itemize}
\item if~$\alpha\leq\alpha'$ 
then~$\vec a\preceq_\alpha\vec b$ implies $\vec a\preceq_{\alpha'}\vec b$,
\item if~$\vec a$ is~$\alpha$-non-decreasing, then it is~$\alpha$-equivalent
to a non-decreasing sequence,
\item $\vec a$ is $\mathit{id}$-non-decreasing iff it is non-decreasing,
\item let~$\vec a,\vec b\in E^\N$ be two non-decreasing sequences, then
$\vec a\preceq_\alpha\vec b$ iff~$\vec a\circ\alpha\leq\vec b$.
\end{itemize}
The $\alpha$-non-decreasing sequences ordered by~$\preceq_\alpha$ can be seen as a weakening of the~$\alpha=\mathit{id}$ case.
We will identify the elements~$a\in E$ with the constant sequence of value~$a$.

The relations~$\preceq_\alpha$ and~$\sim_\alpha$ are not transitive, but the following property guarantees a certain kind of transitivity.
\begin{fact}\label{fact:transcost}
$\vec a\preceq_\alpha \vec b\preceq_\alpha\vec c$ implies $\vec a\preceq_{\alpha\circ\alpha}\vec c$ and
$\vec a\sim_\alpha \vec b\sim_\alpha \vec c$ implies $\vec a\sim_{\alpha\circ\alpha}\vec c$.
\end{fact}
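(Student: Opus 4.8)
The plan is to unfold the definition of $\preceq_\alpha$ twice and then re-quantify the intermediate index. Suppose $\vec a\preceq_\alpha\vec b$ and $\vec b\preceq_\alpha\vec c$. I want to show $\vec a\preceq_{\alpha\circ\alpha}\vec c$, i.e.\ for all $n$ and all $m$ with $(\alpha\circ\alpha)(n)\leq m$, we have $\vec a(n)\leq\vec c(m)$. Fix such $n$ and $m$. The idea is to pick an auxiliary index $k$ sandwiched between $\alpha(n)$ and something that still lets the second inequality fire; the obvious choice is $k=\alpha(n)$. Then $\alpha(n)\leq k$ gives $\vec a(n)\leq\vec b(k)$ from $\vec a\preceq_\alpha\vec b$, and since $k=\alpha(n)$ we have $\alpha(k)=\alpha(\alpha(n))=(\alpha\circ\alpha)(n)\leq m$, so $\vec b\preceq_\alpha\vec c$ yields $\vec b(k)\leq\vec c(m)$. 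Chaining the two inequalities gives $\vec a(n)\leq\vec c(m)$, as desired.

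One subtlety I would flag: the definition of $\preceq_\alpha$ as stated requires $\alpha$ to be the \emph{same} correction function on both sides, and correction functions are implicitly non-decreasing (or at least it is convenient to assume so, as is standard in this theory). The step $\alpha(k)\leq m$ above only used $k=\alpha(n)$ and $(\alpha\circ\alpha)(n)\leq m$ directly, so monotonicity of $\alpha$ is not even needed for that particular inequality — but if one instead chose $k$ to be an arbitrary index with $\alpha(n)\leq k$, then one would need $\alpha$ non-decreasing to conclude $\alpha(k)\leq m$ from $\alpha(\alpha(n))\leq m$. Taking $k=\alpha(n)$ sidesteps this, so the argument is clean either way; I would simply take the minimal witness $k=\alpha(n)$.

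For the second half, $\vec a\sim_\alpha\vec b\sim_\alpha\vec c$ means $\vec a\preceq_\alpha\vec b$, $\vec b\preceq_\alpha\vec a$, $\vec b\preceq_\alpha\vec c$, $\vec c\preceq_\alpha\vec b$. Applying the $\preceq$-part of the argument to $\vec a\preceq_\alpha\vec b\preceq_\alpha\vec c$ gives $\vec a\preceq_{\alpha\circ\alpha}\vec c$, and applying it to $\vec c\preceq_\alpha\vec b\preceq_\alpha\vec a$ gives $\vec c\preceq_{\alpha\circ\alpha}\vec a$; together these are exactly $\vec a\sim_{\alpha\circ\alpha}\vec c$. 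So the $\sim$ statement is a free corollary of the $\preceq$ statement and needs no separate work.

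There is essentially no main obstacle here — this is a short index-chasing lemma. The only thing to be careful about is bookkeeping with the two universally quantified indices and making sure the chosen intermediate index $k$ simultaneously satisfies $\alpha(n)\leq k$ (to trigger the first hypothesis) and $\alpha(k)\leq m$ (to trigger the second); as noted, $k=\alpha(n)$ does both precisely because the composite bound $(\alpha\circ\alpha)(n)\leq m$ was assumed. I would present it in two or three lines of prose without a displayed computation.
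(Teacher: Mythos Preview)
Your proof is correct and is exactly the natural index-chasing argument one would expect; the paper itself states this as a fact without proof, so there is nothing to compare against. Your remark on the monotonicity of $\alpha$ being unnecessary when taking $k=\alpha(n)$ is also accurate and worth keeping.
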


The function~$\alpha$ is used as a ``precision'' parameter for~$\sim$ and~$\preceq$.
Fact \ref{fact:transcost} shows that a transitivity step costs some precision.
For any~$\alpha$, the relation~$\preceq_\alpha$ coincides over constant sequences with order $\leq$ (up to identification of constant sequences with their constant value). Consequently, the infinite sequences in $E^\N$ ordered by $\preceq_\alpha$ form an extension of~$(E,\leq)$.

In the following, while using relations $\preceq_\alpha$ and~$\sim_\alpha$, we may forget the subscript $\alpha$ and verify instead that the proof has a bounded number of transitivity steps.

\begin{defi}
Let $\perm{S,\cdot,\leq}$ be an ordered semigroup and $I$ be an ideal of $S$.
\begin{itemize}
\item If $\vec a\in S^\N$ is an $\alpha$-non-decreasing sequence of elements of $S$, we note $$I[\vec a]=\inf\set{n\in\N : \vec a(n)\notin I}.$$ In other words, $I[\vec a]$ is the first position where $\vec a$ gets out of $I$.
\item If $x,y\in S$ and $m\in\N$, we define the cost sequence $x|_m y$ by $(x|_m y)(n)=\begin{cases} x\text{ if }n\leq m\\ y\text{ otherwise}\end{cases}$.
\end{itemize}
\end{defi}

\subsection{Stabilization semigroups}
The notion of stabilization semigroup is introduced in \cite{Col09}, in order to extend the classic notion of semigroups, and recognize cost functions instead of languages. If $\semi=\perm{S,\cdot}$ is a semigroup (possibly with other operations), we will note $E(\semi)$ the set of idempotent elements of $\semi$, i.e. elements $e\in S$ such that $e\cdot e=e$.

\begin{defi}
A  \intro{stabilization semigroup} $\semi=\perm{ S,\cdot,\leq,\sharp}$
is an ordered semigroup $\perm{S,\cdot,\leq}$
together with an operator~\intro{$\sharp$}$:E(\semi)\rightarrow E(\semi)$ (called \intro{stabilization}) such that:
\begin{itemize}
\item for all~$a,b\in S$with~$a\cdot b\in E(\semi)$ and $b\cdot a\in E(\semi)$, $(a\cdot b)^\sharp=a\cdot(b\cdot a)^\sharp\cdot b$;
\item for all~$e\in E(\semi)$, $(e^\sharp)^\sharp=e^\sharp\leq e$;
\item for all~$e\leq f$ in~$E(\semi)$, $e^\sharp\leq f^\sharp$;
\item if $\semi$ is a monoid, $1^\sharp=1$, we say then that $\semi$ is a \intro{stabilization monoid}
\end{itemize}
\end{defi}

In this paper, we only consider finite stabilization semigroups.
The intuition of the $\sharp$ operator is that $e^\sharp$ means "$e$ repeated many times",
which appears in the following properties, consequences of the definition above : 
$$
e^\sharp=e\cdot e^\sharp=e^\sharp\cdot e=e^\sharp\cdot e^\sharp=(e^\sharp)^\sharp\leq e
$$

\subsection{Factorization trees and compatible function}
Let $\semi=\perm{S,\cdot,\leq,\sharp}$ be a stabilization semigroup, and $u\in S^*$.
A \intro{$n$-tree} $t$ over $u$ is a $S$-labelled tree such that $u$ is the leaf word of $t$, and for each node $p$ of $t$, we are in one of these case :

\begin{description}
\item[Leaf] $p$ is a leaf,
\item[Binary : ] $p$ has only $2$ children $p_1,p_2$, and $t(p)=t(p_1)\cdot t(p_2)$,
\item[Idempotent : ] $p$ has $k$ children $p_1,\dots,p_k$ with $k\leq n$, and there is $e\in E(\semi)$ such that $t(p)=t(p_1)=\dots=t(p_k)=e$,
\item[Stabilization : ] $p$ has $k$ children $p_1,\dots,p_k$ with $k> n$, and there is $e\in E(\semi)$ such that $t(p_1)=\dots=t(p_k)=e$, and $t(p)=e^\sharp$.
\end{description}

The root of $t$ is called its \intro{value} and is noted $\val(t)$.

\begin{exa}
Let $u=abaaabbbbaaabbba$, $n\in\N$,  and $S=\set{a,b,\bot}$ with $aa=ab=a$, $bb=b^\sharp=b$, and $a^\sharp=\bot$. The following tree is an $n$-tree over $u$ :

\begin{center}
\begin{tikzpicture}[level 3/.style={sibling distance=1cm}]
         \node[draw,circle] {$v$}
                child {node[draw] {$a$}
                	child {node[] {$a$}}
                	child {node[] {$b$}}}
                child {node[] {$a$}}
                child {node[] {$a$}}
                child {node[draw] {$a$}
                    child {node[] {$a$}}
             		child {node[draw,circle] {$b$} 
             			child {node[] {$b$}}
             			child {node[] {$b$}} 
             			child {node[] {$b$}}
             			child {node[] {$b$}}
             			}
             		}
             	child {node[] {$a$}}
				child {node[] {$a$}}             	
             	child {node[draw] {$a$}
                    child {node[] {$a$}}
             		child {node[draw,circle] {$b$} child {node[] {$b$}}  child {node[] {$b$}} child {node[] {$b$}}}}
           	    child {node[] {$a$}}
                 ;

\end{tikzpicture}
\end{center}
Notice that the number of children of the root is $|u|_a=8$.
Two cases are possible :
\begin{itemize}
\item $n\leq 8$ : the root is an idempotent node, and $v=a$.
\item $n> 8$ : the root is a stabilisation node, and $v=a^\sharp=\bot$.
\end{itemize}
This gives an intuition of how these factorization trees can be used to associate a value to a word, here its number of occurences of $a$. 
\end{exa}

In the following we will establish formally how we can use factorization trees to give a semantic to stabilization semigroups.

The following theorem is the cornerstone of this process. This theorem is a deep combinatoric result and generalizes Simon's factorization forests theorem. It can be considered as a Ramsey-like theorem, because it provides the existence of big well-behaved structures (the factorization tree, and in particular the idempotent nodes) if the input word is big enough.

\begin{thm}\cite{Col09}\label{height}
For all $\semi=\perm{S,\cdot,\leq,\sharp}$, there exists $H\in\N$ such that for all $u\in S^*$ and $n\in\N$, there is a $n$-tree over $u$ of height at most $H$.
\end{thm}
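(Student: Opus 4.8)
The plan is to follow the standard strategy for Simon-type factorization forest theorems, adapting it to the stabilization setting of \cite{Col09}. I would prove the statement by induction on the structure of $\semi$, more precisely by induction on a suitable complexity measure — for a semigroup this is usually the number of $\mathcal{J}$-classes, or one can work by induction on $|S|$ directly. The key point is that the bound $H$ must be \emph{uniform in $n$}, so the induction must never branch on $n$; instead all the $n$-dependence is absorbed into the single choice at each node between an \textbf{Idempotent} node (when the number of equal-valued children is $\leq n$) and a \textbf{Stabilization} node (when it exceeds $n$), and crucially both nodes are allowed to have arbitrarily many children.

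The main steps I would carry out are as follows. First, reduce to the case where $u$ is a word all of whose letters lie in a single $\mathcal{J}$-class $J$ (or handle the general word by splitting it along the $\mathcal{J}$-order, which contributes only a bounded number — at most $|S|$ — of binary nesting levels, since the $\mathcal{J}$-depth is bounded). Second, inside a fixed $\mathcal{J}$-class, use the classical Ramsey/Simon argument: group the positions into blocks so that consecutive partial products are all equal to a fixed idempotent $e\in E(\semi)$; this is where Simon's combinatorial lemma (or the Ramsey argument on the set of prefixes) is invoked, and it costs a bounded number of levels depending only on $|S|$. Third — and this is the new ingredient compared to the language case — once we have a long list of consecutive factors all evaluating to the idempotent $e$, we form a single node with all of them as children: if there are $\leq n$ of them it is an \textbf{Idempotent} node with value $e$, and if there are $>n$ it is a \textbf{Stabilization} node with value $e^\sharp$. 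Either way this is \emph{one} level, independent of $n$ and of how many children there are. The subtrees hanging below are factorization trees of strictly simpler pieces (shorter $\mathcal{J}$-class data, or the ``connecting'' factors that drop to a lower $\mathcal{J}$-class), to which the induction hypothesis applies, yielding a bound $H'$ for them; then $H \leq H' + c$ for a constant $c = c(|S|)$ coming from the binary regrouping in steps one and two.

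I expect the main obstacle to be the bookkeeping in step two — making the Ramsey-style regrouping produce, in a bounded number of binary levels, a decomposition of the word into (a) a bounded prefix/suffix, (b) a long central part that is a concatenation of consecutive factors all equal to one idempotent $e$, and (c) glue factors that are $\mathcal{J}$-below the current class so that the induction hypothesis can be applied to them. One has to be careful that the ``long central part'' can itself be fed recursively into the idempotent/stabilization node construction and that the pieces between the $e$-blocks genuinely drop in $\mathcal{J}$-rank, so that the recursion terminates with a bound independent of $u$. A secondary subtlety is verifying that the resulting tree really satisfies the node conditions of the definition — in particular that at a \textbf{Stabilization} node all children carry the same idempotent $e$ and the parent carries exactly $e^\sharp$ — but this follows directly once the $e$-block decomposition is in place. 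Since this theorem is quoted from \cite{Col09}, I would in the paper simply cite it; the sketch above is how one would reconstruct the proof if needed.
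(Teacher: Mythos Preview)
The paper does not give a proof of this theorem: it is stated with attribution to \cite{Col09} and used as a black box (the surrounding text calls it ``a deep combinatoric result [that] generalizes Simon's factorization forests theorem''). You correctly recognize this and say you would simply cite it, which is exactly what the paper does.

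Your sketch of how one would reconstruct the proof is sound in outline and matches the known strategy from \cite{Col09}: induction along the $\mathcal{J}$-order, Simon/Ramsey-style regrouping inside a $\mathcal{J}$-class to expose a common idempotent $e$, and then collapsing the resulting list of $e$-valued blocks into a single Idempotent or Stabilization node depending on whether the count is $\leq n$ or $>n$, so that the height bound is uniform in $n$. One small correction of emphasis: in the actual argument the ``glue'' factors between $e$-blocks need not drop to a strictly lower $\mathcal{J}$-class; rather, the induction inside a fixed $\mathcal{J}$-class is on the number of $\mathcal{H}$-classes (or $\mathcal{R}$-classes) touched, in the spirit of Simon's original proof. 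But this is a detail of the bookkeeping you flagged as the main obstacle, not a gap in the overall plan.
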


This allows us to define $\rho:S^+\to\N \to S$ by $$\rho(u)(n)=\min\set{\val(t) : t\text{ is an $n$-tree over $u$ of height at most $H$}}.$$
The function $\rho$ is called \intro{compatible} with $\semi$. It depends on $H$ so there may be several compatible functions, however we will see that they are equivalent in some sense.

If $\rho$ is a function $S^+\to\N \to S$, we associate to it a function $\tilde\rho:((S^+)^{\N})\to\N\to S$ by 
$\tilde\rho(\vec u)(n):=\rho({\vec u}(n))(n)$. We will also identify elements of $(S^{\N})^+$ with their canonic image in $(S^+)^{\N}$ (i.e. view a word of sequences as a sequence of words of same length).

\begin{thm}\cite{Col09}\label{compaxioms}
If $\rho$ is a compatible function of~$\semi$, then there exists~$\alpha$ such that :
\begin{description}
\item[Letter.] for all $a\in S,n\in\N$, $\rho(a)(n)=a$,
\item[Product.] for all $a,b\in S$, $\rho(ab)\sim_\alpha a\cdot b$,
\item[Stabilization.] for all~$e\in E(\semi)$, $m\in\N$, 
	$\rho(e^m)\sim_\alpha(e^\sharp|_me)$,
\item[Substitution.] for all~$u_1,\dots, u_n\in S^+$, $n\in\N$, $\rho(u_1\dots u_n)\sim_\alpha\tilde\rho(\rho(u_1)\dots\rho(u_n))$ (we identify sequence of words and word of sequences)
\end{description}
\end{thm}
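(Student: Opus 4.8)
The plan is to prove the four properties by exhibiting, for each one, a pair of $n$-trees of bounded height whose values realize the two sides of the claimed $\sim_\alpha$, and then to take $\alpha$ to be the maximum of the finitely many correction functions so obtained (each of which is a constant shift, since the trees involved have bounded height and $S$ is finite). Concretely, I would first dispatch \textbf{Letter}, which is immediate: the only $n$-tree over a one-letter word $a$ is the single leaf, so $\rho(a)(n)=a$ for all $n$. For \textbf{Product}, given $a,b\in S$, one $n$-tree over $ab$ is the binary node with leaves $a,b$ and value $a\cdot b$; conversely any $n$-tree over $ab$ of height at most $H$ has value $\leq a\cdot b$ by minimality, but also, analyzing the few possible shapes of such a small tree (it must ultimately combine the two leaves), its value is $\geq a\cdot b$ up to monotonicity and the idempotent/stabilization axioms, so in fact $\rho(ab)(n)=a\cdot b$ exactly; hence $\rho(ab)\sim_{\mathit{id}}a\cdot b$.

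For \textbf{Stabilization}, fix $e\in E(\semi)$ and $m\in\N$, and consider $\rho(e^m)(n)$. When $n\geq m$, the word $e^m$ admits an $n$-tree that is a single idempotent node with $m$ children, of value $e$; when $n<m$, we can build an $n$-tree that is a single stabilization node (taking $m>n$ children all equal to $e$), of value $e^\sharp$; since the automaton takes the minimum over small trees and $e^\sharp\leq e$, in the case $n<m$ the value is $e^\sharp$, and in the case $n\geq m$ it is $e$ — the subtlety being to check via the definition of $n$-tree that no tree of height $\leq H$ over $e^m$ with $n\geq m$ can reach a stabilization node at value $e^\sharp$, which follows because a stabilization node needs strictly more than $n\geq m$ children all labeled $e$, impossible for a tree whose leaf word has only $m$ letters unless intermediate idempotent nodes are inserted, and tracking that these do not lower the value below $e$. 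Comparing with $(e^\sharp|_m e)(n)$, which is $e^\sharp$ for $n\leq m$ and $e$ for $n>m$, the two sequences differ only by an off-by-one at the threshold, so $\rho(e^m)\sim_\alpha(e^\sharp|_m e)$ for $\alpha(x)=x+1$.

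The \textbf{Substitution} property is the main obstacle. Here I would argue in two directions. For $\rho(u_1\cdots u_n)(n)\leq$ the other side: given optimal $n$-trees $t_i$ over $u_i$ of height $\leq H$, graft them under a single $n$-tree over the word $\rho(u_1)(n)\cdots\rho(u_n)(n)\in S^+$ of length $n$ and height $\leq H$; this composite tree is a legitimate $n$-tree over $u_1\cdots u_n$ of height $\leq 2H$, but Theorem~\ref{height} only guarantees small trees of height $\leq H$, so I must invoke a ``height reduction'' argument (again from \cite{Col09}, or a direct re-balancing) to replace it by one of height $\leq H$ at the cost of a bounded correction — this is where a nontrivial constant enters $\alpha$. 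For the reverse direction, given an optimal $n$-tree $t$ over $u_1\cdots u_n$, I would cut it along the boundaries of the factors $u_i$; each boundary crosses $t$ at a bounded number of nodes (bounded in terms of $H$), so each $u_i$ is split by $t$ into a bounded number of ``fragments'' whose values multiply to something $\geq \rho(u_i)(n)$ up to a bounded number of applications of monotonicity and the axioms, and reassembling the fragment-values across all $i$ yields an $n$-tree over a word of length $n$ witnessing $\tilde\rho(\rho(u_1)\cdots\rho(u_n))(n)$ — again up to a bounded correction and the identification of $(S^{\N})^+$ with $(S^+)^{\N}$. Since every step uses only finitely many transitivity steps with bounded-height trees over a finite $S$, Fact~\ref{fact:transcost} keeps the accumulated correction functions bounded, and taking $\alpha$ to be their common upper bound finishes the proof.
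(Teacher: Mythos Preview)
The paper does not contain a proof of this theorem: it is stated with attribution to \cite{Col09} and used as a black box throughout. There is therefore no ``paper's own proof'' to compare your attempt against.

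That said, a few remarks on your sketch itself. Your claim in the \textbf{Product} case that $\rho(ab)(n)=a\cdot b$ \emph{exactly} is false: take $a=b=e$ an idempotent with $e^\sharp<e$; for $n\leq 1$ the two leaves $e,e$ support a stabilization node (since $2>n$) of value $e^\sharp$, so $\rho(ee)(1)=e^\sharp\neq e=a\cdot b$. The $\sim_\alpha$ conclusion survives with $\alpha(x)=x+2$, but not $\sim_{\mathit{id}}$. A similar off-by-a-constant care is needed in \textbf{Stabilization}; your argument that for $n\geq m$ no tree over $e^m$ can produce $e^\sharp$ is stated but not really justified, and in fact the correct statement is again only up to a small additive shift.

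The genuine gap is \textbf{Substitution}. You correctly identify that grafting yields a tree of height $\leq 2H$ rather than $\leq H$, and then defer to a ``height reduction'' lemma. But that step is exactly the nontrivial content of the result in \cite{Col09}: one must show that the value computed with bound $H$ and the value computed with bound $2H$ are $\sim_\alpha$-equivalent for some $\alpha$ depending only on $|S|$ and $H$, which is essentially a quantitative refinement of the factorization-forest theorem itself. Similarly, your reverse direction (``cut along the boundaries of the $u_i$'') does not produce $n$-trees over the $u_i$ in any direct way, since a single boundary can cross arbitrarily many branches of $t$; turning those fragments back into bounded-height factorization trees over the $u_i$ again requires the full strength of Theorem~\ref{height} and its quantitative analysis. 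So your outline has the right architecture, but both directions of Substitution bottom out in the hard combinatorics that the paper deliberately imports from \cite{Col09} rather than reproves.
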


\begin{exa}\label{example:compatible-ex}
Let~$\semi$ be the stabilization semigroup with $3$ elements $\bot\leq a\leq b$,
with product defined by :~$x\cdot y=\min_\leq(x,y)$ ($b$ neutral element),
and stabilization by~$b^\sharp=b$ and~$a^\sharp=\bot^\sharp=\bot$.
Let~$u\in\{\bot,a,b\}^+$, we define $\rho$ by:
$$
\rho(u)=\begin{cases}
			b&\text{if}~u\in b^+\\
			\bot|{|u|_a}a&\text{if}~u\in b^*(ab^*)^+\\
			\bot&\text{otherwise.}
			\end{cases}
$$
Then~$\rho$ is compatible with~$\semi$.
This is proved by building a factorization tree of height $3$, with idempotent (or stabilisation) $b$-nodes at level $3$, binary nodes $a=a\cdot b$ at level $2$, and one idempotent/stabilisation node at level $1$.
\end{exa}

%

\subsection{Recognized cost functions}
We now have all the mathematical tools to define how stabilization semigroups can recognize cost functions.

Let $\semi=\langle S,\cdot,\leq,\sharp\rangle$ be a stabilization semigroup.
Let $h:\A\rightarrow S$ be a morphism, canonically extended to $h:\A^+ \rightarrow S^+$,
and~$I\subseteq S$ an ideal. let $\rho$ be a compatible function associated with $\semi,h$.
We say that the quadruple $\semi,h,I,\rho$ \intro{recognizes} the function $f:\A^+\rightarrow\Ninfty$ defined by 
$$f(u)=I[\rho(h(u))]= \inf\set{n\in\N : \rho(h(u))(n)\notin I}.$$

We say that $I$ is the \intro{accepting ideal} of $\semi$, it generalizes the accepting subset $P$ used in the classical setting.

Indeed, if $\semi,h,P$ is a classical semigroup recognizing $L\subseteq\A^+$ with an accepting subset $P$, we can take $\rho$ to be the normal product $\pi:S^+\to S$, $\sharp$ to be the identify function on idempotents, and $I$ to be the complement of $P$. Then $\semi,h,I,\pi$ recognizes $\chi_L$.

\begin{thm}\label{unirho}\cite{Col09}
If $\rho'$ satifies all the properties given in Theorem \ref{compaxioms}, then $\rho'\sim\rho$. In other words, $\rho$ is unique up to $\sim$ (and in particular the choice of $H$ is not important).
Moreover, if $\semi,h,I$ is given, and $\rho\sim\rho'$ are two compatible functions for $\semi$, then the functions defined by $\semi,h,I$ relatively to $\rho$ and $\rho'$ are equivalent up to $\approx$. This allows us to uniquely define the cost function $F=f^\approx$ recognized by the triplet $\semi,h,I$, without ambiguity.
\end{thm}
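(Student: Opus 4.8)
The plan is to prove the single stronger statement that there is a correction function $\beta$ --- depending only on the height bound $H$ of Theorem~\ref{height} and on the correction function $\alpha$ of Theorem~\ref{compaxioms} --- such that \emph{any} $\sigma\colon S^+\to\N\to S$ satisfying the four properties \textbf{Letter}, \textbf{Product}, \textbf{Stabilization}, \textbf{Substitution} (with parameter $\alpha$; replacing the correction functions of $\rho$ and of $\sigma$ by their pointwise maximum we may assume one common $\alpha$) is $\sim_\beta$-equivalent to the canonical compatible function $\rho$ given by $\rho(u)(n)=\min\set{\val(t): t\ \text{an }n\text{-tree over }u\ \text{of height}\le H}$. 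This implies all assertions at once: taking $\sigma:=\rho'$ gives $\rho'\sim_\beta\rho$, i.e. uniqueness up to $\sim$; applying it to the two ``$\min$-over-trees'' functions obtained from two different choices of $H$ (each satisfies the axioms by Theorem~\ref{compaxioms}) shows $H$ is irrelevant; and the ``moreover'' part then reduces to a short computation (third paragraph below). The crucial point is that $\beta$ is \emph{uniform} in $u$.

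For the core statement, fix $\sigma$, a word $u$ and an index $n$. By Theorem~\ref{height} there is an $n$-tree over $u$ of height $\le H$, and using it I would prove, by induction on its height, that $\sigma(u)(n)$ is $\sim$-comparable with the tree value, each level costing only a bounded number of compositions with $\alpha$, so that after $\le H$ levels the accumulated distortion is bounded by $\beta=\alpha^{(cH)}$ for an absolute constant $c$ (this is precisely Fact~\ref{fact:transcost}: a transitivity step costs some precision, but only $O(H)$ such steps occur). The induction step mirrors the axioms node by node: a \emph{leaf} labelled $a$ is handled verbatim by \textbf{Letter}; a \emph{binary} node by \textbf{Substitution} followed by \textbf{Product}, using compatibility of $\cdot$ with $\le$; an \emph{idempotent} node (with $k\le n$ children of common value $e$, node value $e$) and a \emph{stabilisation} node (with $k>n$ children, node value $e^\sharp$) are both treated by \textbf{Substitution} followed by \textbf{Stabilization}: after substitution one recognises the block as (a sequence $\sim$) $e^k$ --- here the induction hypothesis on the children, together with a monotonicity property of $\sigma$ in its word argument that is itself derivable from the axioms, is used --- and \textbf{Stabilization} replaces $\sigma(e^k)$ by the cost sequence $e^\sharp|_k e$; one finally checks that, since the running index stays inside an $\alpha^{(O(H))}$-controlled band around $n$ and since $k\le n$ (resp. $k>n$), this cost sequence is read at a position where its value is $e$ (resp. $e^\sharp$), i.e. exactly the value of that node. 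To close the comparison on the side where it must pass through the \emph{minimum} defining $\rho$, one runs the same induction on a tree realising that minimum; monotonicity of the product (resp. of $\sigma$) makes the per-node estimates chain up to $\val(t)=\rho(u)(n)$.

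For the ``moreover'' part, write $\vec a=\rho(h(u))$ and $\vec b=\rho'(h(u))$, so that $f(u)=I[\vec a]$ and $f'(u)=I[\vec b]$, with $\vec a\sim_\beta\vec b$ for the \emph{uniform} $\beta$ above. If $f'(u)=I[\vec b]=n_0$ is finite, then $\vec b(n_0)\notin I$, and $\vec b\preceq_\beta\vec a$ gives $\vec b(n_0)\le\vec a(m)$ for every $m\ge\beta(n_0)$; since $I$ is an ideal, hence downward closed, this forces $\vec a(m)\notin I$, so $I[\vec a]\le\beta(n_0)$, that is $f\le_\beta f'$ (the case $f'(u)=\infty$ is the same argument quantified over all $n$, giving $f(u)=\infty$ as well). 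By symmetry $f'\le_\beta f$, hence $f\approx_\beta f'$ and, by the first lemma of the paper, $f^\approx=f'^\approx$. Thus $\semi,h,I$ determines a well-defined cost function $F$, independent of the chosen compatible function.

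The main obstacle is the core induction, and it is entirely a matter of uniformity and index bookkeeping: one must maintain a \emph{single} correction function $\beta$ valid for every word $u$, which is possible only because the factorisation tree has height uniformly bounded by $H$, so the recursive unfolding of $\sigma(u)$ through the axioms has depth $\le H$ and incurs only $O(H)$ transitivity steps; and within the induction one must track the running index precisely enough that the cost sequences $e^\sharp|_k e$ coming out of \textbf{Stabilization} are always evaluated on the correct side of their threshold $k$ --- the only point at which the dichotomy ``$k\le n$ versus $k>n$'' is used, and the reason the two inequalities carry the asymmetric hypothesis ``index $\ge\beta(n)$''. The supporting fact that $\sigma$ is monotone in its word argument (needed to collapse the substituted block to $e^k$ at idempotent and stabilisation nodes) is the one auxiliary lemma that must be established beforehand.
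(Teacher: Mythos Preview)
The paper does not prove Theorem~\ref{unirho}: it is stated with the citation~\cite{Col09} and no argument is given. So there is no in-paper proof to compare against; your proposal must be assessed on its own.

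Your strategy---induction on the height of a bounded-height factorisation tree, dispatching each node type with the matching axiom, and using Fact~\ref{fact:transcost} to bound the accumulated precision loss by $\alpha^{O(H)}$---is exactly the approach taken in~\cite{Col09}, and your ``moreover'' paragraph (passing from $\rho\sim_\beta\rho'$ to $f\approx_\beta f'$ via downward closure of the ideal $I$) is correct as written.

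One point deserves more care than your sketch gives it. You phrase the induction as ``$\sigma(u)(n)$ is $\sim$-comparable with the tree value'', but the axioms are relations between \emph{sequences}, and the tree is an $n$-tree for one fixed $n$. The clean formulation splits into two one-sided claims: for every $n$-tree $t$ of height $h$ over $u$, (i)~$\val(t)\le\sigma(u)(m)$ whenever $m\ge\alpha^{(ch)}(n)$, and (ii)~$\sigma(u)(m')\le\val(t)$ whenever $\alpha^{(ch)}(m')\le n$. Direction~(i), applied to any height-$\le H$ tree provided by Theorem~\ref{height}, yields $\rho\preceq_\beta\sigma$ since $\rho(u)(n)$ is a minimum over such trees; direction~(ii), applied to a tree realising that minimum, yields $\sigma\preceq_\beta\rho$. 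The inductive step at an idempotent/stabilisation node is where the threshold $k$ (the number of children) interacts with the running index, and one must check that after the $\alpha$-shifts introduced by \textbf{Substitution} and \textbf{Stabilization} the cost sequence $e^\sharp|_k e$ is still evaluated on the correct side of $k$---this is precisely your ``index bookkeeping'', and it goes through because at each level the shift is bounded by a fixed power of $\alpha$ while the dichotomy $k\le n$ versus $k>n$ is relative to the \emph{original} $n$. The auxiliary monotonicity of $\sigma$ that you flag is indeed needed to pass from the children's induction hypotheses (which bound $\sigma(u_i)$ by the constant sequence $e$) to a bound on $\tilde\sigma(\sigma(u_1)\dots\sigma(u_k))$ by $\tilde\sigma(e\dots e)=\sigma(e^k)$; it follows from \textbf{Substitution} together with compatibility of $\cdot$ with $\le$, but should be stated and proved explicitly in a full write-up.
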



\begin{exa}\label{recognizable-ex}
Let~$\A=\{a,b\}$, the cost function $|\cdot|_a^\approx$ is recognizable.
We take the stabilization semigroup from Example \ref{example:compatible-ex},
$h$ defined by~$h(a)=a,h(b)=b$, and~$I=\{\bot\}$.
We have then $|u|_a=\inf\set{n\in\N : \rho(h(u))(n)\neq\bot}$ for all~$u\in \A^+$. 
\end{exa} 

The following theorem links cost automata with stabilization semigroups, and allows us to define the class of regular cost functions.

\begin{thm}\cite{Col09}
Let $F$ be a cost function, the following assertions are equivalent:
\begin{itemize}
\item $F$ is recognized by a $B$-automaton,
\item $F$ is recognized by an $S$-automaton,
\item $F$ is recognized by a finite stabilization semigroup.
\end{itemize}
Such a cost function $F$ will be called \intro{regular} by generalization of this notion from language theory.
\end{thm}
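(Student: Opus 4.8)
The plan is to establish the three equivalences by exhibiting back-and-forth translations between $B$-automata and finite stabilization semigroups; the $S$-automaton case then comes for free, since $B$- and $S$-automata already have the same expressive power by Theorem~\ref{BSeq}. So I concentrate on ``$B$-automaton $\Rightarrow$ stabilization semigroup'' and ``stabilization semigroup $\Rightarrow$ $B$-automaton''.

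\textbf{From a stabilization semigroup to a $B$-automaton.} Suppose $\semi=\langle S,\cdot,\leq,\sharp\rangle$, $h$, $I$ and a compatible function $\rho$ recognize $f$. By Theorem~\ref{height} there is a uniform bound $H$ on the height of $n$-trees. I would build a $B$-automaton that, reading $u\in\A^+$, nondeterministically guesses a factorization tree over $h(u)$ of height at most $H$, together with a marking of each repetition node as ``idempotent'' or ``stabilization'', and evaluates it bottom-up. Since $S$ and $H$ are finite, a control state can record the finite stack of at most $H$ currently open nodes with their partial products; a bounded family of counters counts, for each open ``idempotent''-marked node, how many equal children have been read, resetting when the node is closed. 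The automaton accepts iff the value $\val(t)$ of the guessed tree lies outside $I$. The point is that the $B$-semantics of Section~\ref{sec:aut}, an infimum over runs of the supremum of counter values, reads off exactly the least threshold $n$ for which such a tree exists, which by the axioms of Theorem~\ref{compaxioms} and the uniqueness of Theorem~\ref{unirho} equals $I[\rho(h(u))]$ up to a correction function absorbing both the slack of $\approx$ and the bound $H$.

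\textbf{From a $B$-automaton to a stabilization semigroup.} Given $\aut=\langle Q,\A,\In,\Fin,\Gamma,\Delta\rangle$, take $S$ to be the set of $Q\times Q$ matrices whose $(p,q)$ entry is either $\bot$ (no run from $p$ to $q$) or, for each counter $\gamma\in\Gamma$, one of finitely many symbols recording a ``bounded/unbounded'' flag together with the relevant finite reset/increment pattern of $\gamma$ along such a run (which increments survive resets). The product is composition of runs (a matrix product), the order puts ``more unbounded'' below ``more bounded'', and the stabilization $e^\sharp$ of an idempotent $e$ is obtained from $e$ by flagging a counter ``unbounded'' exactly when $e$ increments it without ever resetting it inside the loop. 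Let $h(a)$ be the one-letter profile of $a$, and let $I$ be the ideal of those matrices admitting no accepting run ($\In$ to $\Fin$) all of whose counters are flagged ``bounded''. One then checks that $\langle S,\cdot,\leq,\sharp\rangle$ is a finite stabilization semigroup and that $\semi,h,I$ recognizes $\semB{\aut}^\approx$: the first threshold $n$ at which $\rho(h(u))(n)$ escapes $I$ is, up to a correction function, the least maximal counter value over accepting runs of $\aut$ on $u$.

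\textbf{Expected main obstacle.} Checking finiteness and the stabilization-semigroup axioms ($(ab)^\sharp=a(ba)^\sharp b$, $(e^\sharp)^\sharp=e^\sharp\leq e$, monotonicity of $\sharp$) is routine bookkeeping about loops and counters. The genuine difficulty, in both directions, is proving that the two semantics agree \emph{up to} $\approx$, i.e. exhibiting the correction function. This is precisely where Theorem~\ref{height} is indispensable: it guarantees that every word admits a witnessing structure of bounded height, so that the translation between ``a counter reaches value $n$ along a minimal run'' and ``a stabilization node appears once $n$ equal children have accumulated'' loses only a bounded amount of precision, each transitivity step in $\sim_\alpha$ degrading $\alpha$ in a controlled way, as recorded by Fact~\ref{fact:transcost}. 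Matching the inf/sup alternation of the $B$-semantics with the first-exit value $I[\cdot]$, uniformly in the input, is the combinatorial core of the proof.
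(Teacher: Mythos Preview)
The paper does not prove this theorem at all: it is simply quoted from~\cite{Col09} with no accompanying argument. So there is nothing in the paper to compare your proposal against. Your sketch is a reasonable outline of how the proof in~\cite{Col09} actually proceeds (transition-semigroup with counter profiles in one direction, on-the-fly factorization trees via Theorem~\ref{height} in the other), and indeed a version of your ``$B$-automaton $\Rightarrow$ stabilization semigroup'' construction appears later in this very paper, in Section~\ref{sec:algo}, where the stabilization semigroup $\semi_\aut$ of an $S$-automaton is built from downward-closed subsets of $Q\times S^\Gamma\times Q$; but that is used there for the PSPACE correctness argument, not as a proof of the present statement.

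If you intend to flesh this out into a self-contained proof, the main place your sketch is thin is the ``semigroup $\Rightarrow$ $B$-automaton'' direction: guessing a factorization tree of height at most $H$ on the fly with a finite control is fine, but you should be explicit that the counters are attached to the \emph{levels} of the tree (at most $H$ of them), not to individual open nodes, and that only idempotent-marked nodes are counted. The delicate point you correctly identify---that the correction function absorbing the slack comes from the bounded number of $\sim_\alpha$ transitivity steps, itself bounded by $H$---is exactly the content of the combination of Theorem~\ref{height}, Theorem~\ref{compaxioms}, and Fact~\ref{fact:transcost}.
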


Notice that if $L\subseteq\A^+$ is a language, then $\chi_L^\approx$ is a regular cost function if and only if $L$ is a regular language. This shows that the notion of regularity for cost function is a proper extension of the one from language theory. That is to say, restricting cost functions theory to [regular] cost functions of the form $\chi_L^\approx$, one exactly gets [regular] language theory.

\section{Quantitative LTL}\label{ltl}

We will now use an extension of LTL to describe some regular cost functions. This has been done successfully with regular languages, so we aim to obtain the same kind of results. Can we still go efficiently from an LTL-formula to an automaton? 

\subsection{Definition}
The first thing to do is to extend LTL so that it can decribe cost functions instead of languages. We must add quantitative features, and this will be done by a new operator $\UN$, required to appear positively in the formula. Unlike in most uses of LTL, we work here over finite words. This is to avoid additional technical considerations due to new formalisms suited to infinite words, which would make all the proofs heavier without adding any new ideas. 

Formulas of $\ltlq$ (on finite words on an alphabet $\A$) are defined by the following grammar : 
$$\varphi:=a~|~\varphi\wedge\varphi~|~\varphi\vee\varphi~|~X\varphi~|~\varphi U\varphi~|~\varphi\UN\varphi~|~\Omega$$

Where $N$ is a unique free variable, common for all occurences of $\UN$ operator. This is in the same spirit as in \cite{PromptLTL}, where the bound is global for all the formula.

\begin{itemize}
\item $a$ means that the current letter is $a$, $\wedge$ and $\vee$ are the classical conjunction and disjunction;
\item $X\varphi$ means that $\varphi$ is true at the next letter;
\item $\varphi U\psi$  means that $\psi$ is true somewhere in the future, and $\varphi$ holds until that point;
\item $\varphi\UN\psi$ means that $\psi$ is true somewhere in the future, and $\varphi$ can be false at most $N$ times before $\psi$. 
\item $\Omega$ means that we are at the end of the word.
\end{itemize}

Notice the absence of negation in the syntax of $\ltlq$. However, we can still consider that $\ltlq$ generalizes classical LTL (with negation), because an LTL formula can be turned into an $\ltlq$ formula by pushing negations to the leaves. That is why we heed operators in dual forms in the syntax.
Remark that we do not need a dual operator for $U$, because we can use $\Omega$ to negate it: $\neg (\varphi U\psi)\equiv \neg\psi U (\neg\varphi\vee\Omega)$.
Moreover we can also express negations of atomic letters: for all $a\in\A$ we can define $\neg a=(\bigvee_{b\neq a} b)\vee\Omega$ to signify that the current letter is not $a$. 

We can then choose any particular $a\in\A$, and define $\top=a\vee\neg a$ and $\bot=a\wedge\neg a$, meaning respectively true and false.

We also define connectors ``eventually'' : $F\varphi=\top U \varphi$ and ``globally'' : $G\varphi=\varphi U \Omega$.


\subsection{Semantics}
We want to associate a function $\sem{\varphi}$ to any $\ltlq$-formula $\varphi$. As usual, we will often be more interested in the cost function $\sem{\varphi}^\approx$ recognized by $\varphi$.

We will say that $(u,n)\models\varphi$ ($u,n$ is a model of $\varphi$) if $\varphi$ is true on $u$ with $n$ as valuation for $N$, i.e. as number of errors for all the $\UN$'s in the formula $\varphi$. We finally define
$$\sem{\varphi}(u)=\inf\set{n\in\N/ (u,n)\models\varphi}$$

We can remark that if $(u,n)\models\varphi$, then for all $k\geq n, (u,k)\models\varphi$, since the $\UN$ operators appear always positively in the formula.


\begin{prop}~
\begin{itemize}
\item $\sem{a}(u)=0$ if $u\in a\A^*$, and $\infty$ otherwise
\item $\sem\Omega(u)=0$ if $u=\varepsilon$, and $\infty$ otherwise
\item $\sem{\varphi\wedge\psi}=\max(\sem\varphi,\sem\psi)$, and $\sem{\varphi\vee\psi}=\min(\sem\varphi,\sem\psi)$
\item $\sem{X\varphi}(au)=\sem\varphi(u)$, $\sem{X\varphi}(\varepsilon)=\infty$
\item $\sem\top=0$, and $\sem\bot=\infty$
\end{itemize}
\end{prop}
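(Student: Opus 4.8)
The plan is to prove each clause by directly unfolding the definition $\sem{\varphi}(u)=\inf\set{n\in\N : (u,n)\models\varphi}$, using the two conventions $\inf\emptyset=\infty$ and $\inf\N=0$, together with the monotonicity remark stated just above: since every $\UN$ occurs positively, if $(u,n)\models\varphi$ then $(u,k)\models\varphi$ for all $k\geq n$, so the set $\set{n\in\N : (u,n)\models\varphi}$ is always either empty or an up-set of the form $\set{n\in\N : n\geq m}$ for some $m=\sem{\varphi}(u)$.

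For the atomic cases, I would observe that neither $a$ nor $\Omega$ contains a $\UN$, so satisfaction does not depend on $n$: $(u,n)\models a$ iff $u\in a\A^*$, and $(u,n)\models\Omega$ iff $u=\varepsilon$. Hence the relevant set of valuations is either all of $\N$, with $\inf$ equal to $0$, or empty, with $\inf$ equal to $\infty$, giving the two stated formulas. The $X$ case is handled the same way: $(au,n)\models X\varphi$ iff $(u,n)\models\varphi$, so the sets of valuations coincide and $\sem{X\varphi}(au)=\sem{\varphi}(u)$; and no word of the form $\varepsilon$ satisfies $X\varphi$, so $\sem{X\varphi}(\varepsilon)=\inf\emptyset=\infty$.

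For the Boolean cases I would use the monotonicity remark: write $m=\sem{\varphi}(u)$ and $m'=\sem{\psi}(u)$ (possibly $\infty$). Then $(u,n)\models\varphi\wedge\psi$ iff $n\geq m$ and $n\geq m'$, i.e.\ iff $n\geq\max(m,m')$, so $\sem{\varphi\wedge\psi}(u)=\max(m,m')$; dually $(u,n)\models\varphi\vee\psi$ iff $n\geq m$ or $n\geq m'$, i.e.\ iff $n\geq\min(m,m')$, giving $\sem{\varphi\vee\psi}(u)=\min(m,m')$. The identities for $\top=a\vee\neg a$ and $\bot=a\wedge\neg a$ then follow from the previous clauses and the fact that on any word $u$, exactly one of $\sem{a}(u)$ and $\sem{\neg a}(u)$ equals $0$ and the other equals $\infty$ (using $\neg a=(\bigvee_{b\neq a}b)\vee\Omega$): the $\min$ is always $0$ and the $\max$ is always $\infty$.

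There is no real obstacle here; the statement is a routine unfolding of the semantics. The only point that deserves care is making explicit that the monotonicity of $\models$ in the parameter $n$ (itself an immediate consequence of the positivity of $\UN$) is what turns the set-theoretic intersection and union in the $\wedge$/$\vee$ clauses into the clean $\max$ and $\min$ on the infima, and that the empty-set and full-set conventions for $\inf$ are being used in the atomic and $X$ clauses.
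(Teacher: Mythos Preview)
Your proof is correct; the paper states this proposition without proof, treating it as an immediate unfolding of the definition $\sem{\varphi}(u)=\inf\{n : (u,n)\models\varphi\}$, which is exactly what you do. One minor sharpening: monotonicity in $n$ is genuinely needed for the $\wedge$ clause (so that the intersection of the two up-sets is again an up-set with infimum $\max$), but for the $\vee$ clause the identity $\inf(A\cup B)=\min(\inf A,\inf B)$ holds for arbitrary subsets of $\N$, so no extra hypothesis is required there.
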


\begin{exa}
Let $\varphi=(\neg a)\UN\Omega$, then $\sem{\varphi}=|\cdot|_a$
\end{exa}

We use $\ltlq$-formulae in order to describe cost functions, so we will often work modulo cost function equivalence $\approx$. However, we will sometimes be interested in the exact function $\sem{\varphi}$ described by $\varphi$.

\begin{rem}
If $\varphi$ does not contain any operator $\UN$, $\varphi$ is a classical LTL-formula computing a language $L$, and $\sem{\varphi}=\chi_L$.
\end{rem}

\section{From $\ltlq$ to $B$-Automata}\label{sec:ltlBaut}
\newcommand{\sub}{\mathrm{sub}}

\subsection{Description of the automaton}\label{descB}

We will now give a direct translation from $\ltlq$-formulae to $B$-automata, i.e. given an $\ltlq$-formula $\phi$ on a finite alphabet $\A$, we want to build a $B$-automaton recognizing $\sph^\approx$. We will also show that a slight change in the model of $B$-automaton (namely allowing sequences of counter actions on transitions) allows us to design a $B$-automaton $\aut_\phi$ with $\sem{\aut_\phi}=\sem{\phi}$: the functions recognized by $\aut_\phi$ and $\phi$ are equal and not just equivalent up to $\approx$.
This construction is adapted from the classic translation from LTL-formula to B\"uchi automata \cite{DG}.
\bigskip

\newcommand{\YF}{\emptyset}

Let $\phi$ be an $\ltlq$-formula.
We define $\sub(\phi)$ to be the set of subformulae of $\phi$, and $Q=2^{\sub(\phi)}$ to be the set of subsets of $\sub(\phi)$.

We want to define a $B$-automaton $\aut_{\phi}=\perm{Q,\A,\mathit{In},\mathit{Fin},\Gamma,\Delta}$ such that $\semB{\aut}\approx\sph$.

We set the initial states to be $\mathit{In}=\set{\set{\phi}}$ and the final ones to be $\mathit{Fin}=\set{\YF,\set{\Omega}}$
We choose as set of counters $\Gamma=\set{\gamma_1,\dots,\gamma_k}$ where $k$ is the number of occurences of the $\UN$ operators in $\phi$, labeled from $\UN_1$ to $\UN_k$.

A state is basically the set of constraints we have to verify before the end of the word, so the only two accepting states are the one with no constraint, or with only constraint to be at the end of the word.

The following definitions are the same as for the classical case (LTL to B\"uchi automata): 

\newcommand{\nex}{\mathrm{next}}

\begin{defi}
~
\begin{itemize}
\item An atomic formula is either a letter $a\in\A$ or $\Omega$
\item A set $Z$ of formulae is consistent if there is at most one atomic formula in it.
\item A reduced formula is either an atomic formula or a Next formula (of the form $X\varphi$).
\item A set $Z$ is reduced if all its elements are reduced formulae.
\item If $Z$ is consistent and reduced, we define $\nex(Z)=\set{\varphi/ X\varphi\in Z}$.
\end{itemize}
\end{defi}

\begin{lem}[Next Step]
If $Z$ is consistent and reduced, for all $u\in\A^*, a\in\A$ and $n\in\N$,

$$(au,n)\models \bigwedge Z \text{  iff  } (u,n)\models\bigwedge\nex(Z) \text{ and }Z\cup\set{a}\text{ consistent }$$

\end{lem}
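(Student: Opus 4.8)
The plan is to prove the "Next Step" lemma by a straightforward case analysis on the shape of the reduced formulae occurring in $Z$, unravelling the semantics of $\models$ on the word $au$ versus the suffix $u$. Since $Z$ is reduced, every $\varphi \in Z$ is either an atomic formula ($a'$ for some letter, or $\Omega$) or a Next formula $X\psi$; and $(au,n) \models \bigwedge Z$ holds iff $(au,n)\models\varphi$ for each such $\varphi$ separately, so it suffices to treat each conjunct.

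First I would handle the atomic conjuncts. If $b \in Z$ is a letter, then $(au,n)\models b$ iff the first letter of $au$ is $b$, i.e. $b = a$; this is exactly the condition that $Z \cup \set{a}$ contains at most one letter among $\set{b,a}$, which (together with consistency of $Z$) is the statement that $Z\cup\set{a}$ is consistent. If $\Omega \in Z$, then $(au,n)\models\Omega$ is false because $au \neq \varepsilon$; but then $Z\cup\set{a} = Z\cup\set{a}$ contains two distinct atomic formulae $\Omega$ and $a$, hence is inconsistent, so both sides of the "iff" are false. These two observations together show that the atomic part of $\bigwedge Z$ is satisfied by $(au,n)$ exactly when $Z\cup\set{a}$ is consistent (using that $Z$ itself is consistent, so $Z$ has at most one atomic formula, which we have just matched against $a$).

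Next I would handle the Next conjuncts: for $X\psi \in Z$, the semantics of $X$ on the nonempty word $au$ give $(au,n)\models X\psi$ iff $(u,n)\models\psi$. Ranging over all $X\psi\in Z$, the conjunction of these is precisely $(u,n)\models\bigwedge\nex(Z)$ by definition of $\nex(Z)=\set{\psi : X\psi\in Z}$. Combining the atomic part and the Next part yields $(au,n)\models\bigwedge Z$ iff $\big(Z\cup\set a$ consistent$\big)$ and $(u,n)\models\bigwedge\nex(Z)$, which is the claim. One small point to be careful about: when $Z$ contains \emph{no} atomic formula, the condition "$Z\cup\set a$ consistent" is automatic, and indeed the atomic part of $\bigwedge Z$ imposes no constraint on $au$, so the equivalence still reads correctly; and the empty conjunction $\bigwedge\emptyset$ is $\top$, handled by the convention $\sem\top = 0$.

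I do not expect a genuine obstacle here — the lemma is essentially a bookkeeping unfolding of definitions, and is the exact analogue of the corresponding step in the classical LTL-to-Büchi construction. The only mild subtlety worth spelling out is why the parameter $n$ plays no active role: reduced formulae contain no $\UN$ operator at their top level (they are atomic or Next formulae), so the valuation $n$ is simply carried unchanged through the $X$ unfolding, and the argument is uniform in $n$.
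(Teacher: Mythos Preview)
Your proof is correct and follows essentially the same approach as the paper's: unfold the semantics of the reduced formulae in $Z$, matching atomic conjuncts against the consistency condition and Next conjuncts against $\nex(Z)$. Your version is in fact more thorough than the paper's, which only sketches one direction and does not explicitly spell out the $\Omega$ case or the role of $n$.
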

\begin{proof}
If $(au,n)\models \bigwedge Z$, then the only atomic formula that $Z$ can contain is $a$, and therefore $Z\cup\set{a}$.
Moreover, for every formula of the form $X\varphi$ in $Z$, we have $(au,n)\models X\varphi$. By definition of the semantic of the $X$ operator, this means $(u,n)\models\varphi$. This is true for every $X\varphi$ in $Z$, so we obtain $(u,n)\models\nex(Z)$. The converse is similar.
\end{proof}

We would like to define $\aut_\phi$ with $Z\longrightarrow\nex(Z)$ as transitions.

The problem is that $\nex(Z)$ is not consistent and reduced in general. If $\nex(Z)$ is inconsistent we remove it from the automaton. If it is consistent, we need to apply some reduction rules to get a reduced set of formulae. This consists in adding $\varepsilon$-transitions (but with possible actions on the counter) towards intermediate sets which are not actual states of the automaton (we will call them "pseudo-states"), until we reach a reduced set.

Let $\psi$ be maximal (in size) not reduced in $Y$, we add the following transitions
\begin{itemize}
\item If $\psi=\varphi_1\wedge\varphi_2$ :   $Y\overset{\varepsilon : \varepsilon}{\longrightarrow}Y\setminus\set{\psi}\cup\set{\varphi_1,\varphi_2}$

\item If $\psi=\varphi_1\vee\varphi_2$ : 
$\left\{\begin{array}{l}
Y\overset{\varepsilon : \varepsilon}{\longrightarrow}Y\setminus\set{\psi}\cup\set{\varphi_1}\\
Y\overset{\varepsilon : \varepsilon}{\longrightarrow}Y\setminus\set{\psi}\cup\set{\varphi_2}
\end{array}\right.$ 

\item If $\psi=\varphi_1 U\varphi_2$ :
$\left\{\begin{array}{l}
Y\overset{\varepsilon : \varepsilon}{\longrightarrow}Y\setminus\set{\psi}\cup\set{\varphi_1,X\psi}\\
Y\overset{\varepsilon : \varepsilon}{\longrightarrow}Y\setminus\set{\psi}\cup\set{\varphi_2}\\
\end{array}\right.$

\item If $\psi=\varphi_1\UN_j\varphi_2$ :
$\left\{\begin{array}{l}
Y\overset{\varepsilon : \varepsilon}{\longrightarrow}Y\setminus\set{\psi}\cup\set{\varphi_1,X\psi}\\
Y\overset{\varepsilon : ic_j}{\longrightarrow}Y\setminus\set{\psi}\cup\set{X\psi}\text{ (we count one mistake)}\\
Y\overset{\varepsilon : r_j}{\longrightarrow}Y\setminus\set{\psi}\cup\set{\varphi_2}\\
\end{array}\right.$

where action $r_j$ (resp. $ic_j$) perform $r$ (resp. $ic$) on counter $\gamma_j$ and $\varepsilon$ on the other counters.
\end{itemize}
The pseudo-states do not (a priori) belong to $Q=2^{\sub(\phi)}$ because we add formulae $X\psi$ for $\psi\in\sub(\phi)$, so if $Z$ is a reduced pseudo-state, $\nex(Z)$ will be in $Q$ again since we remove the new next operators.
\smallskip


The transitions of automaton $\aut_\phi$ will be defined as follows:
$$\Delta=\set{Y\overset{a:\sigma}{\longrightarrow}\nex(Z)~|~ Y\in Q, Z\cup\set{a}\text{ consistent and reduced}, Y\overset{\varepsilon:\sigma}{\longrightarrow}_*Z}$$
where $Y\overset{\varepsilon:\sigma}{\longrightarrow}_*Z$ means that there is a sequence of $\varepsilon$-transitions from $Y$ to $Z$ with $\sigma$ as combined action on counters.

\subsection{Correctness of $\aut_\phi$}\label{corrB}

We will now prove that $\aut_\phi$ is correct, i.e. computes the same cost function as $\phi$.

\begin{defi}
If $\sigma$ is a sequence of actions on counters, we will call $\val(\sigma)$ the maximal value checked on a counter during $\sigma$ with $0$ as starting value of the counters, and $\val(\sigma)=0$ if there is no check in $\sigma$. It corresponds to the value of a run of a $B$-automaton with $\sigma$ as combined action of the counter.
\end{defi}

\begin{lem}\label{correct}
Let $u=a_1\dots a_m$ be a word on $\A$ and $Y_0\overset{a_1 : \sigma_1}{\rightarrow}Y_1\overset{a_2 : \sigma_2}{\rightarrow}\dots\overset{a_m : \sigma_m}{\rightarrow}Y_m$ an accepting run of $\aut_\phi$.

Then for all $\psi\in\sub(\phi)$, for all $n\in\set{0,\dots,m}$, for all $Y_n\overset{\varepsilon : \sigma}{\rightarrow}_*Y
\overset{\varepsilon : \sigma'}{\rightarrow}_*Z$, verifying (if $n<m$) $Z\cup\set{a_{n+1}}$ consistent and reduced, and $Y_{n+1}=\nex(Z)$
$$\psi\in Y\implies a_{n+1}a_{n+2}\dots a_m,N\models\psi$$
where $N=\val(\sigma'\sigma_{n+1}\dots\sigma_m)$.
\end{lem}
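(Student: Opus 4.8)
The statement is a soundness lemma for the automaton $\aut_\phi$, and the natural approach is an induction on the structure of the formula $\psi$, threaded through a (downward) induction on the position $n$ in the run. Fix an accepting run $Y_0\overset{a_1:\sigma_1}{\rightarrow}\dots\overset{a_m:\sigma_m}{\rightarrow}Y_m$. I would prove the claim by induction on $n$ decreasing from $m$ to $0$, and, for fixed $n$, by induction on the size of $\psi$ together with the number of $\varepsilon$-transitions still to be taken from $Y$ to a reduced set. The base case is $n=m$: since the run is accepting, $Y_m\in\Fin=\set{\emptyset,\set{\Omega}}$, and after a possible sequence of $\varepsilon$-transitions we must reach a reduced consistent set $Z$; here $a_{n+1}\dots a_m=\epsilon$, so the only formulae that can be satisfiably present are $\Omega$ (true on $\epsilon$) and, via the reduction rules, formulae that reduce to $\Omega$ or to $\bot$-like inconsistencies that get pruned. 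One checks that each reduction step is semantically correct on the empty word.

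For the inductive step, suppose the claim holds for position $n+1$ and all subformulae, and take $Y_n\overset{\varepsilon:\sigma}{\rightarrow}_*Y\overset{\varepsilon:\sigma'}{\rightarrow}_*Z$ with $Z\cup\set{a_{n+1}}$ consistent and reduced and $Y_{n+1}=\nex(Z)$. I would do a secondary induction on the length of the $\varepsilon$-path $Y\to_* Z$, i.e. on how far $Y$ is from being reduced. If $Y$ is already reduced (so $Y=Z$), then for each reduced $\psi\in Y$: if $\psi$ is atomic it must be $a_{n+1}$ (consistency) or $\Omega$ with $n=m$, handled as above; if $\psi=X\varphi$, then $\varphi\in\nex(Z)=Y_{n+1}$, and the position-$(n+1)$ hypothesis applied along the trivial $\varepsilon$-path gives $a_{n+2}\dots a_m,N'\models\varphi$ with $N'=\val(\sigma_{n+2}\dots\sigma_m)\le N$; by monotonicity of $\sem{\cdot}$ in $N$ and the semantics of $X$ (Lemma ``Next Step''), $a_{n+1}\dots a_m,N\models X\varphi$. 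If $Y$ is not reduced, pick the maximal non-reduced $\psi\in Y$; the run took one of the $\varepsilon$-transitions dictated by the reduction rules, landing in $Y'$ with strictly shorter remaining $\varepsilon$-path, and the secondary induction hypothesis gives $a_{n+1}\dots a_m,N\models\bigwedge Y'$. It then remains to check, rule by rule ($\wedge$, $\vee$, $U$, $\UN_j$), that $\bigwedge Y'\models\psi$ (with the appropriate bookkeeping of the counter action): for $\UN_j$, the branch $Y'=Y\setminus\set\psi\cup\set{X\psi}$ carries $ic_j$, so $\val$ of the combined action increases by one on counter $\gamma_j$, exactly matching ``one more mistake of $\UN_j$''; the branch with $r_j$ resets $\gamma_j$, consistent with ``$\varphi_2$ holds now, no more mistakes needed''; and these two together with the $\set{\varphi_1,X\psi}$ branch cover the three semantic cases of $\varphi_1\UN_j\varphi_2$ at $a_{n+1}\dots a_m$.

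The one genuinely delicate point is the interaction between the counter values and the $\UN$ semantics across the recursive unfolding: $X\psi$ for $\psi=\varphi_1\UN_j\varphi_2$ is re-examined at the next position, where further $ic_j$'s may be added before a matching $r_j$, and one must be sure that $N=\val(\sigma'\sigma_{n+1}\dots\sigma_m)$ counts exactly the mistakes of this particular occurrence of $\UN_j$ on the suffix $a_{n+1}\dots a_m$, and not fewer. This is where the ``maximal not reduced'' choice in the reduction and the fact that each $\gamma_j$ is dedicated to $\UN_j$ are used: between a reset $r_j$ and the next reset, the increments $ic_j$ are in bijection with the points where $\varphi_1$ is declared false before $\varphi_2$. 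I expect the bulk of the write-up to be the case analysis of the four reduction rules plus this counter-bookkeeping argument; the rest is routine once the double induction is set up. Note that because we only need the implication $\psi\in Y\implies \dots$ (soundness), we do not have to worry about runs that fail to exist — that direction is handled by the companion completeness lemma.
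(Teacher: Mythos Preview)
Your proposal is correct and follows essentially the same architecture as the paper's proof: a reverse induction on $n$ with an inner induction on the length $k$ of the $\varepsilon$-path $Y\to_*Z$, followed by a case analysis on the reduction rule applied at the first $\varepsilon$-step. Two minor remarks: the extra induction parameter ``size of $\psi$'' you mention is redundant (the path-length induction alone suffices, since each $\varepsilon$-step strictly shortens the remaining path), and in the base case $n=m$ the paper simply observes that $\emptyset$ and $\{\Omega\}$ have no outgoing $\varepsilon$-transitions, so $Y=Y_m$ immediately. For the $ic_j$ branch, your phrasing ``$\val$ increases by one'' is slightly loose: the paper splits into the two subcases $N=N'+1$ versus $N=N'$ (depending on whether $\gamma_j$ already achieves the global maximum before its first reset), which is exactly the ``delicate point'' you flag and resolve with the right intuition.
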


\begin{proof}
We do a reverse induction on $n$.

If $n=m$, $Y_n$ is a final state so $Y_n=\emptyset$ or $Y_n=\set{\Omega}$. If $Y_n\overset{\varepsilon : \sigma}{\rightarrow}_*Y$, then $Y=Y_n$ (no outgoing $\varepsilon$-transitions defined from $\emptyset$ or $\set{\Omega}$).
Then if $\psi\in Y$, the only possibility is $\psi=\Omega$, but $a_{n+1}\dots a_m=\varepsilon$, and $\varepsilon,0\models\Omega$, hence the result is true for $n=m$.

Let $n<m$, we assume the result is true for $n+1$, and we take same notations as in the lemma, with $\psi\in Y$.
By definition of $\Delta$, there exists a transition $Y_n\overset{a_{n+1} : \sigma\sigma'}{\longrightarrow}_*\nex(Z)=Y_{n+1}$ in $\aut_\phi$.

We do an induction on the length $k$ of the path $Y\overset{\varepsilon : \sigma'}{\rightarrow}_*Z$.

If $k=0$, then $Y=Z$ is consistent and reduced, so $\psi$ is either atomic or a Next formula.

If $\psi$ is atomic, the only way $Z\cup\set{a_{n+1}}$ can be consistent is if $\psi=a_{n+1}$. In which case we obtain $a_{n+1}\dots a_m,N\models\psi$ without difficulty.

If $\psi=X \varphi$ with $\varphi\in\nex(Z)=Y_{n+1}$, then it corresponds to the case $k=0$. By induction hypothesis (on $n$), $a_{n+2}\dots a_m,N\models\varphi$ ($N$ does not change because $\sigma'$ is empty). Hence $a_{n+1}a_{n+2}\dots a_m,N\models X\varphi$ which shows the result.

If $k>0$, we assume the result is true for $k-1$, and we show it for $k$.
We have $Y\overset{\varepsilon : \sigma'_1}{\rightarrow}Y'\overset{\varepsilon : \sigma'_2}{\rightarrow}_*Z$ with $\sigma'_1\sigma'_2=\sigma'$, and for all $\psi'\in Y', a_{n+1}a_{n+2}\dots a_m,N'\models\psi'$ with $N'=\val(\sigma'_2\sigma_{n+1}\dots\sigma_m)$.

We now look at the different possibilities for the $\varepsilon$-transition $Y\overset{\varepsilon : \sigma'_1}{\rightarrow}Y'$.
Let us first notice that either $N=N'$ or $N=N'+1$: since $\sigma'_1\in\set{\varepsilon,ic,r}^\Gamma$, adding it at the beginning of a sequence can only increment its value by one, or leave it unchanged.

Let $u_{n+1}=a_{n+1}a_{n+2}\dots a_m$.
If $\psi\in Y'$, then $u_{n+1},N'\models\psi$, but $N\geq N'$ so $u_{n+1},N'\models\psi$.

We just need to examine the cases where $\psi\notin Y'$ :
\begin{itemize}
\item If $\psi=\varphi_1\wedge\varphi_2$, $\sigma'_1=\varepsilon$, and $Y'=Y\setminus\set{\psi}\cup\set{\varphi_1,\varphi_2}$,

then $u_{n+1},N\models\varphi_1$ and $u_{n+1}\dots a_m,N\models\varphi_2$, hence $u_{n+1},N\models\psi$.

\item  Other classical cases where $\sigma'_1=\varepsilon$ are similar and come directly from the definition of LTL operators.

\item If $\psi=\varphi_1\UN_j\varphi_2$, $\sigma'_1=\varepsilon$ and $Y'=Y\setminus\set{\psi}\cup\set{\varphi_1,X\psi}$,

then $u_{n+1},N\models\varphi_1$ and $u_{n+1},N\models X\psi$, hence $u_{n+1},N\models\psi$

\item If $\psi=\varphi_1\UN_j\varphi_2$, $\sigma'_1=ic_j$ and $Y'=Y\setminus\set{\psi}\cup\set{X\psi}$,

then $u_{n+1},N'\models X\psi$.

If $\gamma_j$ reaches $N'$ before its first reset in $\sigma'_2\sigma_{n+1}\dots\sigma_m$, then $N=N'+1$, and we can conclude  $u_{n+1},N\models \psi$.

On the contrary, if $N=N'$ and there are strictly less than $N'$ mistakes on $\varphi_1$ before the next occurence of $\varphi_2$, we can allow one more while still respecting the constraint with respect to $N=N'+1$, so $u_{n+1},N\models\psi$.

\item If $\psi=\varphi_1\UN_j\varphi_2$, $\sigma'_1=r_j$ and $Y'=Y\setminus\set{\psi}\cup\set{\varphi_2}$
then $N=N'$, and $u_{n+1},N'\models X\varphi_2$, hence  $u_{n+1},N\models \psi$.
\end{itemize}

Hence we can conclude that for all $k$, $a_{n+1}a_{n+2}\dots a_m,N\models\psi$, which concludes the proof of the lemma.

\end{proof}

Lemma \ref{correct} implies the correctness of the automaton $\aut_\phi$ :\\
Let $Y_0\overset{a_1 : \sigma_1}{\rightarrow}Y_1\overset{a_2 : \sigma_2}{\rightarrow}\dots\overset{a_m : \sigma_m}{\rightarrow}Y_m$ be a valid run of $\aut_\phi$ on $u$ of value $N=\semB{\aut_\phi}$, applying Lemma \ref{correct} with $n=0$ and $Y=Y_0=\set{\phi}$ gives us $(u,N)\models\phi$. Hence $\sph\leq\semB{\aut_\phi}$.

Conversely, let $N=\sph(u)$, then $(u,N)\models\phi$ so by definition of $\aut_\phi$, it is straightforward to verify that there exists an accepting run of $\aut_\phi$ over $u$ of value $\leq N$ (each counter $\gamma_i$ doing at most $N$ mistakes relative to operator $\UN_i$). Hence $\semB{\aut_\phi}\leq\sph$.

We finally get $\semB{\aut_\phi}=\sph$, the automaton $\aut_\phi$ computes indeed the exact value of function $\sph$ (and so we have obviously $\semB{\aut_\phi}\approx\sph$).
%
%

\subsection*{Contraction of actions}
If we want to obtain a $B$-automaton as defined in Section \ref{sec:aut}, with atomic actions on transitions, we can proceed as follow.

We replace every action $\sigma\in\set{\ic,\e,\r}^*$ by the maximal letter atomic action $\max(\sigma)$ occuring in it, with respect to the order $\e<\ic<\r$. For instance $\ic\r\ic\ic$ will be changed in $\r$. Let $K$ be the maximal number of consecutive increments in such an action $\sigma$, that is to say $$K=\max\set{\val_B(\sigma) : (p,a,\sigma,q)\in\Delta}.$$

Let $\aut'$ be the automaton obtained from $\aut_\phi$ by replacing each action $\sigma$ by $\sigma'=\max(\sigma)$.
Runs of $\aut_\phi$ and $\aut'$ are in a one-to-one correspondance in a canonic way: only counter actions have changed.
Let $\rho$ be a run of $\aut_\phi$ and $\rho'$ be the corresponding run of $\aut'$.

First, remark that $\rho'$ always perform less increments than $\rho$ (going from $\sigma$ to $\sigma'$ only remove increments), so $\val_B(\rho')\leq\val_B(\rho)$.

Moreover, when we go from $\sigma'$ to $\sigma$, we can add at most $K$ increments before or after each reset, so between two resets of $\rho$ (or edge of the word), we have at most $2K$ increments for each increment in $\rho'$. Let $\alpha(n)=2Kn+2K$, we obtain $\val_B(\rho)\leq \alpha(\val_B(\rho'))$.

Let $u\in\A^*$ and $\rho$ a run of $\aut_\phi$ such that $\val_B(\rho)=\semB{\aut_\phi}(u)$. Then we saw that there a run $\rho'$ of $\aut'$ with $\val_B(\rho')\leq\val_B(\rho)$. Thus we obtain $\semB{\aut'}\leq \semB{\aut}$.

Conversely, let $\rho'$ be a run of $\aut'$ such that $\val_B(\rho')=\semB{\aut'}(u)$. Then we saw that there is a run $\rho$ of $\aut$ with $\val_B(\rho)\leq_\alpha\val_B(\rho)$. Thus we obtain $\semB{\aut'}\preccurlyeq_\alpha \semB{\aut_\phi}$.

In the end, we get $\semB{\aut'}\approx\semB{\aut_\phi}=\sem{\phi}$, and $\aut'$ is a $B$-automaton with atomic actions.

The results are summed up in the following theorem:

\begin{thm}\label{thm:ltlreg}
Let $\varphi$ be an $\ltlq$-formula, we showed that $\sem{\varphi}^\approx$ is recognized by a $B$-automaton, and so $\sem{\varphi}^\approx$ is a regular cost function.
If we authorize non-atomic actions on transitions, we can build a $B$-automaton that preserves the exact semantic of $\varphi$, not using approximation $\approx$.
\end{thm}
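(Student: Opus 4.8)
The plan is to verify that the automaton $\aut_\phi$ constructed in Section~\ref{descB} satisfies $\semB{\aut_\phi}=\sph$ exactly, and then to derive the atomic-action variant by contracting the counter actions at a controlled cost. The two halves of the first equality are essentially independent and can be treated separately.

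For the inequality $\sph\le\semB{\aut_\phi}$ I would simply invoke Lemma~\ref{correct}. Applied to an accepting run $Y_0\xrightarrow{a_1:\sigma_1}\cdots\xrightarrow{a_m:\sigma_m}Y_m$ of value $N=\semB{\aut_\phi}(u)$, taking $n=0$, $Y=Y_0=\set{\phi}$ and the trivial $\varepsilon$-path, it yields $(u,N)\models\phi$, hence $\sph(u)\le N$; choosing the run of minimal value gives the inequality for every $u$.

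For the converse $\semB{\aut_\phi}\le\sph$ I would turn a model into a run. Given $u$ with $\sph(u)=N$, so $(u,N)\models\phi$, I would read $u$ letter by letter while keeping in the current state the set of still-to-be-satisfied obligations, resolving the $\varepsilon$-nondeterminism of the reduction rules according to a fixed satisfaction witness: for $\vee$ and $U$ take a branch that holds, and for $\psi=\varphi_1\UN_j\varphi_2$ take the $\set{\varphi_1,X\psi}$ branch while $\varphi_1$ holds, the $ic_j$ branch (recording one mistake) when $\varphi_1$ fails and $\varphi_2$ is not yet reached, and the $r_j$ branch once $\varphi_2$ holds. The Next Step lemma guarantees that after each letter the remaining obligations are satisfied at the corresponding suffix, and the word ends in $\emptyset$ or $\set{\Omega}$, so the run is accepting. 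The delicate point — and the main obstacle — is the counter bookkeeping: each $\gamma_j$ is incremented only via the $ic_j$ transitions and is reset (by $r_j$) whenever the corresponding $\varphi_2$-obligation is discharged, and the witness must be chosen so that between two consecutive resets (or between a reset and an endpoint of the word) $\gamma_j$ is incremented at most $N$ times, so that the run has value $\le N$. Combining the two inequalities gives $\semB{\aut_\phi}=\sph$, and in particular $\sem{\varphi}^\approx=\semB{\aut_\phi}^\approx$ is a regular cost function.

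For the last assertion I would pass from $\aut_\phi$ to the automaton $\aut'$ obtained by replacing, on each counter, every compound action $\sigma\in\set{\e,\ic,\r}^*$ by its maximal atomic letter under the order $\e<\ic<\r$; this is a $B$-automaton in the sense of Section~\ref{sec:aut}. Runs of $\aut_\phi$ and $\aut'$ correspond canonically. Since contraction only deletes increments, $\val_B(\rho')\le\val_B(\rho)$, giving $\semB{\aut'}\le\semB{\aut_\phi}$. Conversely, writing $K$ for the largest number of consecutive increments occurring in a transition of $\aut_\phi$, each increment surviving in $\rho'$ can be blown up into at most $2K$ increments of $\rho$ (at most $K$ on each side of a reset), so $\val_B(\rho)\le 2K\,\val_B(\rho')+2K$ and $\semB{\aut_\phi}\le_\alpha\semB{\aut'}$ with $\alpha(n)=2Kn+2K$. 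Therefore $\semB{\aut'}\approx\semB{\aut_\phi}=\sem{\varphi}$, which is exactly the claim.
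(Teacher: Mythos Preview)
Your proposal is correct and follows essentially the same line as the paper's own argument: Lemma~\ref{correct} for $\sph\le\semB{\aut_\phi}$, the witness-guided construction of an accepting run of value at most $N$ for the converse (which the paper only sketches as ``straightforward''), and the same contraction with $\alpha(n)=2Kn+2K$ to pass to atomic actions. The only place where you add content beyond the paper is the explicit description of how to resolve the $\varepsilon$-nondeterminism for $\UN_j$ and the observation that each counter is reset at the $r_j$ branch so that at most $N$ increments occur between resets; this is exactly the missing detail behind the paper's one-line claim.
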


Since by \cite{Col09}, we can decide whether a function recognized by $B$-automaton is bounded, or even compare such functions with respect to $\preccurlyeq$, we get the following corollary:

\begin{cor}\label{cor:ltldec}
Let $\varphi$ and $\psi$ be two $\ltlq$-formulae, we can decide whether $\sem{\varphi}$ is bounded, and more generally whether $\sem{\varphi}\preccurlyeq\sem{\psi}$ holds.
\end{cor}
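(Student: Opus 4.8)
The plan is to reduce Corollary~\ref{cor:ltldec} to the corresponding decidability results for $B$-automata stated in \cite{Col09}, which are already invoked in the paragraph preceding the corollary. The only work that remains is to assemble the pieces in the right order, so the proof will be short.

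First I would apply Theorem~\ref{thm:ltlreg} to $\varphi$: it provides (effectively) a $B$-automaton $\aut_\varphi$ with $\semB{\aut_\varphi}\approx\sem{\varphi}$; in fact the construction of $\aut_\varphi$ from the formula $\varphi$ in Section~\ref{descB} is explicit and algorithmic, so $\aut_\varphi$ can actually be computed from $\varphi$. Likewise I would compute a $B$-automaton $\aut_\psi$ with $\semB{\aut_\psi}\approx\sem{\psi}$. Next I would note that boundedness of $\sem{\varphi}$ is a $\preccurlyeq$-statement in disguise: $\sem{\varphi}$ is bounded if and only if $\sem{\varphi}\preccurlyeq\chi_{\A^+}$ (the constant function $0$, which is trivially recognized by a one-state $B$-automaton with no counter), so it suffices to handle the general comparison $\sem{\varphi}\preccurlyeq\sem{\psi}$. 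Since $\preccurlyeq$ only depends on the $\approx$-classes of the two functions (this is immediate from the definition, as $f\preccurlyeq g$ is phrased purely in terms of which sets are bounded), we have $\sem{\varphi}\preccurlyeq\sem{\psi}$ if and only if $\semB{\aut_\varphi}\preccurlyeq\semB{\aut_\psi}$. Finally I invoke the result of \cite{Col09} — explicitly cited just above the corollary — that the relation $\preccurlyeq$ between two functions given by $B$-automata is decidable. Chaining these reductions gives the algorithm: build $\aut_\varphi$ and $\aut_\psi$, then run the decision procedure for $B$-automaton comparison.

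There is essentially no obstacle here: the theorem is a straightforward corollary, and the only minor point to get right is the observation that boundedness is a special case of $\preccurlyeq$-comparison against the zero function, and that $\preccurlyeq$ is invariant under replacing either function by an $\approx$-equivalent one, so that the passage from $\sem{\varphi}$ to the automaton $\aut_\varphi$ (which only preserves $\approx$, in the atomic-action version) does not lose anything. All the heavy lifting — the translation from $\ltlq$ to $B$-automata, and the decidability of limitedness and of $\preccurlyeq$ for $B$-automata — has already been done, in Theorem~\ref{thm:ltlreg} and in \cite{Col09} respectively.
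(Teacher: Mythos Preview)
Your proof is correct and matches the paper's own reasoning exactly: the corollary is stated immediately after noting that \cite{Col09} decides boundedness and $\preccurlyeq$ for $B$-automata, and the paper likewise remarks that boundedness is the special case $\sem{\varphi}\preccurlyeq 0$. There is nothing to add.
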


Notice that deciding whether $\sem{\varphi}$ is bounded amounts to decide whether $\sem{\varphi}\preccurlyeq 0$ (where $0$ is the function mapping every word to $0$). Notice that boundedness of a formula corresponds to ``uniform validity'' of the formula: a formula is bounded if it can accepts every input, within a uniform bound $N$. In particular, a classical LTL formula is bounded if and only if it is true on all words.
This is illustrated in Example \ref{exbounded}. 

\begin{exa}\label{exbounded}
We give two examples on alphabet $\{a,b\}$: let $\varphi=(b\vee X a\vee XF a)\UN\Omega$, and $\psi=(a\vee X a\vee XF a)\UN\Omega$.
Then $\sem{\varphi}$ is bounded by $2$: the subformula $(b\vee X a\vee XF a)$ fails if the remaining suffix is in $ab^+$ (which happens at most once), or if we are on the last letter and it is $a$. On the other hand, $\sem{\psi}$ is unbounded, because $\sem{\psi}(b^n)=n$ for all $n$.
\end{exa}

From \cite{Col09}, deciding whether $\sem{\varphi}\preccurlyeq\sem{\psi}$ requires to build an $S$-automaton recognizing $\sem{\varphi}^\approx$, and a $B$-automaton recognizing $\sem{\psi}^\approx$. This means that to test boundedness of a formula $\varphi$, we want to obtain an $S$-automaton recognizing $\sem{\varphi}^\approx$.
Moreover, the standard algorithm translating between $B$- and $S$-automata is in EXPSPACE, because it uses the underlying stabilization semigroup, possibly containing exponentially many elements, compared to the number of states of automata.

To reduce the complexity of these two decision problems (boundedness and comparison of $\ltlq$-formulae), it is therefore useful to transform $\ltlq$-formulae directly into $S$-automata.

\section{From $\ltlq$ to $S$-automata}\label{sec:ltlSaut}

In this section, we give a translation from $\ltlq$-formulae to the model of the $S$-automata.
This will allow us to show that the boundedness problem for $\ltlq$-formulae is PSPACE-complete.

\subsection{The logic $\nltlq$}
%
In order to naturally define a $S$-automaton from a $\ltlq$-formula, we will start be reversing the semantic of this formula.

Let $\nltlq$ be the logic defined by the following grammar:
$$\varphi:=a~|~\varphi\wedge\varphi~|~\varphi\vee\varphi~|~X\varphi~|~\varphi U\varphi|~~\varphi\RgN\varphi~|~\Omega$$ 

We want such a formula to be obtained by negating a $\ltlq$-formula, and then pushing negations to the leaves. This is why we need a dual operator to $\UN$, which is $\RgN$.
We want its semantic to be such that $(\neg\varphi)\RgN(\neg\psi)$ is equivalent to $\neg(\varphi\UN\psi)$. 
That is to say, the semantic of $\RgN$ is defined by: $(u,n,i)\models \varphi\RgN\psi $ if for all $j>i$, either $(u,n,j)\models\psi$, or there are at least $n$ positions $i\leq j'<j$ such that $(u,n,j')\models\varphi$.
Other operators have same semantics as in $\ltlq$.

We can notice that if $\varphi$ is a $\ltlq$-formula, then $\neg\varphi$ is equivalent to a $\nltlq$-formula, by pushing negations to the leaves.

If $\varphi$ is a $\nltlq$-formula, we define the cost function $\nsem{\varphi}$ recognized by $\varphi$ by
$$\nsem{\varphi}(u)=\sup\set{n\in\N : (u,n)\models\varphi}.$$

\begin{lem}
Let $\varphi$ be a $\ltlq$-formula, then $\nsem{\neg\varphi}\approx\sem{\varphi}$.
\end{lem}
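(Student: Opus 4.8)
The plan is to prove the two inequalities $\nsem{\neg\varphi} \preccurlyeq \sem{\varphi}$ and $\sem{\varphi} \preccurlyeq \nsem{\neg\varphi}$ separately, which together give the desired equivalence $\nsem{\neg\varphi} \approx \sem{\varphi}$. In fact, I expect that a cleaner statement holds: for every word $u$ and every $n \in \N$,
$$(u,n) \models \varphi \quad \Longleftrightarrow \quad (u,n) \not\models \neg\varphi,$$
that is, the $\nltlq$-formula $\neg\varphi$ obtained by pushing negations to the leaves is, on each fixed valuation $n$ of the bound, the genuine Boolean complement of $\varphi$. This is the heart of the argument and, once established, the rest is a short computation with $\inf$ and $\sup$.

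First I would make precise what $\neg\varphi$ means: it is defined by structural induction, using De Morgan for $\wedge, \vee$, the self-duality of $X$ (so $\neg X\psi = X\neg\psi$, together with the edge-of-word handling via $\Omega$, using $\neg(Xa) \equiv X(\neg a) \vee \Omega$ as needed), the identity $\neg(\varphi U \psi) \equiv (\neg\psi) U ((\neg\varphi)\vee\Omega)$ already recorded in the excerpt, and crucially $\neg(\varphi \UN \psi) \equiv (\neg\varphi) \RgN (\neg\psi)$ which is precisely how the semantics of $\RgN$ was defined. The atomic cases use $\neg a = (\bigvee_{b \neq a} b) \vee \Omega$ and $\neg\Omega = \bigvee_{a} a$ (equivalently $\top \wedge \neg\Omega$, expressing ``not at the end of the word''). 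Then I would carry out the structural induction on $\varphi$ proving $(u,n) \models \varphi \iff (u,n) \not\models \neg\varphi$ for all $u,n$ simultaneously (quantifying over all suffix positions where appropriate, i.e. proving the stronger pointwise-at-position statement so the induction goes through for $U$ and $\UN$). The Boolean, $X$, $\Omega$, and atomic cases are routine. The $U$ case is the classical LTL duality. The $\UN$/$\RgN$ case is exactly the definition: $(u,n,i) \models \varphi \UN \psi$ asserts there is $j > i$ with $(u,n,j)\models\psi$ and fewer than $n$ ``mistakes'' of $\varphi$ strictly before $j$; its negation asserts that for every $j>i$, either $(u,n,j)\not\models\psi$, or there are at least $n$ positions $i \le j' < j$ with $(u,n,j')\not\models\varphi$ — and applying the induction hypothesis to $\varphi$ and $\psi$ turns $\not\models\psi$ into $\models\neg\psi$ and $\not\models\varphi$ into $\models\neg\varphi$, which is precisely $(u,n,i)\models(\neg\varphi)\RgN(\neg\psi)$.

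Once the pointwise complementation is in hand, I would finish as follows. By definition, $\sem{\varphi}(u) = \inf\set{n \in \N : (u,n) \models \varphi}$, and since the set $\set{n : (u,n)\models\varphi}$ is upward-closed (the $\UN$ operators occur positively), $\sem{\varphi}(u)$ is the least $n$ for which $(u,n)\models\varphi$, with value $\infty$ if no such $n$ exists. Dually, $\nsem{\neg\varphi}(u) = \sup\set{n : (u,n) \models \neg\varphi}$, and the set $\set{n : (u,n)\models\neg\varphi}$ is downward-closed (being the complement of an upward-closed set), so this $\sup$ is, up to an additive constant of $1$, the threshold where $\neg\varphi$ stops holding — which is exactly the threshold where $\varphi$ starts holding. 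Concretely one gets $\nsem{\neg\varphi}(u) \in \set{\sem{\varphi}(u) - 1, \sem{\varphi}(u)}$ with the natural conventions ($\sup\emptyset = 0$, $\infty - 1 = \infty$), hence $\sem{\varphi} \approx_{\alpha} \nsem{\neg\varphi}$ for $\alpha(x) = x+1$, and in particular $\nsem{\neg\varphi} \approx \sem{\varphi}$ by Lemma~1.

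The main obstacle is not conceptual but bookkeeping: handling the boundary effects at the end of the finite word correctly in the $X$ and $U$/$\UN$ dualities, since on finite words $X\psi$ is false at the last position regardless of $\psi$, so $\neg(X\psi)$ must account for ``being at the end'' via $\Omega$. Getting the base of the structural induction and the $\RgN$ semantics to line up exactly with the stated definition — rather than being off by one in the ``$n$ positions'' versus ``$n$ mistakes'' count — is the delicate point, but the $\RgN$ semantics in the excerpt was evidently chosen to make this match, so the induction should close cleanly.
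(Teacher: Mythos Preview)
Your proposal is correct and follows essentially the same approach as the paper: establish that for each fixed $n$ the $\nltlq$-formula $\neg\varphi$ is the Boolean complement of $\varphi$, then compute $\nsem{\neg\varphi}(u)$ from $\sem{\varphi}(u)$ via the upward/downward-closed threshold argument, obtaining equality up to an additive $1$. The paper's proof is terser in that it takes the pointwise complementation for granted (since $\RgN$ was defined precisely to make it hold) and goes straight to the threshold computation, whereas you spell out the structural induction; but the substance is the same.
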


\begin{proof}
Let $\varphi$ be a $\ltlq$-formula, and $u\in\A^*$.
If $\sem{\varphi}(u)=\infty$, then for all $n\in\N$, $(u,n)\models\neg\varphi$, hence $\nsem{\neg\varphi}=\infty$.
Otherwise, let $n=\sem{\varphi}(u)\in\N$, then $(u,n)\models\varphi$ and $(u,n+1)\not\models\varphi$.
Thus we have $(u,n)\not\models\neg\varphi$ and $(u,n+1)\models\neg\varphi$, this implies $\nsem{\neg\varphi}=n+1$.

This is enough to conclude $\nsem{\neg\varphi}\approx\sem{\varphi}$.
\end{proof}
Going from an $\ltlq$-formula to its negation in $\ltlq$ can be done by a linear time algorithm: it suffices to push negations to the leaves, replacing each operator by the dual one (with possible addition of $\Omega$). Thus it suffices to build the wanted $S$-automaton from an $\nltlq$-formula.

\subsection{From $\nltlq$ to $S$-automata}
Let $\phi$ be an $\nltlq$-formula, with $k$ $\RgN$-operators, labelled $\RgN_1,\RgN_2,\dots,\RgN_k$.

We can build a $S$-automaton $\aut_\phi$ as before, with counters $\set{\gamma_1,\dots,\gamma_k}$, by remembering subformulae of $\phi$ as constraints in states.
The states of $\aut_\phi$ are again $Q=2^{\sub(\phi)}$, with $\set{\phi}$ as initial state. However, this time, the final states are every state $Y$such that for all $\varphi\in Y$, we have $\varphi=\Omega$ or $\varphi$ is of the form $\varphi_1\RgN\varphi_2$. Indeed, we have $(\epsilon,0)\models\varphi_1\RgN\varphi_2$, for any formulae $\varphi_1$ and $\varphi_2$.
Then, the main new feature is how we deal with operators $\RgN$ in the table of $\e$-transitions between pseudo-states.

Let $Y$ be a pseudo-state (or a real state), and $\psi$ be a non-reduced formula of maximal size in $Y$. If $\psi$ is not of the form $\varphi_1\RgN_j\varphi_2$, then we add the same transitions as in Section \ref{descB}.

Otherwise, if $\psi=\varphi_1\RgN_j\varphi_2$, we add the following transitions:
$$\left\{\begin{array}{l}
Y\overset{\e : \i_j}{\longrightarrow}Y\setminus\set{\psi}\cup\set{\varphi_1,\varphi_2,X\psi}\text{ (we count one occurence of $\varphi_1$, and $\varphi_2$ has to be seen)}\\
Y\overset{\e : \e}{\longrightarrow}Y\setminus\set{\psi}\cup\set{\varphi_2,X\psi}\text{ (we see $\varphi_2$ without $\varphi_1$)}\\
Y\overset{\e : \ccr_j}{\longrightarrow}Y\setminus\set{\psi}\text{ (if $\varphi_2$ cannot be proved, we perform $\ccr$ to guarantee a lot of $\varphi_1$ before)}\\
\end{array}\right.$$

The proof of correctness is very similar to the one for $B$-automata in Section \ref{corrB}, so we omit it here. The main intuition is to that all transitions keep track of constraints in a sound way.

As before, we could build the transition table of $\aut_\phi$ by contracting all $\e$-transitions, and verify that the resulting $S$-automaton recognizes $\nsem{\phi}$. 
For our current purpose, it is not necessary to perform this contraction, we can think of $\aut_\phi$ as having all the $\e$-transitions and pseudo-states described in its construction.

However, contracting sequences of $S$-actions will be useful in another context, to describe the PSPACE algorithm. So the next section describes how such sequences can be contracted.

\subsection{Semigroup of $S$-actions}\label{semSact}
We will explicit how to contract $S$-actions, by using a stabilization semigroup $\semi$ which contains all the necessary information about how to compose these actions.
The product operation in $\semi$ reflects the concatenation of $S$-actions.
The stabilization operation $\sharp$ corresponds to repeating the same action a lot of times, for instance as it can be done in a cycle of the automaton.
The element $\omega$ stands for a ``big value'', which can be made arbitrarily large, by repeating element $\i$ a lot of times.
The element $\bot$ represents a fail of the run, when the automaton tries to perform action $\ccr$ on a small counter value.
The elements of $\semi$ are gathered in the set $S=\set{\omega, \i, \e, \r, \ccr\omega, \ccr, \bot}$, and ordered by $\omega\leq \i\leq \e\leq (\r/\ccr\omega)\leq\ccr\leq\bot$. 
This order reflects a preference for the $S$-automaton: between two actions $\sigma\leq\sigma'$, it is always better to choose $\sigma$ in any context, when aiming for a big $S$-value. 
This explains why actions $\r$ and $\ccr\omega$ are not comparable: the best choice can depend on the context.
Indeed, in an empty context, action $\r$ is better (it yields value $\infty$ while $\ccr\omega$ yields value $0$), but in a context $C[x]=\omega\cdot x\cdot\ccr$, it is better to choose $\ccr\omega$ (yielding value $\infty$) than $\r$ (yielding value $0$). Product and stabilization operations in $\semi$ are explicited in the following array: 
$$
\begin{array}{|c||c|c|c|c|c|c|c||c|}
\hhline{-||-------||-}
 {\cdot}& \omega & \i & \e & \r & \ccr\omega & \ccr & \bot & \cdot^\sharp\\
\hhline{=::=======::=}
 \omega & \omega & \omega & \omega & \r & \omega & \r & \bot & \omega\\
\hhline{-||-------||-}
\i & \omega & \i & \i  & \r & \ccr\omega & \ccr  & \bot & \omega \\
\hhline{-||-------||-}
\e & \omega & \i & \e  & \r & \ccr\omega & \ccr & \bot & \e\\
\hhline{-||-------||-}
 \r & \omega & \r & \r & \r & \bot & \bot & \bot & \r \\
\hhline{-||-------||-}
 \ccr\omega & \ccr\omega & \ccr\omega & \ccr\omega & \ccr & \ccr\omega & \ccr & \bot &\ccr\omega \\
\hhline{-||-------||-}
 \ccr & \ccr\omega & \ccr & \ccr & \ccr & \bot & \bot & \bot &\\
\hhline{-||-------||-}
 \bot & \bot & \bot & \bot & \bot & \bot & \bot & \bot & \bot\\
\hhline{-||-------||-}

\end{array}
$$
Notice that $\ccr^\sharp$ is undefined because $\ccr$ is not idempotent.

If $\Gamma$ is a set of counters, then we denote by $\semi^\Gamma$ the product stabilization semigroup with underlying set $S^\Gamma$, where all operations are performed component-wise.
If $\sigma\in\semi^\Gamma$ and $\gamma\in\Gamma$, we will note $\sigma_\gamma$ the projection of $\sigma$ on counter $\gamma$. When some components are not specified, the default value is $\e$. For instance if $\Gamma=\set{1,2}$, we can write $\i_1$ for $(\i,\e)$ and $\ccr\omega_2$ for $(\e,\ccr\omega)$.
\subsection{Decision algorithm in polynomial space}\label{sec:algo}
It was shown in \cite{SC85} that satisfiability of classical LTL-formula is a PSPACE-complete problem. To obtain a PSPACE algorithm, an equivalent automaton is generated on-the-fly, and an accepting run of this automaton is guessed, while only information about the current state is remembered.
Transition labels can be ignored, as only the existence of an accepting run is of interest, we do not care about which word can be accepted.

We want here to generalize this approach to cost functions: the problem is to explore the automaton $\aut_\phi$ described earlier, but without keeping the whole automaton in the memory of the algorithm, in order to use only polynomial space with respect to the size of $\phi$.
We now want to decide whether the function $\nsem{\phi}$ described by $\phi$ is bounded. This generalizes the satisfiability problem for LTL, since an LTL formula $\varphi$ is satisfiable if and only if $\sem{\neg\varphi}$ is unbounded, that is to say $\nsem{\varphi}$ is unbounded.

To do so, we will look for a witness of the fact that $\semS{\aut_\phi}$ is unbounded.
We have to face here an additional challenge compared to the classical case: it is not enough to find an accepting path, we have to find a family of accepting paths with arbitrary high values. This means that while we can forget the letters labelling transitions, we have to pay attention to counter actions.

We will need to keep information about counter values along the way. The principles that the algorithm has to respect for each counter $\gamma\in\Gamma$ are the following:
\begin{itemize}
\item Every action $\ccr_\gamma$ must follow an action $\omega_\gamma$, which represents a big number of increments $\i_\gamma$.
\item The only way to obtain $\omega_\gamma$ is to go through a cycle containg at least one $\i_\gamma$, and only actions $\i$ and $\e$ for $\gamma$.
\end{itemize}
Thus the aim is to describe a non-determinist algorithm that guesses a path in the automaton, as well as states that will be visited twice (in order to create cycles). These states will be called ``control points''.
We explain how the algorithm works via the following example:

\begin{center}
\begin{tikzpicture}[shorten >=1pt,node distance=2cm,on grid,auto,initial text=,
every state/.style={inner sep=0pt,minimum size=6mm},accepting/.style=accepting by arrow]

    \node[state,initial]	(p_0) {$p_0$};;
    \node[state] 	(p_1) [right=of p_0] {$p_1$};
    \node 	(q_1) [above=1.5cm of p_1] {$q_1$};
    \node[state] 	(p_2) [right=3cm of p_1] {$p_2$};
    \node[state] 	(p_3) [above left=1.5cm and 1cm of p_2] {$p_3$};
    \node 	(q_2) [above=1.5cm of p_3] {$q_2$};
    \node 	(q_3) [right=of p_3] {$q_3$};
    \node 	(q_4) [right=of p_2] {$q_4$};
    \node[state] 	(p_f) [right=of q_4] {$p_f$};
    \path[->]
    (p_0) edge (p_1)
    (p_1) edge[bend left] node {$\i_1$} (q_1)
    		edge (p_2)
    (q_1) edge[bend left]  (p_1)
    (p_2) edge[bend left] node[anchor=west,above right]{$\ccr_1$} (p_3)
    		edge (q_4)
    (p_3) edge[bend left] node {$\i_1$} (q_2)
    		edge[bend left] node {$\i_2$} (q_3)
    (q_2) edge[bend left]  (p_3)
    		
    (q_3)	edge[bend left]  (p_2)
    (q_4) edge node{$\ccr_2$} (p_f);

\end{tikzpicture}
\end{center}

If $p_0\in\mathit{In}$ and $p_f\in\mathit{Fin}$, the existence of this path in the automaton $\aut_\phi$ is a witness that $\semS{\aut_\phi}$ is unbounded.
The aim of the algorithm is to find such a path, by guessing the beginning of each cycle, and contracting actions between two control points. In the example, the control points are $p_1, p_2$ and $p_3$.
At any time, the memory of the algorithm contains a sequence $m,\sigma_1,p_1,\sigma_2,p_2,\dots, \sigma_m,p_m$, where for all $i\in[1,m]$, action $\sigma_i$ is in $\semi^\Gamma$, and $p_i\in Q$. Moreover, $p_m$ is always the current state of the run of $\aut_\phi$, and $m\leq|\Gamma|+1$.
States $(p_i)_{i<m}$ are the current control points, i.e. starts of cycles that the run is currently using.  
They have to be closed in the future for the algorithm to end, and the last to be opened has to be the first closed, so that we get properly nested cycles. 
When a cycle is closed, operator $\sharp$ is applied to the action performed in the cycle, and then the product operation is used to concatenate this action with the one of the new innermost current cycle.

The aglorithm starts with memory $0,\e,p_0$ with $p_0\in\mathit{In}$. In general $p_0$ can be chosen in a nondeterministic way, but here the unique initial state is $\set{\phi}$.
Transitions are used on-the-fly: at any position, we guess a transition (available transitions depend only on formulae $\varphi$ appearing in the current state), and we update the memory accordingly.
The algorithm ends and outputs ``unbounded'' if the memory only contains $0,\sigma,p_f$, with $p_f\in\mathit{Fin}$, and for all $\gamma\in\Gamma$, $\sigma_\gamma\notin\set{\ccr,\ccr\omega,\bot}$. 

We can remark that the condition $m\leq|\Gamma|+1$ limits the number of nested cycles to $|\Gamma|$.
This guarantees a memory space polynomial in $|\phi|$, since $|\Gamma|\leq|\phi|$ (every counter comes from an operator $\RgN$ of $\phi$).

We come back to the above example, and we describe in the following table the successive statuses of the memory, while the algorithm guesses the wanted witness. For convenience, we focus here on the status of the memory when the algorithm passes through states $q_1,q_2,q_3,q_4,p_f$:

$$\begin{array}{c|c|cccccc}
 & m &\\
 \hline
 q_1 & 1 & \e & p_1 & \i_1 & q_1 \\
 \hline
 q_2 & 2 & \omega_1 & p_2 & \ccr_1 & p_3 & \i_1 & q_2 \\
  \hline
 q_3 & 1 & \omega_1 & p_2 & (\ccr\omega,\i) & q_3\\
  \hline
 q_4 & 0 & (\r,\omega) & q_4\\
  \hline
 p_f & 0 & (\r,\r) & p_f\\
  \hline
\end{array}$$

After passing through $q_1$, the algorithm goes back to $p_1$, and closes a cycle with only action $\i_1$. Thus it gets action $\i_1^\sharp=\omega_1$.
Then the control points $p_2$ and $p_3$ are opened, with an action $\ccr_1$ in-between. They are followed by an action $\i_1$, as we can see in state $q_2$.
After state $q_2$, when the algorithm goes back to $p_3$, action $\i_2$ is stabilized, yielding $\omega_2$. 
It is then concatenated with the previous $\ccr_1$ yielding action $(\ccr\omega,\i)$ that we can see in $q_3$.
When the run goes back to $p_2$ and closes the external cycle, this $(\ccr\omega,\i)$ is stabilized into $(\ccr\omega,\i)^\sharp=(\ccr\omega,\omega)$, and concatened with $\omega_1$, yielding the $(\r,\omega)$ action that we can see in $q_4$. 
In the end, concatenation with the final $\ccr_2$ yields global action $(\r,\r)$ in $p_f$. By the acceptance condition of the algorithm, it can stop there and output ``unbounded'', since $\r\notin\set{\ccr,\ccr\omega,\bot}$ and $p_f\in\mathit{Fin}$.

\subsection{Complexity and correctness of the algorithm}

\begin{lem}
The algorithm described in Section \ref{sec:algo} has a space complexity polynomial in $|\phi|$.
\end{lem}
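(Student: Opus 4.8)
The plan is to account for each component of the memory and show that it occupies only polynomially many bits, then observe that the bound $m \leq |\Gamma|+1$ controls the total. First I would recall the exact shape of the memory: a sequence $m, \sigma_1, p_1, \sigma_2, p_2, \dots, \sigma_m, p_m$, so the memory is a list of at most $m$ pairs $(\sigma_i, p_i)$ together with the counter $m$. Since $|\Gamma| \leq |\phi|$ (each counter arises from an $\RgN$-operator in $\phi$), and the algorithm maintains the invariant $m \leq |\Gamma|+1 \leq |\phi|+1$, the number of pairs in the list is at most linear in $|\phi|$; storing $m$ itself takes $O(\log |\phi|)$ bits.

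Next I would bound the size of each individual pair. A state $p_i \in Q = 2^{\sub(\phi)}$ is a subset of the subformulae of $\phi$, and $|\sub(\phi)| \leq |\phi|$ (or at most linear in $|\phi|$ once we allow the auxiliary $X\psi$ formulae appearing in pseudo-states), so each $p_i$ is encodable in $O(|\phi|)$ bits. Each action $\sigma_i$ lives in $\semi^\Gamma$, i.e. it is a function $\Gamma \to S$ with $S = \set{\omega, \i, \e, \r, \ccr\omega, \ccr, \bot}$ a fixed finite set; hence $\sigma_i$ is encodable in $O(|\Gamma|) = O(|\phi|)$ bits. Crucially, the contracted actions never grow: whenever a cycle is closed we apply $\sharp$ and then a product in $\semi^\Gamma$, both of which stay inside $\semi^\Gamma$, so the representation of $\sigma_i$ stays of fixed size regardless of how long the underlying path is. This is the key point that makes the "contraction of $S$-actions" genuinely buy us space.

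Putting these together, the whole memory is a list of at most $|\phi|+1$ pairs, each of size $O(|\phi|)$, plus the counter $m$, for a total of $O(|\phi|^2)$ bits, which is polynomial in $|\phi|$. I would also note that the transitions are generated on the fly — available transitions from the current state $p_m$ depend only on the maximal non-reduced subformula in $p_m$, which can be computed within the same space budget — so no part of $\aut_\phi$ needs to be stored beyond the current memory, and the nondeterministic guessing of transitions and control points does not add more than the list already accounts for. I do not expect a real obstacle here; the only mild subtlety is to make explicit that the $\sharp$-and-product bookkeeping keeps actions of bounded size, i.e. that the contraction invariant is maintained by every memory update, but this is immediate from $\semi^\Gamma$ being a fixed finite structure.
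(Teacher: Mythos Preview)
Your proposal is correct and follows essentially the same approach as the paper's proof: bound $m$ by $|\Gamma|+1 \leq |\phi|+1$, bound each state as a subset of (at most linearly many) subformulae, bound each action as an element of the fixed finite structure $\semi^\Gamma$, and multiply. Your additional remarks about the $\sharp$-and-product updates keeping actions inside $\semi^\Gamma$ and about on-the-fly transition generation are sound and slightly more explicit than the paper, but the core argument is the same.
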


\begin{proof}
We start by precising how the formula $\phi$ is given as input to the algorithm.
Such a formula can be represented by a tree, whose nodes are operators, and whose leaves are atoms. For instance the formula $(a\RgN b)U ((X X a) \vee (b\RgN\Omega))$ will be coded by the following tree:

\begin{center}
\begin{tikzpicture}[level distance= 1cm,level 1/.style={sibling distance=6cm},level 2/.style={sibling distance=3cm}]
  \node {$U$}
                child {node {$\RgN$}
                	child {node {$a$}}
                	child {node {$b$}}}
                child {node {$\vee$}
                    child {node {$X$} child {node {$X$} child {node {$a$}}}}
             	child {node {$\RgN$}
                	child {node {$b$}}
                	child {node {$\Omega$}}
             			}}  ;
\end{tikzpicture}
\end{center}

This way, each subformula of $\phi$ corresponds to a node in this tree. A set of subformula is therefore just a set of nodes, and every state of the automaton can be encoded by a tuple $(n_1,n_2,\dots,n_t)$, where every $n_i$ encodes the position of a node of $\phi$ (it is easy to see that such an encoding is polynomial in the size of $\phi$). 
Thus, the encoding of a state takes a polynomial space with respect to the size of the input tree, which is $|\phi|$.

We can remark that this is still true when adding pseudo-states, because those are subsets of $\sub(\phi)\cup\set{X\varphi: \varphi\in\sub(\phi)}$. Therefore, the encoding of a pseudo-state is at most twice as long as the encoding of a real state, so it still takes only polynomial space.

The encoding of an element in $|\semi|$ takes constant space, so it takes a space linear in $|\Gamma|$ to encode an element of $\semi^\Gamma$. Since every counter in $\aut_\phi$ comes from an operator $\RgN$ of $\phi$, we have $|\Gamma|\leq |\phi|$.
Therefore, each element of $\semi^\Gamma$ takes a space linear in $|\phi|$ in the memory.

Finally, we have $m\leq|\Gamma|$, so a space logarithmic in $|\phi|$ is enough to store $m$. At any time, we will have at most $m$ pseudo-states and $m$ elements of $\semi^\Gamma$ in the memory, each one taking polynomial space in $|\phi|$. We can conclude that the whole sequence occupies a space which is only polynomial in $|\phi|$.
\end{proof}
\begin{lem}
The algorithm is correct, that is to say it outputs ``unbounded'' if and only if $\nsem{\phi}$ is unbounded.
\end{lem}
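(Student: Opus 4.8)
The plan is to prove the two implications separately, after first reducing the statement to a property of the automaton $\aut_\phi$ alone. Recall that $\aut_\phi$ was built so that $\semS{\aut_\phi}\approx\nsem{\phi}$ (the argument being the $S$-analogue of the one in Section~\ref{corrB}); since $\approx$ preserves boundedness, it suffices to show that the algorithm outputs ``unbounded'' if and only if $\semS{\aut_\phi}$ is unbounded, i.e. if and only if for every $N\in\N$ there is a word on which some run of $\aut_\phi$ has all its checked values $\geq N$. Throughout I would identify a run of $\aut_\phi$ with the abstract path it follows (the letters can always be recovered, since $\aut_\phi$'s transitions are genuine labelled transitions) together with the word over $\semi^\Gamma$ of the $S$-actions along it; the \emph{value} of such a path is the least value checked when the counters are run concretely from $0$, and by Section~\ref{semSact} the element of $\semi^\Gamma$ obtained by multiplying these actions is a faithful abstraction of this concrete behaviour.

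\textbf{Soundness.} Suppose the algorithm has an accepting computation, ending with memory $0,\sigma,p_f$ where $p_f\in\Fin$ and $\sigma_\gamma\notin\set{\ccr,\ccr\omega,\bot}$ for all $\gamma$. I would unwind this computation into a \emph{parametrised path}: a path from some $p_0\in\In$ to $p_f$ in which the (properly nested) cycles that were closed during the run are marked, each to be traversed $M$ times. The key is an invariant maintained along the execution: each action $\sigma_i$ currently in the memory is the value computed in $\semi^\Gamma$ by such a parametrised sub-path between the relevant control points. This invariant is preserved by every elementary move, the only nontrivial case being the closure of a cycle, where applying $\sharp$ to the cycle's action is exactly what happens, in the limit $M\to\infty$, when that cycle is replaced by $M$ copies of itself. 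Instantiating all parameters to a common $M$ yields a genuine run $e_M$ of $\aut_\phi$; using the product/stabilisation table of $\semi$ from Section~\ref{semSact}, the absence of $\bot$ in $\sigma$ means $e_M$ performs no $\ccr$ on a provably small counter and is thus valid, while the absence of $\ccr$ and $\ccr\omega$ means every $\ccr_\gamma$ along $e_M$ is preceded by a number of increments growing with $M$ (each $\i^\sharp=\omega$ in the computation standing for a block of $\geq M$ increments). Hence the value of $e_M$ tends to infinity, so $\semS{\aut_\phi}$, and therefore $\nsem{\phi}$, is unbounded.

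\textbf{Completeness.} Conversely, assume $\semS{\aut_\phi}$ is unbounded, and for each $N$ fix a run $e_N$ whose checked values are all $\geq N$, with abstract path $\pi_N$ read as a word over $\semi^\Gamma$. By the factorisation forest theorem for stabilisation semigroups (Theorem~\ref{height}) applied to $\semi^\Gamma$, $\pi_N$ admits a factorisation tree of height at most $H$, with $H$ depending only on $\semi^\Gamma$, hence only on $\phi$; and by the compatibility axioms (Theorems~\ref{compaxioms} and~\ref{unirho}), for the factorisation parameter chosen in step with $N$ and $N$ large enough, the value this tree computes in $\semi^\Gamma$ has all its components outside $\set{\ccr,\ccr\omega,\bot}$ (otherwise $e_N$ could not achieve value $\geq N$). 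Each stabilisation node is a block of repeated occurrences of an idempotent of $\semi^\Gamma$; since $Q$ is finite, a long enough such block contains a repeated state, so I would turn it into a genuine cycle of $\aut_\phi$, obtaining a parametrised path from $\set{\phi}\in\In$ to some $p_f\in\Fin$ realising the same algebraic value, of nesting depth at most $H$. The final step is to cut the nesting depth down to $|\Gamma|$: since $\i$ is the only idempotent of $\semi$ whose $\sharp$ produces a genuinely new element ($\i^\sharp=\omega$, and $\omega^\sharp=\omega$), a stabilisation nested inside another that does not contribute a fresh $\omega_\gamma$ on a counter not already treated below it is redundant and can be removed by a local exchange argument; after this pruning at most $|\Gamma|$ nestings remain. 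Such a parametrised path is precisely the kind of witness the algorithm's nondeterminism can guess move by move, and its associated action in $\semi^\Gamma$ has no component in $\set{\ccr,\ccr\omega,\bot}$, so the acceptance test is met and the algorithm outputs ``unbounded''.

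\textbf{Main obstacle.} The hard part is the completeness direction, and within it the passage from a height-$H$ factorisation tree over $\semi^\Gamma$ to a nesting of marked cycles of depth at most $|\Gamma|$: one must argue both that blocks of repeated idempotents can be realised as real cycles of $\aut_\phi$ and, more importantly, that redundant nested stabilisations — those not producing a fresh $\omega$ on a new counter — can be pruned without spoiling the goodness of the final action. This is exactly what makes the polynomial bound $m\leq|\Gamma|+1$ on the memory sufficient, and hence what makes the algorithm complete.
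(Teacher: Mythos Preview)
Your soundness argument coincides with the paper's. For completeness you take a genuinely different route: you apply the factorisation-forest theorem directly to the action word $\pi_N$ over the action semigroup $\semi^\Gamma$, then align states at each stabilisation node by pigeonhole on $Q$. The paper instead constructs the \emph{transition} stabilisation semigroup $\semi_{\aut_\phi}$, whose elements are downward-closed subsets of $Q\times S^\Gamma\times Q$ with product and $\sharp$ defined so that state alignment is built in (specifically $E^\sharp$ already picks a triple $(q,\sigma_e,q)$ witnessing a cycle). Since $\semi_{\aut_\phi}$ with a natural ideal recognises $\semS{\aut_\phi}$, unboundedness yields a single $\sharp$-expression $e$ with $\eval(e)$ accepting, and this expression \emph{is} the nested-cycle witness, independent of any parameter~$N$. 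Both arguments then reduce to the same core lemma --- at most $|\Gamma|$ nested $\sharp$ operators can be useful, since each useful one must turn some $\i_\gamma$ into $\omega_\gamma$ --- and both prove it by the same induction on $|\Gamma|$. Your approach is more elementary in that it avoids introducing $\semi_{\aut_\phi}$; the price is that the state-alignment at nested stabilisation nodes, and the passage from a family of trees $(T_N)_N$ to one fixed parametrised path the algorithm can follow, are left a bit implicit. The paper's packaging of states with actions makes that extraction automatic.
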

\begin{proof}

It is easy to show that if the algorithm outputs ``unbounded'', then $\nsem{\phi}$ is unbounded. Indeed, the algorithm describes a path (with cycles) in $\aut_\phi$. It is straightforward to show that if every cycle is taken $n$ times, the value of the resulting run is at least $n$. Therefore, the path found by the algorithm describes a family of runs of arbitrarily high value, so we can conclude $\nsem{\phi}$ is unbounded.
\smallskip

We now show the converse: we assume $\semS{\aut_\phi}$ is unbounded, and we want to show that there exists a witness path that can be found by the algorithm.
To do this, we define for all $S$-automaton $\aut$ the stabilization semigroup $\semi_\aut=\perm{S_\aut,\cdot,\sharp,\leq}$, whose elements represent sets of partial runs of $\aut$. This construction parallels the one in \cite{Col09}.

A partial run from state $p$ to state $q$ performing global action $\sigma$ will be represented by the element $(p,\sigma,q)\in Q\times S^\Gamma\times Q$.

If $(p,\sigma,q)$ and $(p',\sigma',q')$ are two elements of $Q\times S^\Gamma\times Q$, we will say that $(p,\sigma,q)\leq(p',\sigma',q')$ if $(p,q)=(p,q')$ and $\sigma\leq\sigma'$, for the ordered on $S^\Gamma$ defined in Section \ref{semSact}.

Let $E\subseteq {Q\times S^\Gamma\times Q}$, we will denote by $E\downarrow=\set{e\leq e' : e'\in E}$ the downards-closure of $E$.
Let $S_\aut=2^{Q\times S^\Gamma\times Q}\downarrow$ be the set of downwards-closed elements of $\semi_\aut$. Each element $E$ of $S_\aut$ represents a set of runs. The downwards-closure operation reflects the fact that we consider that the automaton is allowed to perform actions that are less efficient than the real ones: it does not change its global semantic.

Product and stabilization operation in $\semi_\aut$ are defined by:
$$\begin{array}{c}
E\cdot F=\set{(p,act_1\cdot \sigma_2,r) : (p,\sigma_1,q)\in E,(q,\sigma_2,r)\in F}\downarrow\\
E^\sharp=\set{(p,\sigma_1\cdot\sigma_e^\sharp\cdot\sigma_2,r) : (p,\sigma_1,q),(q,\sigma_e,q),(q,\sigma_2,r)\in E}\downarrow.
\end{array}$$
%
Notice that each element $E$ describes a set of partial runs, and therefore, witnesses of accepting runs are described by the following subset of $S_\aut$:
$$I'=\set{E\in S_\aut : \exists (p,\sigma,q)\in E, p\in\mathit{In}, q\in\mathit{Fin}, \forall \gamma\in\Gamma,\sigma_\gamma\notin\set{\ccr,\ccr\omega,\bot}},$$ together with the morphism $h:\A\to \semi_\aut$ by $h(a)=\set{(p,\sigma,q) : (p,a,\sigma,q)\in\Delta_\aut}\downarrow$.

Since accepting ideals are defined as elements of big value, we take the accepting ideal to be $I=S_\aut\setminus I'$.

It is not hard to verify that $S_\aut, h,I$ recognizes the cost function $\semS{\aut}$ (see \cite{Col09} for more details).
Consequently, $\semS{\aut}$ is unbounded if and only if $\perm{h(\A)}^\sharp\cap I\neq\emptyset$, i.e. there is an element of $I$ that can be obtained from $h(\A)$ via product and stabilization operations. Indeed, such an element can be described by a \se\ $e$ well-formed for $\semi_\aut$, with $\eval(e)\in I$, witnessing a sequence of words $(e(n))_{n\in\N}$ of unbounded value.
We now apply this construction to the automaton $\aut_\phi$ obtained from $\phi$.

Since we assumed $\semS{\aut_\phi}$ is unbounded, there exists a \se\ $e$, well-formed for $\semi_{\aut_\phi}$, such that $\eval(e)\in I$. It remains to show that $e$ does not need more than $|\Gamma|$ nested stabilization operators.

Let us assume that $e$ contains at least $k=|\Gamma|+1$ nested stabilization operators $\sharp_1,\dots,\sharp_k$, applied to \se s $e_1,\dots, e_k$ respectively.
Let $(p_0,\sigma_f,p_f)\in \eval(e)$, witnessing the fact that $\eval(e)\in I$, that is to say $p_0\in\mathit{In}, p_f\in\Fin$, and for all $\gamma\in\Gamma, \sigma_\gamma\notin\set{\ccr,\ccr\omega,\bot}$.
We will say that a stabilization operator $\sharp_i$ is \emph{useful} if the element described by the \se\ obtained from $e$ by removing $\sharp_i$ does not contain $(p_0,\sigma_f,p_f)$.
We show by induction on $|\Gamma|$ that at least one of the operators $\sharp_1,\dots,\sharp_k$ is not useful.
If $|\Gamma|=0$, then $\phi$ is a classic LTL formula, the automaton computes the characteristic function of a regular language, and stabilization is just the identity on $\semi_{\aut_\phi}$, so no stabilization operator can be useful. We now assume $|\Gamma|\geq 1$.
Let $\sharp_k$ be the outmost stabilization operator in $e$. Therefore, we can write $e=x\cdot e_k^{\sharp_k}\cdot y$, where $x$ and $y$ are \se s. let $E=\eval(e)=\eval(x)\eval(e_k^\sharp)\eval(y)=XE_k^\sharp Y$.
We assume that $\sharp_k$ is useful (otherwise we get the wanted result).
By definition of the product of $\semi_{\aut_\phi}$, only one of the elements of $E_k^\sharp$ is used to obtain $(p_0,\sigma_f,p_f)\in E$. By definition of $\sharp$, this element is of the form $(p,\sigma,r)$, with $(p,\sigma_1,q),(q,\sigma_e,q),(q,\sigma_2,r)\in E$ and $\sigma\leq \sigma_1\cdot\sigma_e^\sharp\cdot\sigma_2$. Since $\sharp_k$ is useful, we must have $\sigma_e^\sharp\neq\sigma_e$, so there exists $\gamma\in\Gamma$ such that $(\sigma_e)_\gamma=\i$.
Moreover, by definitions of the operations of $\semi_{\aut_\phi}$, the \se\ $e_k$ can not contain any useful stabilisation on counter $\gamma$.
Therefore, we are left with the $k-1$ stabilisations in $e_k$, and $|\Gamma|-1$ available counters, since $\gamma$ is no longer influenced by stabilizations. This concludes the proof by induction.
%

We can conclude that $e$ is equivalent (with respect to $\eval$) to an \se\ $e'$ with at most $|\Gamma|$ nested stabilizations.
The fact that $\eval(e')\in I$ guarantees us the existence of a path in $\aut_\phi$ that can be found by our algorithm, since it contains at most $|\Gamma|$ nested cycles. This concludes the proof of the Lemma.
\end{proof}
\begin{thm}\label{pspace-ltlq}
Given an $\ltlq$-formula $\phi$, the problem of deciding whether $\sem{\phi}$ is bounded is PSPACE-complete.
\end{thm}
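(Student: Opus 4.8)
The plan is to establish PSPACE-membership and PSPACE-hardness separately. For membership, I would combine the previous two lemmas: the algorithm of Section~\ref{sec:algo} runs in space polynomial in $|\phi|$ (first lemma) and is correct, i.e.\ outputs ``unbounded'' exactly when $\nsem{\phi}$ is unbounded (second lemma). Since going from an $\ltlq$-formula $\varphi$ to an equivalent $\nltlq$-formula $\phi=\neg\varphi$ (pushing negations to the leaves) is linear-time, and $\nsem{\neg\varphi}\approx\sem{\varphi}$ so the two functions are bounded on the same sets of words, boundedness of $\sem{\varphi}$ reduces in polynomial time to the input of the algorithm. By Savitch's theorem the nondeterminism of the algorithm costs nothing, so the problem is in PSPACE.

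For hardness, the clean route is to reduce from satisfiability of classical LTL over finite words, which is PSPACE-complete by \cite{SC85}. Given a classical LTL formula $\theta$ (with negations), I would first rewrite it as an $\ltlq$-formula $\theta'$ with no $\UN$ operator by pushing negations to the leaves, using the dual-operator encodings from Section~\ref{ltl} (negation of $U$ via $\Omega$, negation of atomic letters, etc.); this is linear-time and preserves the recognized language, so $\sem{\theta'}=\chi_L$ where $L$ is the language of $\theta$. Now the key observation is that $\sem{\theta'}=\chi_L$ is bounded if and only if $\chi_L$ takes a finite value on every word, i.e.\ $L=\A^*$, i.e.\ $\theta$ is \emph{valid}. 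Hence $\theta$ is satisfiable iff $\neg\theta$ is not valid iff $\sem{(\neg\theta)'}$ is unbounded, and $(\neg\theta)'$ is again computable in linear time. Thus non-boundedness of an $\ltlq$-formula is PSPACE-hard, and since PSPACE is closed under complement, so is boundedness.

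The main obstacle is entirely on the membership side, and it has already been discharged by the two lemmas preceding the theorem: the delicate point is that unboundedness of $\semS{\aut_\phi}$ cannot be witnessed by a single accepting run but requires a family of runs of unbounded value, which is why the algorithm must track counter actions in the semigroup $\semi^\Gamma$ and guess nested control points; and the subtle combinatorial fact that $|\Gamma|$ nested stabilizations suffice (proved by induction on $|\Gamma|$, using that a useful stabilization ``consumes'' a counter) is what keeps the memory polynomial. Given those lemmas, the theorem itself is just the assembly: membership from the algorithm plus Savitch, hardness from the LTL-satisfiability reduction above, so the write-up is short.
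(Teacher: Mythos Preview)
Your proposal is correct and follows essentially the same approach as the paper: membership via the two preceding lemmas (negate to $\nltlq$, run the on-the-fly algorithm on the $S$-automaton), and hardness by reduction from classical LTL satisfiability using the observation that $\chi_L$ is bounded iff $L=\A^*$. Your write-up is slightly more explicit than the paper's in two harmless places---you invoke Savitch's theorem to collapse the nondeterminism and you spell out the closure of PSPACE under complement---but the argument is the same.
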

\begin{proof}
We saw that there exists a PSPACE algorithm solving this problem. We start be negating $\phi$ to obtain a formula $\phi'$ of $\nltlq$ (this is done in linear time). We then describe the transition table of the $S$-automaton $\aut_{\phi'}$, and explore this automaton on-the-fly, using only polynomial space. This way we can guess a path witnessing unboundedness of $\nsem{\phi'}$, if such a path exists.

To show that the problem is PSPACE-hard, it suffices to remark that classical LTL satisfiability is a particular case of $\ltlq$ boundedness, and that LTL satisfiability is PSPACE-hard \cite{SC85}. Indeed, if $\phi$ is a classical LTL-formula, we can see $\neg\phi$ as a formula of $\ltlq$, and we get that ``$\sem{\neg\phi}$ bounded'' is equivalent to ``$L(\phi)=\emptyset$''.
\end{proof}

We showed that generalisation of LTL into $\ltlq$ does not increase the computational complexity of the satisfiability/boundedness problem. This result is encouraging, since it allows us to treat a more general problem, without paying anything in terms of computational resources.

\section{Syntactic congruence on \ose s}\label{synccong}
We remind that as in the case of languages, stabilization semigroups recognize exactly regular cost functions, and there exists a quotient-wise minimal stabilization semigroup for each regular cost function \cite{CKL}.

In standard theory, it is equivalent for a regular language to be described by an LTL-formula, or to be recognized by an aperiodic semigroup. Is it still the case in the framework of regular cost functions? To answer this question we first need to develop a little further the algebraic theory of regular cost functions.

\subsection{Syntactic congruence}
In standard theory of languages, we can go from a description of a regular language $L$ to a description of its syntactic monoid via the syntactic congruence. Moreover, when the language is not regular, we get an infinite monoid, so this equivalence can be used to ``test'' regularity of a language.

The main idea behind this equivalence is to identify words $u$ and $v$ if they ``behave the same'' relatively to the language $L$, i.e. $L$ cannot separate $u$ from $v$ in any context : $\forall(x,y), xuy\in L\Leftrightarrow xvy\in L$.
\smallskip

The aim here is to define an analog to the syntactic congruence, but for regular cost functions instead of regular languages. 
Since cost functions look at quantitative aspects of words, the notions of ``element'' and ``context'' have to contain quantitative information : we want to be able to say things like ``words with a lot of $a$'s behave the same as words with a few $a$'s''.

That is why we will not define our equivalence over words, but over \se s, which are a way to describe words with quantitative information.

\subsection{\se s}\label{se}
We first define general \se s as in \cite{Has90} and \cite{CKL} by just adding an operator $\sharp$ to words in order to repeat a subexpression ``a lot of times''. This differs from the stabilization monoid definition, in which the $\sharp$-operator can only be applied to specific elements (idempotents).
\bigskip

The set $\expr$ of \se s on an alphabet $\A$ is defined as follows:
$$e:=a\in\A~|~ee~|~e^\sharp$$

%
%
%

If we choose a stabilization semigroup $\semi=\perm{S,\cdot,\leq,\sharp}$ together with a function $h:\A\rightarrow S$, the evaluation function $\eval$ : $\expr\to\semi$ is defined inductively by $\eval(a)=h(a), \eval(ee')=\eval(e)\cdot\eval(e')$, and $\eval(e^\sharp)=\eval(e)^\sharp$ ($\eval(e)$ has to be idempotent). We say that $e$ is \intro{well-formed for $\semi$} if $\eval(e)$ exists. Intuitively, it means that $\sharp$ was applied to subexpressions that correspond to idempotent elements in $\semi$.

If $f^\approx$ is a regular cost function, $e$ is \intro{well-formed for $f$} iff $e$ is well-formed for the minimal stabilization semigroup of $f^\approx$.

\begin{exa}\label{pair}
Let $f$ be the function defined over $\set{a}^*$ by 
$$f(a^n)=\left\{
\begin{array}{ll}
n & \text{ if }n\text{ even}\\
\infty & \text{ otherwise}
\end{array}
\right.$$

The minimal stabilization semigroup of $f^\approx$ is : $\set{a,aa,(aa)^\sharp,(aa)^\sharp a}$, with $aa\cdot a=a$ and $(aa)^\sharp a\cdot a=(aa)^\sharp$. Hence the \se~ $aaa(aa)^\sharp$ is well-formed for $f^\approx$ but the \se~$a^\sharp$ is not.
\end{exa}

The \se s that are not well-formed have to be removed from the set we want to quotient, in order to get only real elements of the syntactic semigroup. 

\subsection{\ose s}
We have defined the set of \se s that we want to quotient to get the syntactic equivalence of cost functions. However, we saw that some of these \se s may not evaluate properly relatively to the cost function $f^\approx$ we want to study, and therefore does not correspond to an element in the syntactic stabilization semigroup of $f^\approx$.
\smallskip

Thus we need to be careful about the stabilization operator, and apply it only to ``idempotent \se s''. To reach this goal, we will add an ``idempotent operator'' $\omega$ on \se s, which will always associate an idempotent element (relative to $f^\approx$) to a \se, so that we can later apply $\sharp$ and be sure of creating well-formed expressions for $f$.

We define the set $\oexpr$ of \ose s on an alphabet $\A$ :
$$E:=a\in\A~|~EE~|~E^\omega~|~E^{\omega\sharp}$$

The intuition behind operator $\omega$ is that $x^\omega$ is the idempotent obtained by iterating $x$ (which always exists in finite semigroups).

A \intro{context} $C[x]$ is a \ose~with possible occurrences of a free variable $x$.
Let $E$ be a \ose, $C[E]$ is the \ose~obtained by replacing all occurrences of $x$ by $E$ in $C[x]$, i.e. $C[E]=C[x][x\leftarrow E]$. Let $\coe$ be the set of contexts on \ose s.

We will now formally define the semantic of operator $\omega$, and use \ose s to get a syntactic equivalence on cost functions, without mistyped \se s.
\begin{defi}
If $E\in\oexpr$ and $k,n\in\N$, we define $E(k,n)$ to be the word $E[\omega\leftarrow k,\sharp\leftarrow n]$, where exponentiation is relative to concatenation of words.
\end{defi}

\begin{lem}\label{KF}
Let $F=f^\approx$ be a regular cost function, there exists $K_F\in\N$ such that for any $E\in\oexpr$, the \se~$E[\omega\leftarrow K_F!]$ is well-formed for $F$, and we are in one of these two cases 
\begin{enumerate}
\item $\forall k\geq K_F, \set{f(E(k!,n)),n\in\N}$ is bounded : we say that $E\in F^B$.
\item $\forall k\geq K_F, \lim_{n\rightarrow\infty}f(E(k!,n))=\infty$ : we say that $E\in F^\infty$.
\end{enumerate}
\end{lem}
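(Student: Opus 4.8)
The key observation is that a $\sharp$-expression $e$ is well-formed for $F$ precisely when every subexpression to which $\sharp$ is applied evaluates to an idempotent element in the minimal stabilization semigroup $\Smin$ of $F$. Given a $\omega\sharp$-expression $E$, replacing $\omega$ by any sufficiently divisible exponent makes every $E^\omega$-subexpression evaluate to the idempotent power of its base: in a finite semigroup with $m$ elements, $x^{m!}$ is idempotent for every $x$. So the plan is: first I would bound the number of elements of $\Smin$, say by $|\Smin|\le M$, and set $K_F$ to be at least $M$ (any value guaranteeing $K_F!$ is a multiple of the exponent making every element idempotent works). Then for every $k\ge K_F$, $k!$ is a multiple of $K_F!$, hence still a multiple of the required period, so each $E^\omega$-subexpression of $E[\omega\leftarrow k!]$ evaluates to an idempotent, and thus $E[\omega\leftarrow k!]$ (with the remaining $\sharp$'s intact) is well-formed for $F$. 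This handles the first assertion.

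For the dichotomy, fix $k\ge K_F$ and consider the $\sharp$-expression $e_k = E[\omega\leftarrow k!]$, which is well-formed for $F$ by the previous paragraph. It evaluates in $\Smin$ (equipped with the morphism $h$ and accepting ideal $I$) to some element $s_k = \eval(e_k)$. By Theorem~\ref{unirho} and the axioms of Theorem~\ref{compaxioms}, the value $f(E(k!,n)) = I[\rho(h(E(k!,n)))]$ is controlled, up to $\sim_\alpha$, by substituting the value $n$ into the $\sharp$-positions of $e_k$ via $\rho$: more precisely, the sequence $n\mapsto f(E(k!,n))$ is $\sim_\alpha$-equivalent to a sequence obtained from $s_k$ by the Stabilization and Substitution axioms, i.e. to a cost sequence $\tilde\rho$-computed from $e_k$. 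Such a sequence is eventually either bounded (if the ``limit element'' one reaches stays inside $I$) or tends to $\infty$ (if it leaves $I$). This is exactly the place where I would invoke that an $\alpha$-non-decreasing $S$-valued sequence $\vec a$ is, up to $\sim_\alpha$, monotone, so $I[\vec a]$ is either finite-and-bounded or $\to\infty$. Concretely: $\vec a(n) := \rho(h(E(k!,n)))$ is (up to $\sim_\alpha$, hence after a bounded number of transitivity steps by Fact~\ref{fact:transcost}) a non-decreasing sequence in $\Smin$; since $\Smin$ is finite, it stabilizes, and either its limit is in $I$ — then $f(E(k!,\cdot))$ is unbounded and $E\in F^\infty$ — or its limit is outside $I$ — then $f(E(k!,\cdot))$ is bounded and $E\in F^B$.

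Finally I would check that the case (bounded vs.\ unbounded) does not depend on $k\ge K_F$. The point is that for $k,k'\ge K_F$ the $\sharp$-expressions $E[\omega\leftarrow k!]$ and $E[\omega\leftarrow k'!]$ evaluate to \emph{the same} element of $\Smin$: each $E^\omega$-subexpression is sent to the idempotent power of its base regardless of which sufficiently divisible exponent is chosen, and $\eval$ is then determined by the semigroup structure. Hence the limit element of the sequence $\vec a$ is the same for all $k\ge K_F$, which forces a uniform dichotomy. This gives the two cases of the lemma with a single $K_F$.

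\textbf{Main obstacle.} The delicate point is not the idempotency bookkeeping (which is routine finite-semigroup algebra) but justifying rigorously that the asymptotic behaviour of $n\mapsto f(E(k!,n))$ is faithfully captured by evaluating $e_k$ in $\Smin$ together with the compatible function $\rho$ — i.e.\ relating the genuine cost function $f$ on the family of words $E(k!,n)$ to the algebraic data. This requires carefully unwinding the Substitution and Stabilization axioms of Theorem~\ref{compaxioms} along the syntax tree of $E$, tracking that only a bounded (in terms of the height of $E$, hence independent of $n$) number of $\sim_\alpha$-transitivity steps are used, so that Fact~\ref{fact:transcost} yields a single correction function valid for all $n$. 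Once that is in place, the finiteness of $\Smin$ and the monotonicity of $\alpha$-non-decreasing $S$-sequences make the dichotomy immediate.
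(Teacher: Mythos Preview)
Your proposal is correct and follows essentially the same route as the paper: choose $K_F\geq|\Smin|$ so that $x^{k!}$ is the idempotent power of $x$ for every $k\geq K_F$, making $e_k=E[\omega\leftarrow k!]$ well-formed with $\eval(e_k)$ independent of $k$; then the dichotomy is read off from whether $\eval(e_k)\in I$ or not. The only presentational difference is that the paper phrases the ``main obstacle'' you flag more directly: the syntax tree of $e_k$ \emph{is} a factorization-tree template for each word $u_n=E(k!,n)$, of height bounded independently of $n$, so one immediately gets $\rho(h(u_n))\sim_\alpha \eval(e_k)|_n(\cdot)$ without explicitly iterating the Substitution and Stabilization axioms---but this is exactly the content of your inductive unwinding, just packaged differently.
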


\begin{proof}
Let $F={f^\approx}$ be a regular cost function recognized by $\semi_F,h,I$.
Let $N=|\semi_F|$. It suffices to take $K_F\geq N$ to verify that for any $E\in\oexpr$, the \se~$E[\omega\leftarrow K_F!]$ is well-formed for $F$. Moreover, if $s\in\semi_F$,  $s^{k!}=s^{K_F!}$ for all $k\geq K_F$.
Let us show that $F^\infty\uplus F^B=\oexpr$.
Let $E\in\oexpr$, and $k\geq K_F$. 
Let $e=E[\omega\leftarrow k!]$, $e$ is well-formed for $\semi_F$.
For all $n\in\N$, let $u_n=e(n)=E(k!,n)$.
The structure of $e$ directly gives us a factorization tree for $u_n$, the height of this tree depending only on $e$.
Thus we know that there exists $\alpha$ (depending on $e$) such that $\rho(h(u_n))\sim_\alpha\eval(e)|_n\eval(u_n)$.

Therefore, $$\eval(e)\in I\Rightarrow \forall n,I[\rho(h(u_n))]\geq_\alpha n\Rightarrow\forall n, f(u_n)\geq_\alpha n\Rightarrow \lim f(u_n)=\infty$$
and  $\eval(e)\notin I\Rightarrow\forall n, I[\rho(h(u_n))]\leq \alpha(1) \Rightarrow\forall n, f(u_n)\leq \alpha(1) \Rightarrow E\in F^B$.
We get that $F^\infty=\set{E\in\oexpr,\eval(E)\in I}$ and $F^B=\set{E\in\oexpr,\eval(E)\notin I}$ which shows the result.

\end{proof}

Here, $F^B$ and $F^\infty$ are the analogs for regular cost functions of ``being in $L$'' and ``not being in $L$'' in language theory. But this notion is now asymptotic, since we look at boundedness properties of quantitative information on words. Moreover, $F^\infty$ and $F^B$ are only defined here for regular cost functions, since $K_F$ might not exist if $f$ is not regular.

\begin{defi}\label{Rf}Let $F$ be a regular cost function, 
we write $E\iffb_F E'$ if $(E\in F^B\Leftrightarrow E'\in F^B)$.
Finally we define
$$E\equiv_F E'\text{ iff }\forall C[x]\in\coe, C[E]\iffb_F C[E']$$
\end{defi}

\begin{rem}
If $u,v\in\A^*$, and $L$ is a regular language, then $u\sim_L v$ iff $u\equiv_{\chi_L}v$ ( $\sim_L$ being the syntactic congruence of $L$). In this sense, $\equiv$ is an extension of the classic syntactic congruence on languages.
\end{rem}

Now that we have properly defined the equivalence $\equiv_F$ over $\oexpr$, it remains to verify that it is indeed a good syntactic congruence, i.e. $\oexpr/{\equiv_F}$ is the syntactic stabilization semigroup of $F$.

\subsubsection{Structure of $\oexpr/{\equiv_F}$}
If $F$ is a regular cost function, let $\semi_F=\oexpr/{\equiv_F}$. We show that we can provide $\semi_F$ with a structure of stabilization semigroup $\perm{\semi_F,\cdot,\leq,\sharp}$.

If $E\in\oexpr$, let $\cl{E}$ be its equivalence class for the $\equiv_F$ relationship.
We first naturally define the stabilization semigroup operators :
$\cl{E}\cdot\cl{E'}=\cl{EE'}$ and if $\cl{E}$ idempotent we have $\cl{E}=\cl{E^\omega}$ and $(\cl{E})^\sharp=\cl{E^{\omega\sharp}}$. $\leq$ is the minimal partial order induced by the inequalities $s^\sharp\leq s$ where $s$ is idempotent, and compatible with the stabilization semigroup structure.

Let us show that these operations are well-defined :
\begin{itemize}
\item[Product] If $E_1\equiv_F E_1'$ and $E_2\equiv_F E_2'$.
By Lemma \ref{contsim} with context $xE_2$ and $E_1'x$, $E_1E_2\equiv_F E_1'E_2\equiv_FE_1'E_2'$, so $\cl{E_1E_2}=\cl{E_1'E_2'}$.
\item[Stabilization]
If $E\equiv_F E'$, by Lemma \ref{contsim} with context $x^{\omega\sharp}$, $E^{\omega\sharp}\equiv_F E'^{\omega\sharp}$, hence $\cl{E^{\omega\sharp}}=\cl{E'^{\omega\sharp}}$.
\end{itemize}

Moreover, it is easy to check that all axioms of stabilization semigroups are verified, for example $(s^\sharp)^\sharp=s^\sharp$ because for any sequence $u_n$ which is either bounded or tends towards $\infty$, $u_{n^2}$ has same nature as $u_n$.

\begin{thm}\label{congmin}
$\semi_F=\oexpr/{\equiv_F}$ is the minimal stabilization semigroup recognizing $f$.
\end{thm}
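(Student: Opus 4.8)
The plan is to establish the theorem in two halves, mirroring the classical Myhill–Nerode programme: first show that $\semi_F = \oexpr/{\equiv_F}$ genuinely recognizes $f$ (so it is \emph{a} stabilization semigroup recognizing $F$, and in particular finite), and then show it is quotient-wise minimal, i.e. that any stabilization semigroup $\semi,h,I$ recognizing $F$ admits a surjective morphism onto $\semi_F$ compatible with the accepting ideals. For the recognition part, I would take the morphism $h_F:\A\to\semi_F$ given by $h_F(a)=\cl{a}$, canonically extended to $\oexpr$ via the operations defined just before the statement, and take the accepting ideal $I_F=\set{\cl{E} : E\in F^\infty}$ — this is well-defined and $\leq$-closed precisely because of Lemma~\ref{KF} (it partitions $\oexpr$ into $F^B\uplus F^\infty$) and because $\equiv_F$ refines $\iffb_F$. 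One then needs to check that the function recognized by $\semi_F,h_F,I_F$ via a compatible function $\rho_F$ (Theorems~\ref{height}, \ref{compaxioms}) is $\approx$-equivalent to $f$: the key point is that, by the \textbf{Substitution} axiom for $\rho_F$ together with Lemma~\ref{KF}, evaluating a word $u$ through $\rho_F$ amounts, up to a correction function depending only on $|\semi_F|$, to looking at whether the $\sharp$-expressions built over $u$ land in $F^B$ or $F^\infty$ — which by construction of $F^B,F^\infty$ tracks exactly whether $f$ is bounded or tends to infinity along the corresponding cost sequence.

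For minimality, let $\semi=\perm{S,\cdot,\leq,\sharp}$ with $h:\A\to S$ and ideal $I$ recognize $F$. Every \ose{} $E$ evaluates in $\semi$ to an element $\eval_\semi(E)\in S$ (well-defined because $\omega$ picks out an idempotent — here one uses the uniform bound $K_F\geq|S|$ from Lemma~\ref{KF}, noting $s^{k!}=s^{|S|!}$ for $k\geq|S|$, so $E[\omega\leftarrow K_F!]$ is well-formed for $\semi$). I would then prove the crucial claim: if $\eval_\semi(E)=\eval_\semi(E')$ then $E\equiv_F E'$. This goes through contexts: a context $C[x]\in\coe$ also evaluates in $\semi$, and $\eval_\semi(C[E])$ depends only on $\eval_\semi(E)$, so $\eval_\semi(C[E])=\eval_\semi(C[E'])$; then by the proof of Lemma~\ref{KF} (which identifies $F^B$ with $\set{E : \eval(E)\notin I}$ and $F^\infty$ with $\set{E : \eval(E)\in I}$), $C[E]\in F^B \iff \eval_\semi(C[E])\notin I \iff \eval_\semi(C[E'])\notin I \iff C[E']\in F^B$, i.e. $C[E]\iffb_F C[E']$. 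Hence $E\mapsto \cl{E}$ factors through $\eval_\semi$, giving a well-defined map $\pi:\semi\to\semi_F$ (every element of $S$ is $\eval_\semi(E)$ for some $E$, since $\semi$ recognizes $F$ via a morphism from $\A$, hence every element is a product of generators, readable as an $\omega\sharp$-expression without $\omega$ or $\sharp$), and $\pi$ is surjective by definition of $\semi_F$. One checks $\pi$ is a morphism of stabilization semigroups (it commutes with $\cdot$, with $\sharp$ via the $\omega\sharp$ clause, and is monotone for $\leq$ since $\leq$ on $\semi_F$ was defined as the \emph{minimal} order making $s^\sharp\leq s$ hold and compatible with the structure — so any structure-compatible order, in particular the image of $\leq_\semi$, refines into it), and that $\pi^{-1}(I_F)=I$ by the $F^B/F^\infty$ characterization above. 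This exhibits $\semi_F$ as a quotient of every recognizer of $F$, which is the definition of being the minimal one.

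The main obstacle I anticipate is the recognition half — verifying $\semS{\semi_F,h_F,I_F}^\approx = F$ rigorously — because it requires carefully matching the compatible function $\rho_F$ (whose existence needs the factorization-forest theorem and which is only defined up to $\sim$) against the combinatorial definition of $F^B$ and $F^\infty$, and the correction functions have to be controlled uniformly in the input word; this is where Lemma~\ref{KF}'s uniformity in $k\geq K_F$, and the boundedness of the number of transitivity steps (as flagged in the remarks following Fact~\ref{fact:transcost}), must be invoked with care. By contrast, the minimality argument, once the claim ``$\eval_\semi(E)=\eval_\semi(E')\Rightarrow E\equiv_F E'$'' is in hand, is essentially formal, exactly as in the classical syntactic-monoid construction; Lemma~\ref{contsim} (invoked in the excerpt for well-definedness of the operations) does the bookkeeping on contexts. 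I would also remark that finiteness of $\semi_F$ follows a posteriori: since $\semi_F$ is a quotient of the finite minimal stabilization semigroup guaranteed by \cite{CKL}, or alternatively directly from the recognition part plus the fact that regular cost functions are exactly those recognized by \emph{finite} stabilization semigroups.
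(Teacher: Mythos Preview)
Your proposal is essentially correct, but it takes a noticeably longer route than the paper. The paper does \emph{not} verify directly that $\semi_F,h_F,I_F$ recognizes $F$; instead it takes the minimal stabilization semigroup $\Smin,h,I$ whose existence is already established in \cite{CKL}, and shows in one stroke that the kernel of $\eval:\oexpr\to\Smin$ coincides with $\equiv_F$. The chain
\[
E\equiv_F E' \;\Longleftrightarrow\; \forall C,\ \bigl(\eval(C[E])\in I \iff \eval(C[E'])\in I\bigr) \;\Longleftrightarrow\; \eval(E)=\eval(E')
\]
uses exactly the two ingredients you identified --- the characterization $F^\infty=\eval^{-1}(I)$ from the proof of Lemma~\ref{KF}, and the defining property of $\Smin$ that context-indistinguishable elements are equal --- and yields an isomorphism $\semi_F\cong\Smin$ directly. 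Recognition and minimality then come for free from $\Smin$.

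What your approach buys is self-containment: your minimality half in fact proves the implication $\eval_\semi(E)=\eval_\semi(E')\Rightarrow E\equiv_F E'$ for \emph{every} recognizer $\semi$, not just $\Smin$, and combined with a direct recognition argument this would make the theorem independent of \cite{CKL}. The cost is the recognition half itself, which you rightly flag as the obstacle (and which needs finiteness of $\semi_F$ before Theorem~\ref{height} can even be applied --- so you end up invoking \cite{CKL} anyway). One small slip to fix: in your minimality argument, it is not true that every element of an arbitrary recognizer $\semi$ is ``a product of generators, readable as an $\omega\sharp$-expression without $\omega$ or $\sharp$''; $h$ need not be surjective, and even the generated sub-semigroup requires $\sharp$ to reach elements like $h(a)^\sharp$. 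The standard fix is to replace $\semi$ by the sub-stabilization-semigroup generated by $h(\A)$, which still recognizes $F$ and on which $\eval_\semi$ is surjective.
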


\begin{proof}
Let $I_F=\set{\cl{E}, E\in F^\infty}$, and $h_F:\A^*\rightarrow \semi_F^*$ the length-preserving morphism defined by $h_F(a)=\cl{a}$ for all $a\in\A$ (a letter is a particular \ose).

Let $\Smin,h,I$ be the minimal stabilization semigroup recognizing $F$, as defined in appendix A.7 of \cite{CKL}. Let $\rho$ be its compatible mapping, and $\eval:\oexpr\rightarrow\Smin$ the corresponding evaluation function.
We will show that $E\equiv_F E'$ iff $\eval(E)=\eval(E')$.

We know by the proof of Lemma \ref{KF} that $E\in F^\infty\Leftrightarrow\eval(E)\in I$.
We remind that the definition of $\Smin$ is based on the fact that if two elements behave the same relatively to $I$ in any context, they are the same.
These facts give us the following sequence of equivalences :

$\begin{array}{ll}
E\equiv_F E' &\Leftrightarrow \forall C[x]\in\coe, C[E]\iffb_F C[E']\\
&\Leftrightarrow \forall C[x]\in\coe, (C[E]\in F^\infty\Leftrightarrow C[E']\in F^\infty)\\
&\Leftrightarrow  \forall C[x]\in\coe, (\eval(C[E])\in I\Leftrightarrow(\eval(C[E'])\in I)\\
&\Leftrightarrow \eval(E)=\eval(E')\\
\end{array}$

This gives a bijection between $\semi_F$ and $\Smin$ ($\eval$ function is surjective on $\Smin$, by minimality of $\Smin$). Moreover, this bijection is an isomorphism, since in both semigroups, operations are induced by concatenation and $\sharp$ on \se s.
$h$ is determined by its image on letters, so we have to define $h_F(a)=\cl{a}$ to remain coherent.
Finally, we have $\eval(E)\in I\Leftrightarrow E\in F^\infty$,
therefore the set $I_F$ corresponding to $I$ in the bijection is $I_F=\set{\cl{E},E\in F^\infty}$.

\end{proof}

\subsection{Details on \ose s}

\begin{lem}\label{contsim}
If $E\equiv_F E'$, then for any context $C_1[x]\in\coe$, $C_1[E]\equiv_F C_1[E']$.
\end{lem}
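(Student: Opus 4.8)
The plan is to exploit the single structural fact that contexts are closed under composition, and then merely unwind the definition of $\equiv_F$.

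First I would fix an arbitrary context $C_2[x]\in\coe$ and set as a goal to prove $C_2[C_1[E]]\iffb_F C_2[C_1[E']]$; since $C_2[x]$ is arbitrary in $\coe$, establishing this for every such $C_2$ is by definition exactly the statement $C_1[E]\equiv_F C_1[E']$. So the whole proof reduces to producing, for each $C_2$, an appropriate instance of the hypothesis $E\equiv_F E'$.

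Next I would form the composed context $C_3[x]:=C_2[x][x\leftarrow C_1[x]]$, obtained by plugging the context $C_1[x]$ into every occurrence of the hole of $C_2[x]$. Since $C_1[x]$ and $C_2[x]$ are both $\omega\sharp$-expressions with possible occurrences of the free variable $x$, so is $C_3[x]$, whence $C_3[x]\in\coe$. A routine induction on the structure of $C_2[x]$ on how substitution is defined on $\omega\sharp$-expressions gives the identities $C_3[E]=C_2[C_1[E]]$ and $C_3[E']=C_2[C_1[E']]$: substituting $E$ for $x$ in $C_3[x]$ first replaces the hole of $C_2$ by $C_1[x]$ and then replaces $x$ by $E$, which is the same as directly replacing the hole of $C_2$ by $C_1[E]$.

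Finally I would apply the hypothesis: since $E\equiv_F E'$, the definition of $\equiv_F$ applied to the particular context $C_3$ yields $C_3[E]\iffb_F C_3[E']$, that is $C_2[C_1[E]]\iffb_F C_2[C_1[E']]$, which is precisely what was to be shown. The only point requiring any care is the bookkeeping of iterated substitution, namely that $C_2[x][x\leftarrow C_1[x]][x\leftarrow E]=C_2[x][x\leftarrow C_1[E]]$; this is immediate from the inductive definition of substitution, so I do not expect a genuine obstacle here — the lemma is essentially the observation that composing two $\omega\sharp$-contexts again gives an $\omega\sharp$-context.
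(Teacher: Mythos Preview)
Your proof is correct and essentially identical to the paper's: both take an arbitrary outer context, compose it with $C_1$ to obtain a new context, and apply the definition of $\equiv_F$ to that composite. The only difference is notation ($C_2,C_3$ versus the paper's $C,C'$) and that you spell out the substitution bookkeeping a bit more explicitly.
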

\begin{proof}
Let $E,E'$ and $C_1[x]$ defined by the Lemma.
Let $C[x]$ be a context. We define $C'[x]=C[C_1[x]]$.
The definition of the $\equiv_F$ relation implies $C'[E]\iffb_F C'[e']$. Hence $C[C_1[e]]\iffb_F C[C_1[E']]$.

This is true for any context $C[x]$ so $C_1[E]\equiv_F C_1[E']$.

\end{proof}
\begin{prop}
The relation $\equiv_F$  does not change if we restrict contexts to having only one occurence of $x$, as it was done for $\expr$ in \cite{CKL}.
\end{prop}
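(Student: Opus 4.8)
The claim is that the equivalence $\equiv_F$ is unchanged if we only allow contexts with a single occurrence of the free variable $x$. Write $\equiv_F$ for the relation using arbitrary contexts and $\equiv_F^1$ for the restriction to one-occurrence contexts. Since every one-occurrence context is in particular a context, $E\equiv_F E'$ trivially implies $E\equiv_F^1 E'$. The content is the converse: if $E$ and $E'$ are indistinguishable by one-occurrence contexts, then they are indistinguishable by all contexts. The natural route is to \emph{not} argue about contexts directly, but to factor everything through the minimal stabilization semigroup $\Smin$ of $F$, exactly as in the proof of Theorem~\ref{congmin}.

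First I would recall, from that proof, the chain of equivalences showing $E\equiv_F E' \iff \eval(E)=\eval(E')$, where $\eval:\oexpr\to\Smin$ is the evaluation morphism into the minimal stabilization semigroup. The only place in that chain that uses contexts is the step $\forall C[x], (\eval(C[E])\in I \iff \eval(C[E'])\in I) \iff \eval(E)=\eval(E')$, and this uses that $\Smin$ is \emph{minimal}, i.e.\ two elements that behave the same relative to $I$ in every context are equal. So the whole proposition reduces to the following algebraic statement about $\Smin$: if $s,s'\in\Smin$ satisfy $(C[s]\in I \iff C[s']\in I)$ for every \emph{one-occurrence} context $C$ (evaluated in $\Smin$), then $s=s'$. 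In other words, minimality of $\Smin$ already holds with respect to one-occurrence contexts — and indeed this is exactly how $\Smin$ is built in \cite{CKL} (appendix A.7), where the Nerode-style quotient is taken with respect to single-occurrence contexts. I would cite that construction: the congruence defining $\Smin$ identifies $s,s'$ precisely when $C[s]$ and $C[s']$ lie on the same side of $I$ for all one-hole contexts, so "$s,s'$ indistinguishable by one-hole contexts" already forces $s=s'$.

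Putting the pieces together: suppose $E\equiv_F^1 E'$. For any one-occurrence context $C[x]\in\coe$ we have $C[E]\iffb_F C[E']$, hence by Lemma~\ref{KF} (the $F^B/F^\infty$ dichotomy, giving $C[E]\in F^\infty \iff \eval(C[E])\in I$) we get $\eval(C[E])\in I \iff \eval(C[E'])\in I$. Since $\eval$ is a morphism of stabilization semigroups, $\eval(C[E])$ depends only on $C$ and $\eval(E)$; so $\eval(E)$ and $\eval(E')$ are indistinguishable by one-hole contexts relative to $I$, and by the minimality construction of $\Smin$ this gives $\eval(E)=\eval(E')$. But then $\eval(C[E])=\eval(C[E'])$ for \emph{every} context $C$ (again because $\eval$ is a morphism), whence $C[E]\in F^\infty \iff C[E']\in F^\infty$, i.e.\ $C[E]\iffb_F C[E']$, for all $C\in\coe$. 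That is precisely $E\equiv_F E'$.

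**Main obstacle.** The only genuinely delicate point is pinning down that the minimal stabilization semigroup of \cite{CKL} is \emph{defined} via one-occurrence contexts, so that "one-hole indistinguishability $\Rightarrow$ equality in $\Smin$" is available for free; if instead one wanted a self-contained argument, one would have to redo the Nerode-minimization argument, showing that a multi-hole context can be simulated by composing one-hole contexts and constants — which works because $\oexpr$ is closed under substitution and $\eval$ is a morphism, so $\eval(C[E])$ for a $k$-hole context is obtained by a fixed $\Smin$-polynomial in $\eval(E)$, and such a polynomial is built from unary operations $y\mapsto a\cdot y$, $y\mapsto y\cdot a$, $y\mapsto y^\omega$, $y\mapsto y^{\omega\sharp}$ together with plugging in fixed elements, each of which is realized by a one-hole context. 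Either way the proof is short; the substance has already been done in Theorem~\ref{congmin} and in \cite{CKL}.
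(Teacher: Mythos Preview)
Your argument is correct, but the paper takes a much more elementary route. Rather than factoring through $\Smin$ and invoking Theorem~\ref{congmin} together with the one-hole construction of the minimal semigroup from \cite{CKL}, the paper simply replaces the occurrences of $x$ one at a time: given a context $C[x_1,\dots,x_k]$ with $k$ occurrences, freeze all variables but one to a fixed value to obtain a one-occurrence context, and chain
\[
C[E,\dots,E]\ \iffb_F\ C[E,\dots,E,E']\ \iffb_F\ \cdots\ \iffb_F\ C[E',\dots,E'].
\]
Each step is an application of the hypothesis $E\equiv_F^1 E'$ to a single-hole context obtained by substituting $E$ or $E'$ into the other slots. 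That is the whole proof.

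The trade-off is clear: your route explains \emph{why} the two relations coincide (both compute the kernel of $\eval$), but it spends the machinery of Theorem~\ref{congmin} and an external reference on what is, as the paper shows, a two-line argument from the definitions. Your ``self-contained alternative'' in the last paragraph is essentially the paper's proof in disguise, phrased at the level of $\Smin$-polynomials rather than directly at the level of $\iffb_F$; pushing it back down to $\oexpr$ recovers exactly the one-at-a-time replacement.
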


\begin{proof}
Let $\equiv'_F$ be the equivalence relation defined with single-variable contexts. we just need to show that $E\equiv'_F E'\implies E\equiv_F E'$ (the converse is trivial).
Let us assume $E\equiv'_F E'$, and let $C[x_1,x_2]$ be a context with two occurences of $x$, labelled $x_1$ and $x_2$.
Then $C[E]=C[x_1\leftarrow E,x_2\leftarrow E]\iffb_F C[x_1\leftarrow E,x_2\leftarrow E']\iffb_F C[x_1\leftarrow E',x_2\leftarrow E']=C[e']$. The generalization to an arbitrary number of occurences of $x$ is obvious, and we get $E\equiv_F E'$.

\end{proof}
\bigskip

\subsubsection{Growing speeds lemma}
The following lemma will be used for technical purposes in future proofs. We state it here because it is an intuitive statement which can give a better understanding of the behaviour of regular cost functions and \se s.

\begin{lem}\label{sharpfun}
Let $F=f^\approx$ be a regular cost function, and $e\in\expr$ containing $N$ $\sharp$-operators $\sharp_1,\dots,\sharp_N$. For all $i\in\set{1,\dots,N}$, let $\sigma_i$ be a function $\N\rightarrow\N$ with $\sigma_i(n)\rightarrow\infty$.
Then $$f(e[\sharp_i\leftarrow\sigma_i(n)\text{ for all }i])\rightarrow\infty\Leftrightarrow f(e(n))\rightarrow\infty.$$
In other words, we can replace some $n$ exponents by any function $\sigma(n)\rightarrow\infty$ when approximating a \se~by a sequence of words. It does not change the nature of the sequence relatively to $f$.
\end{lem}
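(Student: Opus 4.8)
The plan is to reduce the statement to an instance of the ``substitution'' behaviour of compatible functions (Theorem \ref{compaxioms}) combined with Theorem \ref{height}, by an induction on the nesting structure of the $\sharp$-operators in $e$. First I would fix the minimal stabilization semigroup $\semi_F$ of $F$ together with its morphism $h:\A\to\semi_F$ and a compatible function $\rho$, and recall that for a given $\sharp$-expression $e$ the word $e(n)$ (resp. $e[\sharp_i\leftarrow\sigma_i(n)]$) comes with a canonical factorization tree whose shape is dictated by the syntax tree of $e$: each $\sharp$-node becomes a node with the corresponding number of children. Because the height of this tree is bounded by the nesting depth of $e$ (a constant independent of $n$), Theorem \ref{compaxioms} (via its \textbf{Substitution} and \textbf{Stabilization} clauses) yields a single correction function $\alpha$, depending only on $e$, such that $\rho(h(e(n)))\sim_\alpha \eval(e)\big|_n\cdot\iota(n)$ where $\iota(n)$ collects the ``below threshold'' contributions, and similarly for the substituted versions. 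The key point is that whether $f(w_n)\to\infty$ or stays bounded is read off, via the ideal $I$, from whether the limiting idempotent/stabilized value lies in $I$ or not, and the threshold argument identifies exactly which $\sharp$-exponents are ``active'' in pushing the value into $I$.

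The core of the argument is then the following observation, which I would state as a claim and prove by induction on $N$: for each $\sharp_i$, going from exponent $n$ to exponent $\sigma_i(n)$ with $\sigma_i(n)\to\infty$ does not change, for $n$ large enough, which idempotents in the factorization tree get replaced by their stabilization. Concretely, at the node corresponding to $\sharp_i$ the number of children is $n$ (resp. $\sigma_i(n)$); for the automaton-threshold $H$ (or the semigroup size $N=|\semi_F|$) both $n\geq H$ and $\sigma_i(n)\geq H$ hold eventually, so the node evaluates to $e_i^\sharp$ in both cases; conversely if the node were to stay ``idempotent'' we would need $n\le H$, which fails eventually, and likewise $\sigma_i(n)\le H$ fails eventually. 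Hence for large $n$ the two words $e(n)$ and $e[\sharp_i\leftarrow\sigma_i(n)\text{ for all }i]$ admit factorization trees with the \emph{same} sequence of stabilization decisions, therefore the same limiting value $\eval(e)\in\semi_F$ (up to the $\sim_\alpha$ slack), and therefore the same membership status in $I$. By the proof of Lemma \ref{KF}, $\eval(e)\in I$ is equivalent to $f$ of the corresponding sequence tending to $\infty$, and $\eval(e)\notin I$ is equivalent to boundedness; this gives the claimed equivalence
$$
f\big(e[\sharp_i\leftarrow\sigma_i(n)\text{ for all }i]\big)\to\infty \iff f(e(n))\to\infty .
$$

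I expect the main obstacle to be the bookkeeping of the correction function $\alpha$ when several $\sharp$-operators are nested and the functions $\sigma_i$ grow at wildly different (but still unbounded) rates: one must check that the ``$\sim_\alpha$'' relating $\rho(h(w_n))$ to the structured value is uniform in $n$ and does not itself depend on the $\sigma_i$, only on the syntactic shape of $e$. The clean way around this is to note that $\alpha$ is extracted from Theorem \ref{compaxioms} purely from the \emph{shape} of the factorization tree (bounded height $H$ from Theorem \ref{height}), and the shape is the same for $e(n)$ and for every substituted variant, so the same $\alpha$ works throughout; the only thing that varies is the branching number at each $\sharp$-node, which is exactly what the threshold argument handles. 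A secondary, purely technical point is that $\sigma_i$ need not be monotone, so ``for $n$ large enough'' must be read as ``for all $n$ outside a finite set'', which is harmless since we only care about the $n\to\infty$ behaviour. Once these two points are dispatched, the equivalence follows immediately.
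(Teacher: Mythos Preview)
Your plan has the right overall shape (use the factorization tree dictated by the syntax of $e$ and read off the asymptotic behaviour from $\eval(e)$), but the core step confuses two unrelated parameters and this makes the argument, as written, break down.

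You write that ``for the automaton-threshold $H$ (or the semigroup size $N=|\semi_F|$) both $n\geq H$ and $\sigma_i(n)\geq H$ hold eventually, so the node evaluates to $e_i^\sharp$ in both cases''. But in the definition of an $m$-tree, the threshold that decides whether a node with $k$ equal idempotent children is \emph{idempotent} ($k\le m$) or \emph{stabilization} ($k>m$) is the argument $m$ of $\rho(\cdot)(m)$, not the height bound $H$ of Theorem~\ref{height} and not $|\semi_F|$. A node does not ``evaluate to $e_i^\sharp$'' once and for all: $\rho(w)$ is a map $\N\to S$, and $f(w)=I[\rho(w)]$ is the first $m$ at which this map leaves $I$. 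Consequently the two canonical trees over $e(n)$ and over $e_n$ do \emph{not} have ``the same sequence of stabilization decisions'': at threshold $m$ the $\sharp_i$-node of the first tree is a stabilization node iff $n>m$, whereas for the second it is iff $\sigma_i(n)>m$, and these disagree for many $m$. Your argument really only establishes that for each fixed $m$, once $n$ is large enough, both trees evaluate to $\eval(e)$ at threshold $m$; that suffices for the direction $\eval(e)\in I\Rightarrow f(e_n)\to\infty$, but to show the bounded case you need a \emph{uniform} $M$ with $\rho(e_n)(M)\notin I$ for all $n$, and a pointwise statement does not give that. Invoking ``the proof of Lemma~\ref{KF}'' for the $e_n$ sequence is circular: that proof is written for the uniform exponent $n$, and extending it to the $\sigma_i$-substituted sequence is exactly the content of the present lemma.

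The paper's argument avoids this by never fixing the threshold. It proves $\rho(e_n)\sim\rho(e(n))$ as $S$-valued cost sequences, by induction on the number of $\sharp$-operators: peel off an outermost $\sharp$, write $e=r\,s^{\sharp_N}\,t$, and assemble a single tree over $e_n$ from the inductively obtained bounded-height trees over $r_n,s_n,t_n$. The resulting tree mixes the thresholds $n$ and $\sigma_N(n)$ at different nodes, which is precisely what produces a controlled over/under-approximation and hence a $\sim_\alpha$-equivalence with $\alpha$ depending only on the shape of $e$. If you want to repair your route, the fix is to keep the threshold $m$ as a free variable throughout and compare the values of the two canonical trees as \emph{functions of $m$} (they both start at $\eval(e)$ for small $m$ and reach the stabilization-free product for large $m$, with transition points at $n$ versus at the $\sigma_i(n)$); once you do this carefully you are essentially reproducing the paper's induction.
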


\begin{proof}

This result is intuitive : since we always work up to cost equivalence, growing at different speeds has an effect on correction functions, but not on qualitative behaviour.

We will use notation $\bowtien$ : $g_1(n)\bowtien g_2(n)$ means ``$g_1(n)$ is bounded iff $g_2(n)$ is bounded''. Remark that here all functions will either be bounded or tend towards $\infty$, thanks to the constraint that \se s are well-formed for $\semi_F$. 
 %
%

We will note $e_n=e[\sharp_i\leftarrow\sigma_i(n)\text{ pour tout }i]$. We want to show that $f(e_n)\bowtien f(e(n))$.
Let $\rho$ be compatible with $\semi_F$.

We want to show that there is $\alpha$ such that for all $n$, $\rho(e_n)\sim_\alpha \rho(e(n))$.
We proceed by induction on $N$.
If $N=0$, then $e_n=e(n)$ and the result is trivial.

We suppose the result true for $k<N$, with function $\alpha_<$. Let $\sharp_N$ be an outmost stabilization operator (i.e. not nested in an other $\sharp$).
We can write $e=rs^{\sharp_N}t$, with $r,s,t\in\expr$, well-formed for $\semi_F$, and $\eval(s)\in E(\semi_F)$.
%

By induction, there are $n$-trees of bounded height, and of value $\rho(r(n))$, $\rho(s(n))$ and $\rho(t(n))$ over $r_n$, $s_n$ and $t_n$ respectively.
We can combine these trees by two binary nodes, and by a node which is either idempotent of stabilizing, in the following way:te :

\begin{center}
\begin{tikzpicture}[ level 1/.style={sibling distance=5cm},
 level 2/.style={sibling distance=3cm},
  level 3/.style={sibling distance=2.5cm}]
\node{$\rho(e(n))$}
child {node {$\rho(r(n))\rho(s^\sharp(n))$}
	child {node {$\rho(r(n))$} child  {coordinate (a1)} child {coordinate (a2)}}
	child {node{$\rho(s^\sharp(n))$} child {coordinate (b1)} child {coordinate (b2)}}
}
child {node {$\rho(s(n))$} child {coordinate (c1)} child {coordinate (c2)}}
;
\draw (a1) -- (a2) node[midway,below]{$r_n$};
\draw (b1) -- (b2) node[midway,below]{$(s_n)^{\sigma_N(n)}$};
\draw (c1) -- (c2) node[midway,below]{$t_n$};
\end{tikzpicture}
\end{center}
The tree that we obtain can use sometimes $n$, sometimes $\sigma_N(n)$ as a threshold. It will be either an over-approximation or an under-approximation of the value of $f(n)$, with an error controlled by $\sigma_N$.
Thus the sequence of values generated at the root is $\sim$-equivalent to $\rho(e(n))$, wihle the word $e_n$ is always the leaf words. This concludes the proof of $\rho(e_n)\sim\rho(e(n))$
\end{proof}
\subsection{Case of unregular cost functions}

The syntactic congruence can still be defined on unregular languages, and the number of equivalence classes becomes infinite, whereas a priori, we need cost functions to be regular to define their syntactic congruence.

Here, if $F=f^\approx$ is not regular, $\equiv_F$ may not be properly defined, since we use the existence of the minimal stabilization semigroup of $F$ to give a semantic to the operator $\omega$.
But we can go back to \se s and define $\sim_F$ on $\expr$ for all $f$ in the following way : $e\sim_F e'$ if for any context $C[x]$ on \se s, the set $\set{f(C[e])(n), n\in\N}$ is bounded iff $\set{f(C[e'])(n), n\in\N}$ is bounded.

In this way if $F$ is regular, then for all $e,e'\in\expr$, $e\sim_F e'$ iff $e[\sharp\leftarrow\omega\sharp]\equiv_F e'[\sharp\leftarrow\omega\sharp]$.
In particular $\expr/{\sim_F}$ is bigger than $\oexpr/{\equiv_F}$ when $f$ is regular : there might be equivalence classes corresponding to \se s that are not well-formed for $F$.

However, if $F$ is not regular, $\expr/{\sim_F}$ is not infinite in general (this differs from the results in language theory).

\begin{exa}
Let $f(u)=\min_{e\in\expr}\set{|e|, \exists n\in\N, u=e(n)}$, and $F=f^\approx$, there is only one equivalence class for $\sim_F$, because $f(C[e](n))$ is always bounded by $|C[e]|$. So $\expr/{\sim_F}$ has only one element, and therefore cannot contain a stabilization semigroup computing $F$. This gives us a proof that $F$ is not regular.
\end{exa}

\section{Expressive power of $\ltlq$}

If $F$ is a regular cost function, we will call $\semi_F$ the syntactic stabilization semigroup of $F$.

A finite semigroup $\semi=\perm{S,\cdot}$ is called \intro{aperiodic} if $\exists k\in\N,\forall s\in\semi,s^{k+1}=s^k$. The definition is the same if $\semi$ is a finite stabilization semigroup.

\begin{rem}
For a regular cost function $F$, the statements ``$F$ is recognized by an aperiodic stabilization semigroup'' and ``$\semi_F$ is aperiodic'' are equivalent, since $\semi_F$ is a quotient of all stabilization semigroups recognizing $F$.
\end{rem}

\subsection{From $\ltlq$ to Aperiodic Stabilization Semigroups}

\begin{thm}\label{ltlap}
Let $F$ be a cost function described by a $\ltlq$-formula, then $F$ is regular and the syntactic stabilization semigroup of $F$ is aperiodic.
\end{thm}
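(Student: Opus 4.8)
The plan is to mimic Schützenberger's classical argument: show by structural induction on the $\ltlq$-formula $\varphi$ that the syntactic stabilization semigroup $\semi_{\sem{\varphi}^\approx}$ is aperiodic. The regularity part is already established by Theorem~\ref{thm:ltlreg}, so the real content is aperiodicity. By the remark preceding the theorem, it suffices to exhibit \emph{some} aperiodic stabilization semigroup recognizing $\sem{\varphi}^\approx$, since $\semi_F$ is a quotient of any such semigroup. So I would aim to build, by induction on $\varphi$, an aperiodic stabilization semigroup $\semi_\varphi$ together with a morphism $h_\varphi$ and an ideal $I_\varphi$ recognizing $\sem{\varphi}^\approx$.

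The base cases ($a$, $\Omega$) give two- or three-element semigroups that are trivially aperiodic. For the inductive step I would handle the connectives: $\wedge$ and $\vee$ correspond to taking (a suitable ordered version of) a product of the semigroups for the subformulas, using that $\sem{\varphi\wedge\psi}=\max(\sem\varphi,\sem\psi)$ and $\sem{\varphi\vee\psi}=\min(\sem\varphi,\sem\psi)$, and aperiodicity is preserved by finite products; $X\varphi$ is a ``delay'' construction which does not disturb aperiodicity. The cases of $U$ and $\UN$ are the heart of the matter. For the language operator $U$, the classical proof that star-free/aperiodic languages are closed under the until construction should carry over essentially unchanged, since $\sem{\varphi U\psi}$ is the characteristic function of a star-free language built from the languages of $\varphi$ and $\psi$. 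The genuinely new case is $\UN$: here I would use the $\ltlq$-to-$B$-automaton translation of Section~\ref{sec:ltlBaut} together with an algebraic counterpart — that is, compute the transition semigroup of $\aut_\varphi$ (in the stabilization-semigroup sense, as in the $\semi_\aut$ construction used in Section~\ref{sec:algo}) and verify it is aperiodic. The key point is that the ``mistake-counting'' behaviour of $\UN$ only adds counter actions (increments, resets) along $\varepsilon$-transitions, and composing such actions, together with the $\sharp$-operator on idempotents, never creates a nontrivial periodicity: the monoid of $S$-actions (or $B$-actions) described in Section~\ref{semSact} is itself aperiodic, and the loops it can form are all ``stabilizing''.

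An alternative, cleaner route — and the one I would ultimately prefer — is to use the syntactic congruence $\equiv_F$ of Section~\ref{synccong} directly: show by induction on $\varphi$ that for $F=\sem{\varphi}^\approx$, every element $s$ of $\semi_F=\oexpr/{\equiv_F}$ satisfies $s^{k+1}=s^k$ for $k$ large enough, by analyzing how a context $C[x]$ with a high power of $x$ interacts with the boundedness test $\iffb_F$. The Growing Speeds Lemma (Lemma~\ref{sharpfun}) is tailor-made for this: it lets me ignore the precise exponents and argue purely about whether a sequence of words is bounded or tends to infinity, which is exactly the qualitative information that $U$ and $\UN$ formulas are insensitive to in an aperiodic way (no formula of $\ltlq$ can ``count modulo $p$''). \textbf{The main obstacle} I anticipate is the $\UN$ case: I must show that inserting $N$ permitted mistakes does not let the formula distinguish $x^k$ from $x^{k+1}$ for large $k$ — intuitively clear, since mistakes accumulate additively and the semantics takes an infimum over $N$, so $p$-periodic phenomena cannot be detected, but making this rigorous requires carefully tracking the interaction between the $\sharp$/$\omega$ operators on expressions and the mistake counters, and confirming that a well-formed \ose\ built around a power $x^k$ falls into $F^B$ resp.\ $F^\infty$ in a way that stabilizes in $k$.
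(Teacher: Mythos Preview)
Your preferred second route---working directly with the syntactic congruence $\equiv_F$ and showing by induction on $\varphi$ that $E^{k+1}\equiv_{\sem{\varphi}^\approx}E^{k+2}$ for all \ose s $E$ and large enough $k$---is exactly what the paper does, and the Growing Speeds Lemma is indeed the main technical tool.

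One genuine error in your first route, and a miscalibration of difficulty in the second, are worth flagging. In the first route you write that for $\varphi U\psi$ ``the classical proof\ldots should carry over essentially unchanged, since $\sem{\varphi U\psi}$ is the characteristic function of a star-free language built from the languages of $\varphi$ and $\psi$.'' This is false: the subformulas $\varphi,\psi$ may themselves contain $\UN$ operators, so $\sem\varphi$ and $\sem\psi$ are arbitrary $\ltlq$-definable cost functions, not characteristic functions, and $\sem{\varphi U\psi}$ is a genuine cost function. There is no language-theoretic shortcut here; the $U$ case is already fully quantitative. Relatedly, your plan to read aperiodicity off the transition stabilization semigroup $\semi_{\aut_\phi}$ is not obviously sound: aperiodicity of the action semigroup $\semi^\Gamma$ does not by itself give aperiodicity of $\semi_{\aut_\phi}$, which depends on the state structure of $\aut_\phi$ as well.

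The miscalibration is that you identify $\UN$ as ``the main obstacle.'' In the paper's proof the hard work is concentrated in the $U$ case: one fixes a context $C[x]$, assumes $\sph(C[E^{k+1}](K!,n))$ is bounded, locates (after extracting subsequences) the position $p$ in the \se\ where the $\psi$-witness begins, and then runs a three-way case analysis on whether $p$ lies before, after, or inside the block of $E$'s, each time using Lemma~\ref{sharpfun} and the inductive hypotheses for $\varphi$ and $\psi$ to transport the bound from $C[E^{k+1}]$ to $C[E^{k+2}]$. The $\UN$ case is then dispatched in a couple of lines as a variant of this argument: adding one more copy of $E$ can at most double the number of mistakes (because at most two positions of $v_n$ are mapped to the same position of $u_n$), which only affects the correction function, not boundedness. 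So when you flesh out the second route, plan to spend your effort on $U$, not on $\UN$.
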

The proof of this theorem will be the first framework to use the syntactic congruence on cost functions.


%
%

\begin{proof}

We want to show that for all $\ltlq$-formula $\phi$, $\ssp$ is aperiodic.

We proceed by an induction on $\phi$ and use the characterization of $\ssp$ provided by Theorem \ref{congmin}.

\subsubsection{Case $\phi=a$}~

We have $S_{\sphapp}=\set{a,b}$ with $a\cdot b=a\cdot a=a$, and $b\cdot a=b\cdot b=b$, it is aperiodic (also trivial if $\phi=\neg a$).

\subsubsection{Case $\phi=\Omega$}~

Then $S_{\sphapp}=\set{1,a}$ with $1$ neutral element and $a\cdot a=a$, it is aperiodic.

\subsubsection{Case $\phi=\varphi_1\wedge\varphi_2$ or $\phi=\varphi_1\vee\varphi_2$}~

$\phi$ is recognized by the product semigroup of $\semi_{\sem{\varphi_1}}$ and $\semi_{\sem{\varphi_2}}$, which is aperiodic by induction hypothesis. 

\subsubsection{Case $\phi=X\psi$}~

We know by induction hypothesis that $\semi_{\sem\psi^\approx}$ is aperiodic, so there exists $k\in\N$ such that for any \ose~$E$, $E^k\equiv_{\sem{\psi}}E^{k+1}$. We want to show that it is also true for $\sph$.
Let $E$ be a \ose, and $e=E[\omega\leftarrow\max(K_{\sph^\approx}!,K_{\sem\psi^\approx}!)]$ (from Lemma \ref{KF}).

We want to show that $E^{k+2}\equiv_{\sphapp}E^{k+1}$ i.e. for any context $C[x],$ 
$$\sph(C[e^{k+2}](n))\bowtien\sph(C[e^{k+1}](n)).$$

Let $C[x]$ be a context.
\begin{itemize}
\item If $C[x]=aC'[x]$, then by proposition \ref{contsim} with context $xe$: $$\sph(C[e^{k+2}](n))=\sem{\psi}(C'[e^{k+2}](n))\bowtien\sem{\psi}(C'[e^{k+1}](n))=\sph(C[e^{k+1}](n)).$$

\item If the beginning of $C[x]$ is a letter $a$ under (at least) a $\sharp$, we have a context $C'[x]$ such that for any \se~$e'$, $C[e'](n+1)=aC'[e'](n)$.
For instance if $C[x]=((ax)^\sharp b)^\sharp$ then $C'[x]=x(ax)^\sharp b((ax)^\sharp b)^\sharp$.
 Then we can write $\sph(C[e^{k+2}](n+1))=\sem{\psi}(C'[e^{k+2}](n))\bowtien\sem{\psi}(C'[e^{k+1}](n))=\sph(C[e^{k+1}](n+1))$.

\item Finally, if  $C[x]$ starts with $x$ (possibly under $\sharp$), we expand $x$ in $ex$ in $C[x]$, so that it does not start with $x$ anymore. As before we can get $C'[x]$ such that $C[e^{k+1}](n+1)=aC'[e^k](n)$ and $C[e^{k+2}](n+1)=aC'[e^{k+1}](n)$ for all $n$, hence 
$$\begin{array}{ll}
\sph(C[e^{k+2}](n+1))&=\sph(aC'[e^{k+1}](n))\\
&=\sem{\psi}(C'[e^{k+1}](n))\\
&\bowtien\sem{\psi}(C'[e^{k}](n))\\
&=\sph(aC'[e^{k}](n))\\
&=\sph(C[e^{k+1}](n+1))\\
\end{array}$$
\end{itemize}

\subsubsection{Case $\phi=\varphi U\psi$}~

we know by induction hypothesis that $\semi_{\sem{\varphi}^\approx}$ and $\semi_{\sem{\psi}^\approx}$ are aperiodic, so there exists $k\in\N$ such that for any \ose~$E$, $E^k\equiv_{\sem{\varphi}}E^{k+1}$ and $E^k\equiv_{\sem{\psi^\approx}}E^{k+1}$. 
Let $E$ be a \ose.
We will show that  $E^{k+1}\equiv_{\sphapp}E^{k+2}$

Let $C[x]$ be a context in $\coe$, $K=\max(K_{\sem{\varphi}^\approx},K_{\sem\psi^\approx})$, $u_n=C[E^{k+1}](K!,n)$ and $v_n=C[E^{k+2}](K!,n)$.
We want to show that $C[E^{k+1}]\iffb_{\sphapp}C[E^{k+2}]$, i.e. $\sph(u_n)\bowtien\sph(v_n)$.
Assume for example that $\sph(u_n)$ is bounded by $m$
We have $u_n,m\models\phi$ for all $n$.
We can write $u_n=y_nz_n$ with $z_n,m\models\psi$ and for any strict suffix $y_n^i$ of $y_n$, $y_n^iz_n,m\models\varphi$. Let $p_n$ be the starting position of $z_n$ (position $0$ being the beginning of the word). We define $y_n^i$ to be the suffix of $y_n$ starting at position $i$ for all $i\in\sem{0,p-1}$. In this way $y_n^0=y_n$.
\begin{center}
\begin{tikzpicture}

\def\l{8}
\def\n{10}
\def\nm{9}

\draw (0,0) -- (\l,0);
\foreach \y in {0, \l*.7, \l}{
\draw (\y,-.2) -- (\y,.2);
}

\foreach \x in  {1,...,\nm}{
\node at (\l*.7*\x/\n,.5){$\varphi$};
}
\node at (\l*.7,.5){$\psi$};

\node at (-.5,0){$u_n:$};
\node at (\l*.7/2,-.5){$y_n$};
\node at (\l*.7,-.4){$p_n$};
\node at (\l*.85 ,-.5){$z_n$};
\end{tikzpicture}
\end{center}

Let us focus on the position $p_n$ of the beginning on $z_n$. The \se~$e=C[E^{k+1}](K!)$ is finite so we can extract a sequence $u_{\delta(n)}$ from $u_n$ such that the beginning position $p_{\delta(n)}$ of $z_{\delta(n)}$ corresponds to the same position $p$ in $e$.
Let $\set{e_j,j\in J}$ be the finite set of $\sharp$-expression such that $e_j^\sharp$ contains position $p$ in $e$.
We choose $J=\set{1,r}$ with $1\leq j<j'\leq r$ implies $e_j^\sharp$ is a subexpression of $e_{j'}$.
For convenience, we label the $\sharp$-operator of $e_j^\sharp$ with $j$.
Note that $J$ can be empty, if $p$ does not occur under a $\sharp$ in $e$.

We denote by $\overleftarrow{f_j}({\delta(n)})$ the number of occurences of $e_j({\delta(n)})$ (coming from the corresponding $e_j^\sharp$) in $y_{\delta(n)}$ and  we define $\overrightarrow{f_j}({\delta(n)})$ in the same way relatively to $z_{\delta(n)}$. We have for all $n\in\N$, ${\delta(n)}-1\leq\overleftarrow{f_j}({\delta(n)})+\overrightarrow{f_j}({\delta(n)})\leq {\delta(n)}$. The ${\delta(n)}-1$ lower bound is due to the fact than $p$ can be in the middle of one occurence of $e_j$, therefore this occurence does not appear in $y_{\delta(n)}$ nor in $z_{\delta(n)}$.

This implies that for each $j\in J$, we are in one of these three cases: 
\begin{itemize}
\item $j\in J_1$: $\overleftarrow{f_j}(\delta(n))$ is unbounded and $\overrightarrow{f_j}(\delta(n))$ is bounded.
\item $j\in J_2$: $\overleftarrow{f_j}(\delta(n))$ is bounded and $\overrightarrow{f_j}(\delta(n))$ is unbounded.
\item $j\in J_3$: $\overleftarrow{f_j}(\delta(n))$ and $\overrightarrow{f_j}(\delta(n))$ are both unbounded .
\end{itemize}

But $J$ is finite, hence we can extract $\sigma(n)$ from $\delta(n)$ such that for each $j\in J$: 
\begin{itemize}
\item If $j\in J_1$, $\overleftarrow{f_j}(\sigma(n))\rightarrow\infty$ and $\overrightarrow{f_j}(\sigma(n))$ is constant.
\item If $j\in J_2$, $\overleftarrow{f_j}(\sigma(n))$ is constant and $\overrightarrow{f_j}(\sigma(n))\rightarrow\infty$.
\item If $j\in J_3$, $\overleftarrow{f_j}(\sigma(n))\rightarrow\infty$ and $\overrightarrow{f_j}(\sigma(n))\rightarrow\infty$.
\end{itemize}
Remark that if $j<j'$ and $\overrightarrow{f_j}\circ\sigma\neq 0$, then $j\notin J_1$.
Symmetrically, if $j<j'$ and $\overleftarrow{f_j}\circ\sigma\neq 0$, then $j\notin J_2$.

We can distinguish three cases for the position of $p$ in $e=C[E^{k+1}](K!)$:
\bigskip

\textbf{First case}: $p$ is before the first occurence of $E$ in $e$.
\begin{center}
\begin{tikzpicture}

\def\l{10}
\def\n{8}

\draw (0,0) -- (\l,0);
\foreach \y in {0,\l}{
\draw (\y,-.2) -- (\y,.2);
}

\foreach \x in {1,...,\n}{
\pgfmathparse{\l*(\x+2)/(\n+5)}
\let\y\pgfmathresult
\draw (\y,-.2) -- (\y,.2);
\ifthenelse{\x<\n}
{
\pgfmathparse{\l*(\x+2.5)/(\n+5)}
\let\z\pgfmathresult
\node at (\z,-.3){$E$};
}
{}
}

\pgfmathparse{\l*.1}
\let\y\pgfmathresult
\draw (\y,-.2) -- (\y,.2);
\node at (\y,-.4){$p$};

\node at (-.5,0){$e:$};

\end{tikzpicture}
\end{center}
We consider $C'[x]\in\coe$ obtained from $C[x]$ by replacing $\sharp_j$ by the constant value of $\overrightarrow{f_j}(\sigma(n))$ for all $j\in J_1$.
We have $\sem\psi(z_{\sigma(n)})\leq m$ for all $n$, but by Lemma \ref{sharpfun}, $\sem\psi(z_n)$ is bounded iff $C'[E^{k+1}]\in(\sem\psi^\approx)^B$. By induction hypothesis, $C'[E^{k+1}]\in\sem\psi^B\Leftrightarrow C'[E^{k+2}]]\in(\sem\psi^\approx)^B$.
Let $z'_n$ be the suffix of $C[E^{k+2}](K!,n)$ starting at position $p_n$.
By reusing Lemma \ref{sharpfun}, we get that $\sem\psi(z'_{\sigma(n)})\leq m'$ for some $m'$.

We still have to show that there exists a constant $M$ such that $\sem{\varphi}(y^i_{\sigma(n)}z'_{\sigma(n)})\leq M$ for all $n$ and all $i\in\sem{1,p_{\sigma(n)}}$ (the $y^i_{\sigma(n)}$ are not affected by the change from $E^{k+1}$ to $E^{k+2}$).
Let us call $g^i_{\sigma(n)}=\sem{\varphi}(y^i_{\sigma(n)}z'_{\sigma(n)})$ for more lisibility.
Let us assume that no such $M$ exists, then $\set{g^i_{\sigma(n)}:n\in\N, 1\leq i\leq p_{\sigma(n)}}$ is unbounded.
For all $n$, we define $i_n$ such that $g^{i_{\sigma(n)}}_{\sigma(n)}=\max\set{g^i_{\sigma(n)}: 1\leq i\leq p_{\sigma(n)}}$.
By construction, the sequence $g^{i_{\sigma(n)}}_{\sigma(n)}=\sem{\varphi}(y^{i_{\sigma(n)}}_{\sigma(n)}z'_{\sigma(n)})$ is unbounded. We first extract $\sigma'(n)$ from $\sigma(n)$ such that $g^{i_{\sigma'(n)}}_{\sigma'(n)}\to\infty$.

We can now repeat the same process as before to extract a sequence $\gamma(n)$ from $\sigma'(n)$, such that the starting positions of $y^{i_{\gamma(n)}}_{\gamma(n)}$ for all $n$ correspond to the same position in $e$, and such that there exists a context $C''[x]$ with $\sem{\varphi}(y^{i_{\gamma(n)}}_{\gamma(n)}z_{\gamma(n)})\bowtien\sem{\varphi}(C''[E^{k+1}](K!,\gamma(n)))$ (by Lemma \ref{sharpfun} again).
By adding an extra $E$ (from $k+1$ to $k+2$) and changing $z$ by $z'$ (the $y$ factors are not concerned by occurences of $E$), we get $g^{i_{\gamma(n)}}_{\gamma(n)}\bowtien\sem{\varphi}(C''[E^{k+2}](K!,\gamma(n)))$.
By hypothesis, $\sem{\varphi}(y^{i_{\gamma(n)}}_{\gamma(n)}z_{\gamma(n)})$ bounded by $m$, and  $C''[E^{k+1}]\iffb_{\sem{\varphi}}C''[E^{k+2}]$, so $g^{i_{\gamma(n)}}_{\gamma(n)}$ is bounded, but we already know that $g^{i_{\gamma(n)}}_{\gamma(n)}\to\infty$. We have a contradiction, so $M$ must exist.

We finally obtain the existence of $M$ such that for all $n$ and valid $i$,  $\sem{\varphi}(y^i_{\sigma(n)}z'_{\sigma(n)})\leq M$. This together with the previous result on $\psi$ gives us that $\sem{\varphi U\psi}(C[E^{k+2}](K!,n))\leq\max(m',M)$. We got $C[E^{k+1}]\in\sph^B\implies C[E^{k+2}]\in{\sphapp}^B$. The other direction works exactly the same, by removing one $E$ instead of adding one. Hence we have $C[E^{k+1}]\iffb_{\sphapp}C[E^{k+2}]$.
\bigskip

\textbf{Second case}: $p$ is after the last occurence of $E$ in $e$.
\begin{center}
\begin{tikzpicture}

\def\l{10}
\def\n{8}

\draw (0,0) -- (\l,0);
\foreach \y in {0,\l}{
\draw (\y,-.2) -- (\y,.2);
}

\foreach \x in {1,...,\n}{
\pgfmathparse{\l*(\x+2)/(\n+5)}
\let\y\pgfmathresult
\draw (\y,-.2) -- (\y,.2);
\ifthenelse{\x<\n}
{
\pgfmathparse{\l*(\x+2.5)/(\n+5)}
\let\z\pgfmathresult
\node at (\z,-.3){$E$};
}
{}
}

\pgfmathparse{\l*.9}
\let\y\pgfmathresult
\draw (\y,-.2) -- (\y,.2);
\node at (\y,-.4){$p$};

\node at (-.5,0){$e:$};

\end{tikzpicture}
\end{center}
This time $z_n$ is not affected by changing from $E^{k+1}$ to $E^{k+2}$, however it affects some of the $y^i_n$.
Let $y'^i_nz_n$ be the suffixes of $v_n=C[E^{k+2}](K!,n)$, and $p'_n$ the position of the beginning of $z_n$ in $v_n$.

As before, we assume that $\set{\sem{\varphi}(y'^i_{\sigma(n)}z_{\sigma(n)}) : n\in\N, 1\leq i\leq p_{\sigma(n)}}$ is unbounded, and we build a sequence $y'^{i_{\gamma(n)}}_{\gamma(n)}$ with the same start position in $e$, such that $\sem{\varphi}(y'^{i_{\gamma(n)}}_{\gamma(n)}z_{\gamma(n)})\to\infty$.

We can again extract context $C''[x]$, but we may need to use again Lemma \ref{sharpfun}, in order to map the $\sharp$'s of $C''[x]$ with the remaining repetitions of idempotent elements, (which could be any functions $g(n)<n$). The main idea is to map positions in $v_{\gamma(n)}$ with positions in $u_ {\gamma(n)}$ in order to be able to bound the values $\sem{\varphi}(y'^{i_{\gamma(n)}}_{\gamma(n)}z_{\gamma(n)})$ with what we know about the behaviour on $u_{\gamma(n)}$, and so get a contradiction.
Three cases are to be distinguished:
\begin{itemize}
\item If a factor corresponding to $E^{k+2}$ occurs in the $y'^{i_{\gamma(n)}}_{\gamma(n)}$, the precedent proof stays valid, and we can map $y'^{i_{\gamma(n)}}_{\gamma(n)}$ with some $y^{j_{\gamma(n)}}_{\gamma(n)}$ ($j_{\gamma(n)}$ may be different from $i_{\gamma(n)}$) in order to get the contradiction. The mapping just need to take in account the shift due to the new occurences of $E$, but the positions in the words are essentially the sames.
\item If the remaining factors contain at most $k$ occurences of $E$, then the position can be matched with positions in $u_n$ without any changes, and we get the contradiction.
\item If the remaining factors contain $k+1$ occurences of $E$, then we can use the equivalence $E^{k+1}\equiv_{\sem{\varphi}^\approx} E^k$ to match positions in $v_n$ with positions in $u_n$ and get the contradiction. This time we map positions in the first $E$ of each sequence $E^k$ with the corresponding position in the second one. Informally, we ``duplicate'' the first $E$ of each sequence.
\end{itemize}

To sum up, the following figure shows how positions of $v_n$ are mapped with positions of $u_n$, in the case $k=2$. This figure is just an example, and is simpler than the general case, because only one sequence of $E$'s appear here.
If an other sequence appears before, all the positions are shifted, but the general principle stays the same.
\begin{center}
\begin{figure}[h]
\begin{tikzpicture}

\def\l{10}
\def\n{8}

\draw (0,0) -- (\l,0);
\foreach \y in {0,\l}{
\draw (\y,-.2) -- (\y,.2);
}

\foreach \x in {1,...,\n}{
\pgfmathparse{(\x+2)*.8}
\let\y\pgfmathresult
\draw (\y,-.2) -- (\y,.2);
\ifthenelse{\x<\n}
{
\pgfmathparse{(\x+2.5)*.8}
\let\z\pgfmathresult
\node at (\z,.3){$E$};
}
{}
}

\pgfmathparse{9}
\let\y\pgfmathresult
\draw (\y,-.2) -- (\y,.2);
\node at (\y,.4){$p$};

\node at (-.5,0){$u_n:$};

\begin{scope}[yshift=-1.5cm]

\def\l{10.8}
\def\n{9}

\draw (0,0) -- (\l,0);
\foreach \y in {0,\l}{
\draw (\y,-.2) -- (\y,.2);
}

\foreach \x in {1,...,\n}{
\pgfmathparse{(\x+2)*.8}
\let\y\pgfmathresult
\draw (\y,-.2) -- (\y,.2);
\ifthenelse{\x<\n}
{
\pgfmathparse{(\x+2.5)*.8}
\let\z\pgfmathresult
\node at (\z,-.3){$E$};
}
{}
}

\pgfmathparse{9.8}
\let\y\pgfmathresult
\draw (\y,-.2) -- (\y,.2);
\node at (\y,-.4){$p$};

\node at (-.5,0){$v_n:$};
\end{scope}

\foreach \x in {0,...,35}{
\pgfmathparse{\x*6.4/35}
\let\y\pgfmathresult
\draw[dashed] (\y,0) -- (\y,-1.5);
}


\foreach \x in {0,...,20}{
\pgfmathparse{5.6+\x*(9-5.6)/20}
\let\y\pgfmathresult
\draw[dashed] (\y,0) -- (\y+.8,-1.5);
}
\end{tikzpicture}
\caption{Association of positions}\label{fig:assoc}
\end{figure}
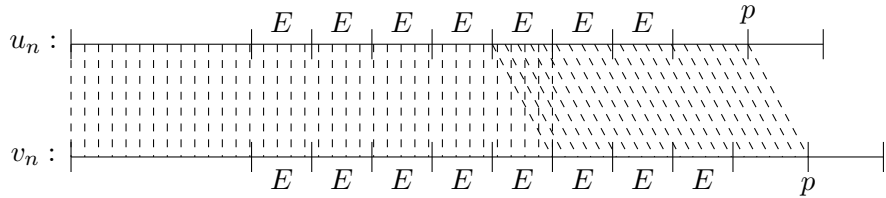
\end{center}
This choice of association is arbitrary: one can indeed choose any $E$ to duplicate, we will still be able to use the induction hypothesis on $u_n$, or the hypothesis that $\varphi$ is true on suffixes of $u_n$ starting before $p$, in order to conclude.
\bigskip

\textbf{Third case}:
\begin{center}
\begin{tikzpicture}

\def\l{10}
\def\n{8}

\draw (0,0) -- (\l,0);
\foreach \y in {0,\l}{
\draw (\y,-.2) -- (\y,.2);
}

\foreach \x in {1,...,\n}{
\pgfmathparse{\l*(\x+2)/(\n+5)}
\let\y\pgfmathresult
\draw (\y,-.2) -- (\y,.2);
\ifthenelse{\x<\n}
{
\pgfmathparse{\l*(\x+2.5)/(\n+5)}
\let\z\pgfmathresult
\node at (\z,-.3){$E$};
}
{}
}

\pgfmathparse{\l*.57}
\let\y\pgfmathresult
\draw (\y,-.15) -- (\y,.15);
\node at (\y,.4){$p$};

\node at (-.5,0){$e:$};

\end{tikzpicture}
\end{center}
 In all other situations, a combination of the techniques used above gives us the wanted result. We just need to do with $\psi$ what we did with $\varphi$ in the second case: for instance we may use $E^{k+1}\equiv_{\sem\psi^\approx}E^{k}$ if $z'_{\sigma(n)}$ contains $k+1$ occurences of $E$.

As before, the other way is similar, and we finally get $E^{k+1}\equiv_{\sphapp}E^{k+2}$.
In conclusion, $\ssp$ is aperiodic.

\subsubsection{Case $\phi=\varphi\UN\psi$}~

We just need to adapt the precedent proof to take in account some exceptions in the validities of $\varphi$ formulae. Indeed removing an occurence of $E$ does not change the number of possible mistakes, but adding one can double it (at worse), since at most two positions in $v_n$ are mapped to the same position in $u_n$.
Hence , under the hypotheses $E^{k}\equiv_{\sem{\psi}^\approx}E^{k+1}$ and  $E^{k}\equiv_{\sem{\varphi}^\approx}E^{k+1}$, we get $E^{k+1}\equiv_{\sem{\varphi\UN\psi}^\approx}E^{k+2}$, with a correction function that doubles the one in the precedent proof.
We can conclude that $\ssp$ is also aperiodic in this case.
\end{proof}

\subsection{From Aperiodic Stabilization Semigroups to $\ltlq$}

\begin{thm}\label{apltl}
Let $F$ be a cost function recognized by an aperiodic stabilization semigroup, then $F$ can be described by an $\ltlq$-formula.
\end{thm}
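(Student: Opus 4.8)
The plan is to prove the converse direction of the Schützenberger-style characterization: starting from an aperiodic stabilization semigroup $\semi$ with morphism $h:\A\to S$ and accepting ideal $I$ recognizing $F$, construct an $\ltlq$-formula $\phi$ with $\sem{\phi}^\approx = F$. This mirrors the classical Schützenberger proof (aperiodic monoid $\Rightarrow$ star-free $\Rightarrow$ LTL), and I would follow the same inductive skeleton, carrying the quantitative (counter/$\sharp$) information through the induction.

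First I would set up an induction not directly on $\semi$ but on a suitable complexity measure — for instance, following the classical proof, an induction on $|S|$ together with an induction on the $\mathcal{J}$-depth (or the number of $\sharp$-idempotents) of $\semi$, using aperiodicity to run a Green's-relations / Schützenberger-type decomposition. The base cases are when $\semi$ is trivial or has a single non-trivial $\mathcal{J}$-class; there the recognized cost function is either a classical star-free language (handled by the standard LTL construction, viewing a language $L$ as $\chi_L$ and writing its star-free expression as an $\ltlq$-formula without the $\UN$ operator) or a genuinely quantitative "counting" function attached to a single idempotent $e$ with $e^\sharp \neq e$, in which case the relevant behaviour is "$e$ is repeated many times", which is exactly what $\UN$ measures: a formula of the shape $(\varphi_e)\UN(\psi)$, where $\varphi_e$ describes staying inside the $e$-block and $\psi$ describes exiting it, computes (up to $\approx$) the number of iterations. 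For the inductive step I would decompose a word $u$ according to the first position where $h$ of a prefix enters a strictly smaller $\mathcal{J}$-class (the classical "first letter that changes the value" trick, which in the aperiodic case is unambiguous), express this split with an $U$ / $\UN$ operator, and combine by induction the formulae for the smaller semigroups obtained by restriction; the $\sharp$-structure on the idempotent generating the $\mathcal{J}$-class being peeled off is what forces the use of $\UN$ rather than $U$, and the global bound variable $N$ being shared across the formula is consistent because the $\sharp$-nesting depth is finite and each level contributes a fixed multiplicative constant to the correction function.

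To make the bookkeeping manageable I would phrase the induction in terms of $\ose$s and the syntactic congruence: by Theorem~\ref{congmin} it suffices to work with $\semi_F = \oexpr/{\equiv_F}$, and I would show that for each element $s = \cl{E}$ of $\semi_F$ one can write an $\ltlq$-formula $\phi_s$ such that $\sem{\phi_s}(u)$ is bounded on exactly the sets of words whose "$s$-behaviour" is bounded — more precisely, build formulae computing, for each ideal $I$, the function $u\mapsto I[\rho(h(u))]$. Aperiodicity enters exactly as in Schützenberger: it guarantees that the required "between two occurrences" predicates are $\UN$-definable, because there is no counting modulo a period obstructing a temporal description, and it guarantees the decomposition of each $\mathcal{H}$-class is a group-free (hence star-free / $\ltlq$-expressible) structure.

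I expect the main obstacle to be the quantitative part of the induction step: in the classical proof, splitting a word at a distinguished position and recursing is clean because languages compose by Boolean operations and concatenation, but here we must track that the $\approx$-class (and not just the underlying language) is preserved, i.e. that the counter values produced by the $\UN$ operators in the assembled formula are bounded if and only if the values read off the factorization tree of $\semi$ are bounded, with a correction function controlled uniformly. The delicate case is when several nested $\sharp$-idempotents are active simultaneously at the splitting position — exactly the situation analysed with the sequences $\overleftarrow{f_j}, \overrightarrow{f_j}$ in the proof of Theorem~\ref{ltlap} — and here I would invoke Lemma~\ref{sharpfun} (growing speeds) to argue that replacing each $\sharp$-exponent by whatever sub-linear function the concrete factorization produces does not change boundedness, so the single shared variable $N$ suffices. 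The second, more routine, difficulty is making the Green's-relations decomposition of a stabilization semigroup precise (the theory of $\mathcal{J}$-classes for stabilization semigroups is less standard than for ordinary semigroups); I would either cite the relevant structure theory or, to stay self-contained, carry out the induction on $|S|$ via a minimal non-trivial ideal of $\semi$ and the quotient $\semi/I_0$, which is enough for the argument.
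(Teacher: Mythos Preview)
Your plan takes a genuinely different route from the paper's proof. The paper does \emph{not} use Green's relations or a $\mathcal J$-class decomposition at all; instead it adapts the Diekert--Gastin \emph{local divisor} technique. The induction is on the pair $(|\M|,|\A|)$ in lexicographic order: one fixes a letter $b\neq 1$, partitions $\A^*$ into $L_0=\B^*$, $L_1=\B^*b\B^*$, $L_2=\B^*b(\B^*b)^+\B^*$ with $\B=\A\setminus\{b\}$, handles $L_0,L_1$ by the alphabet-size induction and a relativisation trick, and for $L_2$ passes to the local divisor $\M'=\langle Mb\cap bM,\circ,\natural\rangle$, which is a strictly smaller aperiodic stabilization monoid (since $1\notin\M'$). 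The quantitative content is concentrated in a single technical lemma (Lemma~\ref{tech}) showing that an $\ltlq$-formula $\psi_d$ on $\M'$ obtained by induction can be pulled back to a formula $\psi_d^{\bigstar}$ on $\A$ via an explicit syntactic translation, with the $\approx$-equivalence controlled through the compatible functions $\rho,\rho'$.

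Your Schützenberger/Wilke-style approach is in principle a legitimate alternative in the classical setting, but in the stabilization-semigroup world it has a real gap that the local-divisor route sidesteps: you need a Green's-relations theory for stabilization semigroups (how does $\sharp$ interact with $\mathcal J$-classes? what is the right notion of ``entering a smaller $\mathcal J$-class'' when values are $n$-indexed via $\rho$?), and none is available in the paper. You flag this yourself but do not resolve it, and the fallback you suggest (induction via a minimal ideal and the quotient) is not obviously enough to drive the recursion, since the quotient of a stabilization semigroup by an ideal need not be smaller in the right sense for your measure. Your base case is also too optimistic: a single $\mathcal J$-class with $e^\sharp\neq e$ does not in general reduce to one $\UN$ block --- even the unary-alphabet case in the paper already requires a small case analysis. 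The local-divisor approach buys exactly the avoidance of all this structural theory: each recursive call is on a concretely smaller monoid and the only analytic work is the $\rho\sim\rho'$ comparison in Lemma~\ref{tech}.
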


%

%
\begin{proof}
This proof is a generalization of the proof from Diekert and Gastin for aperiodic languages in \cite{DiekertG}.

Let us first notice that ``$\semi_F$ is aperiodic'' is equivalent to ``$F$ is computed by an aperiodic stabilization monoid'', since aperiodicity is preserved by quotient and by addition of a neutral element.

We take an alphabet $\A\subseteq\M$ to avoid using a morphism $h$ and simplify the proof. The $\ltlq$-formulae are about elements of $\M$, and are monotonic in the sense that $\sem{a}(bu)=0$ iff $b\geq a$, $\infty$ otherwise. It is easy to get from this to the general case by substituting in the formula an element $m$ by $\vee_{h(a)\geq m} a$.
We also will be sloppy with the empty word $\varepsilon$. It is not more difficult to take it in account, but the addition of a lot of special cases for $\varepsilon$ in the proof would make it harder to follow. 

We will always always assume that stabilisation monoids considered here are equipped with the minimal order $\leq$ compatible with the axioms. This means that the only pairs $(x,y)\in \M^2$ such that $x\leq y$ are the ones induced by the axioms of stabilisation monoids.

We assume that $F=f^\approx$ on alphabet $\A\subseteq\M$ is computed by $\M,\mathit{id},I$ with $\M$ aperiodic. Let $\rho$ be compatible with $\M$.

If $m\in\M$, we note $f_m$ the cost function $f_m(u)=\inf\set{n:\rho(u)(n)\geq m}$.
It is sufficient to show that the $f_m$ functions are $\ltlq$-computable up to $\approx$, since $f\approx\min_{m\notin I}f_m$.

We proceed by induction on both the size of the stabilization monoid and on the size of the alphabet, the induction parameter being 
$(|\M|,|\A|)$ for order $\leq_{lex}$.

We add in the induction hypothesis that $\M$ has a neutral element $1$ for multiplication.

If $|\M|=1$, i.e. $M=\set{a}$, then $f_a$ is the constant function $0$ or $\infty$, which is $\ltlq$-computable.

If $\A=\set{a}$, we can consider that $\M=\set {a^i:0\leq i\leq p}\cup\set{(a^p)^\sharp}$ (by aperiodicity of $\M$) and $(a^p)^\sharp\leq a^p$ is the only pair in $\leq$.
We can show that for all $b\in\M$, $f_b$ is $\ltlq$-computable:
\begin{itemize}
\item If $i<p$, $f_{a^i}\approx\sem{\bigwedge_{0\leq j<i} X^j a\wedge X^i \Omega}$,
\item $f_{a^p}\approx\sem{\bot\UN\Omega}$,
\item $f_{(a^p)^\sharp}\approx\sem{\bigwedge_{0\leq j<p} X^j a}$
\end{itemize}

Let us assume that $|\M|>1$, $|\A|>1$, and the theorem is true for all $(|\M'|,|\A'|)<_{lex}(|\M|,|\A|)$.

\begin{lem}\label{lem:letb}
There is a letter $b\in\A\setminus\{1\}$ such that there is no $a\in \A\setminus\{b\}$ with $a\leq b$.
\end{lem}
\begin{proof}
We show that $1$ is incomparable with all other elements in $\M$.
We assumed that the order $\leq$ in $\M$ is generated by the axioms of stabilisation monoids.
We recall the relevant axioms here: 
\begin{itemize}
\item[1.] if $e$ is idempotent, we have $e^\sharp\leq e$,
\item[2.] if $e,f$ are idempotents such that $e\leq f$, we have $e^\sharp\leq f^\sharp$,
\item[3.] if $x_1\leq y_1$ and $x_2\leq y_2$ then $x_1x_2\leq y_1y_2$.
\end{itemize}

We show by induction on the length of the derivation that $1$ is not comparable with any other element.
The only rule with no premise is the first one, and it cannot be used to compare $1$ with something else. Indeed, $e=1\Leftrightarrow e^\sharp=1$.

Assume $1\leq m$ or $m\leq 1$ with $m\neq 1$, and consider a derivation of minimal length showing this inequality.

If the last rule of the derivation is rule 2, we have for instance $e^\sharp=1$ and $m=x^\sharp\neq 1$. The premise is $1\leq x$ with $x\neq 1$, contradicting the minimality assumption.

If the last rule of the derivation is rule 3, we have for instance $x_1x_2=1$ and $m=y_1y_2$. Without loss of generality we can assume $y_1\neq 1$. Since $x_1x_2=1$, we get $x_1(x_1x_2)x_2=1$ as well, and more generally $x_1^n x_2^n=1$ for all $n\in\N$. This means $x_1^\omega x_2^\omega=1$. Therefore, $x_1^{\omega+1}x_2^\omega=x_1$. but since $\M$ is aperiodic, we have $x_1^{\omega+1}=x_1^\omega$, and we obtain $x_1=x_1^\omega x_2^\omega=1$.
We get a premise of the form $1\leq y_1$ with $y_1\neq 1$, contradicting again the minimality assumption.

We showed that $1$ is incomparable with all elements in $\M$. Therefore, it suffices to take for $b$ any minimal element of $\A\setminus\{1\}$ to obtain the wanted result.
\end{proof}

We choose a letter $b\neq 1$ given by Lemma \ref{lem:letb}. Let $\B=\A\setminus\set{b}$.

In the following, we will use the notation $\neg b$ as a shortcut for $\bigvee_{a\in \A\setminus\{b\}}$. Notice that in general, we could still have $\sem{\neg b}(b)=0$, if there was a letter $a<b$, but the choice of $b$ prevents this eventuality, and justifies the notation $\neg b$.

Let $L_0=\B^*$, $L_1=\B^*b\B^*$, and $L_2=\B^*b(\B^*b)^+\B^*$.
We have $\A^*=L_0\cup L_1\cup L_2$.

We define restrictions of $f_m$: $f_0,f_1,f_2$ on $L_0,L_1,L_2$ respectively (giving value $\infty$ outside of the domain). We have $f_m=\min(f_0,f_1,f_2)$.
Hence it suffices to show that the $f_i$'s are $\ltlq$-computable to get that $f_m$ is also $\ltlq$-computable (always up to $\approx$).

$f_0$ is computed by $\M$ on alphabet $\B$, so by induction hypothesis there is a formula $\varphi_0$ on $\B$ computing $f_0$.
The formula $\varphi_0'=\varphi_0\wedge G\neg b$ is a formula on $\A$ computing $f_0$.

For all $x\in\M$, let $\varphi_x$ be the $\ltlq$-formula on $\B$ computing $f_x$ (restricted to $\B^*$), these formulae exist by induction hypothesis, since $|\B|<|\A|$.

If $\varphi$ is an $\ltlq$-formula on $\B$, we define its ``relativisation'' $\varphi'$ on $\A$ which has the effect of $\varphi$ on the part before $b$ in a word. 
We define $\varphi'$ by induction in the following way:
$$\begin{array}{lcl}
a'&=&a\wedge XFb\\
\Omega'&=&b\\
(\varphi\wedge\psi)'&=&\varphi'\wedge\psi'\\
(\varphi\vee\psi)'&=&\varphi'\vee\psi'\\
(X\varphi)'&=&X\varphi'\wedge\neg b\\
(\varphi U\psi)'&=&(\varphi'\wedge\neg b)U\psi'\\
(\varphi \UN\psi)'&=&(\varphi'\wedge\neg b)\UN\psi'\\
\end{array}$$
With this definition, $\sem{\varphi'}(u_1bu_2)=\sem{\varphi}(u_1)$ for any $u_1\in\B^*$ and $u_2\in\A^*$.

We define the following formula on $\A$:

$$\varphi_1=\big(\bigvee_{xby=m} (\varphi_x'\wedge F(b\wedge X\varphi_y))\big)\wedge (\neg b U(b\wedge X G\neg b))$$

The second part controls that the word is in $L_1$.
We show $\sem{\varphi_1}\approx f_1$.

Let $u\in L_1$, we can write $u=u_1bu_2$ with $u_1,u_2\in\B^*$.\\
By definition of $\varphi_1$,\\
$\begin{array}{ll}
\sem{\varphi_1}(u)&=\min_{xby=m}\max(\sem{\varphi_x'}(u),\sem{\varphi_y}(u_2))\\
&=\min_{xby=m}\max(\sem{\varphi_x}(u_1),\sem{\varphi_y}(u_2))\\
&=\min_{xby=m}\max(f_x(u_1),f_y(u_2)).
\end{array}$

We have for any $z\in\M$ and $v\in\B^*$, $\rho(v)\succeq\bot|_{f_z(v)}z$ where $\bot$ is an extra smallest element (by definition of $f_z$).

But for any $x,y$ such that $xby=m$,\\
$\begin{array}{ll}
\rho(u)&\sim\tilde\rho(\rho(u_1)b\rho(u_2))\\
&\succeq\tilde\rho(\bot|_{f_x(u_1)}x\cdot b\cdot \bot|_{f_y(u_2)}y)\\
&\succeq \bot|_{\max(f_x(u_1),f_y(u_2)}m.
\end{array}$

It implies that for some $\beta$ (not depending on $u$), $\forall x,y$ such that $xby=m$, $f_m(u)\leq_\beta\max(f_x(u_1),f_y(u_2))$. 

In particular, $f_1(u)=f_m(u)\leq_\beta\min_{xby\in I}\max(f_x(u_1),f_y(u_2))=\sem{\varphi_1}(u)$.
We can conclude $f_1\preccurlyeq\sem{\varphi_1}$.

Conversely, let us assume that $f_1(u)\leq n$, it means that $\rho(u)(n)\geq m$.
but $\rho(u)\sim_\alpha\rho(u_1)\cdot b\cdot\rho(u_2)$, so $\rho(u_1)(\alpha(n))\cdot b\cdot\rho(u_2)(\alpha(n))\geq m$.

Let $x=\rho(u_1)(\alpha(n))$ and $y=\rho(u_2)(\alpha(n))$, we have $f_x(u_1)\leq\alpha(n)$ and $f_y(u_2)\leq\alpha(n)$, so $\max(f_x(u_1),f_y(u_2))\leq\alpha(n)$. We get $\sem{\varphi_1}(u)\leq\alpha(n)$, and in conclusion $\sem{\varphi_1}\preccurlyeq f_1$.
This concludes the proof of $\sem{\varphi_1}\approx f_1$.
\bigskip

Last but not least, we have to show that $f_2$ is $\ltlq$-computable up to $\approx$.
For that we will finally use the induction hypothesis on the size of the monoid (until now we only have decreased the size of the alphabet and kept the monoid unchanged).

We define the stabilization monoid $\M'=\perm{Mb\cap bM,\circ,\natural,\leq'}$ in the following way:
$xb\circ by=xby$, and for $xb$ idempotent $(xb)^\natural=(x^\omega)^\sharp b$ where $x^\omega=x^{|\M|}$ is idempotent, since $\M$ is aperiodic.
This monoid generalizes the construction of \emph{local divisor} from \cite{DiekertG}.
$\M'$ is a stabilization monoid, let $\rho'$ be compatible with $\M'$.
We can first notice that this definition implies that for all $k\in\N$, $(xb)^k=x^kb$, so $\M'$ is also aperiodic.
Moreover, we show that $1\notin\M$:
Assume $1\in\M'$, let $k=|\M|$, $1=xb=(xb)^k=x^kb^k=x^kb^{k+1}=(xb)^kb=1b=b$, but $b\neq 1$ so $1\notin\M'$.
Remark that $b$ is the neutral element for $\circ$ in $\M'$, and $|\M'|<|\M|$, which allows us to use induction hypothesis on $\M'$ with alphabet $\M'$.

Let $\Delta=b(\B^*b)^+$, then $L_2=\B^*\Delta\B^*$.

\noindent
Let $d\in\M$, we first want to show that $f_d$ over language $\Delta$ is $\ltlq$-computable up to $\approx$.

Let $\sigma:\Delta\rightarrow(\M'^\N)^*$ defined by
$$\sigma(bu_1b\dots u_kb)=(b\rho(u_1)b)\dots(b\rho(u_k)b).$$

By induction hypothesis, for any $x\in\M'$, there exists an $\ltlq$-formula $\psi_x$ on alphabet $\M'$ and a correction function $\alpha$ such that for any $v\in\M'^*$,\\ $\sem{\psi_x}(v)\approx_\alpha\inf\set{n\in\N:\rho'(v)(n)\geq x}$.\\
\begin{defi}
Let $\semi$ be a stabilization monoid. Let $f$ be a function $\semi^*\rightarrow\N^\infty$, and $\semi^\uparrow$ be the set of $\alpha$-increasing sequences of elements of $\semi$ (for some $\alpha$). We define $\tilde f: \semi^\uparrow\rightarrow\N_\infty$ by $\tilde f(\vec u)=\inf\set{n:f(u_n)\leq n}$.
\end{defi}
Remark that this notation is coherent with the $\tilde{}$ operator previously defined for functions $S^+\to \N\to S$ in the sense that if $f^\approx$ is recognized by $\semi,h,I$ with compatible function $\rho$, i.e.  $f\approx u \mapsto I[\rho(h(u))]$, then $\tilde f\approx \vec u \mapsto I[\tilde{\rho}(h(\vec u))]$.

This definition is needed because we already make use of $\rho$ to define $\sigma$, so each word of $\B^*$ is mapped to a sequence of elements. However we will need to recombine these various elements, so we will need a formula which is able to take as input sequences instead of words. This will be obtained by applying the tilde operator to the semantic of a formula.

\begin{lem}\label{tech}
We claim that there exists $\alpha$ and $\phi_d$ an $\ltlq$-formula on alphabet $\A$ such that for all $u\in\Delta$ and $v\in\B^*$:
$$\sem{\phi_d}(uv)\approx_\alpha\widetilde{\sem{\psi_d}}(\sigma(u))\approx_\alpha f_d(u)$$
\end{lem}

Intuitively, $\phi_d$  forgets the last $\B^*$-component $v$ of its input, and is able to apply $\sigma(u)$ to split the word according to the $b$'s, and compute the value of each component with respect to $\psi_d$.

With this result we can build a formula $\varphi_2$ computing $f_2$:
$$\varphi_2=(\bigvee_{xdy=m}(\varphi_x'\wedge F(b\wedge X\phi_d))\wedge F(b\wedge X(G\neg b\wedge \varphi_y)))\wedge \varphi_{L_2}$$
where $\varphi_{L_2}=F(b\wedge X Fb)$ controls that the word is in $L_2$.

By construction, lemmas and induction hypothesis, there exists $\alpha$ such that for all $v_1,v_2\in\B^*$ and $u\in\Delta$,\\
$\begin{array}{ll}
\sem{\varphi_2}(v_1uv_2)&\approx_\alpha\min_{xdy=m}\max(\sem{\varphi'_x}(v_1uv_2),\sem{\phi_d}(uv_2),\sem{\varphi_y}(v_2))\\
&\approx_\alpha\min_{xdy=m}\max(f_x(v_1),f_d(u),f_y(v_2)).
\end{array}
$

The proof that $min_{xdy=m}\max(f_x(v_1),f_d(u),f_y(v_2))\approx f_m(v_1uv_2)$ is similar to the proof of $\sem{\varphi_1}\approx f_1$. 

All this together gives us $\sem{\varphi_2}\approx f_2$, which concludes the proof.

\end{proof}

\subsubsection*{Proof of Lemma \ref{tech}}
\begin{proof}
First let us show that $\widetilde{\sem{\psi_d}}(\sigma(u))\approx_{\alpha} f_d(u)$ for some $\alpha$ and all $u\in\Delta$.
Let $u=bu_1bu_2\dots u_kb$ with $u_i\in\B^*$.
For each $i\in\sem{1,k}$ and $t\in\N$, $\rho(u_i)(t)=a_{i,t}\in\M$.
For all $t\in\N$, let $v_t=(ba_{1,t}b)\dots(ba_{k,t}b)$, $v_t$ is a word on $\M'$ of length $k$, and $\sigma(u)=(v_t)_{t\in\N}$.
Finally, let $w_t=ba_{1,t}ba_{2,t}\dots ba_{k,t}b$ of length $2k+1$ on $\M$.

We have:

$\begin{array}{ll}
\widetilde{\sem{\psi_d}}(\sigma(u)))&=\inf\set{t:\sem{\psi_d}(v_t)\leq t}\\
&\approx\inf\set{t:\inf\set{n:\rho'(v_t)(n)\geq d}\leq t}\\
\end{array}$
%
%

We will show that $\rho'(v_t)\sim\rho(w_t)$ for all $t$.
It suffices to verify that $\rho'$ and $\rho$ both verify all axioms of Theorem \ref{compaxioms} over $(ba_1b)\dots(ba_kb)$ and $ba_1ba_2\dots a_kb$ respectively. Let $\alpha$ and $\alpha'$ be the correction functions given by this theorem for $\rho$ and $\rho'$.

\begin{description}
\item[Letter.] For all $a\in M$, we have $\rho'(bab)\sim_{\alpha'}bab\sim_\alpha\rho(bab)$,
\item[Product.] For all $a_1,a_2\in M$, we have $\rho'((ba_1b)(ba_2b))\sim_{\alpha'} (ba_1b)\circ(ba_2b) = ba_1ba_2b \sim_{\alpha^4}\rho(ba_1ba_2b)$,
\item[Stabilization.] Let $bab$ be an idempotent of $\M'$, and $m\in\N$, 
Notice that $babab=bab$, so for all $l\geq 1$, $(ba)^lb=bab$, where product is relative to $\M$.

$\rho'((bab)^m)\sim_{\alpha'} (bab)^\natural|_m (bab)=(ba)^{\omega\sharp}b|_m (bab)$.
We perform the euclidean division $m=|\M|m'+m''$ avec $m''<|\M|$,\\
$\begin{array}{ll}
\rho((ba)^mb)&\sim \rho(((ba)^{|\M|})^{m'})\cdot (ba)^{m''}b\\
&\sim^{(1)} \rho({((ba)^{\omega})}^{m'})\cdot  (bab)\\
&\sim (ba)^{\omega\sharp} (bab)|_{m'}(ba)^{\omega} (bab)\\
&\sim^{(2)} (ba)^{\omega\sharp}b|_m (bab).
\end{array}$

Equivalence $(1)$ is obtained by aperiodicity of $\M$ ($(ba)^\omega$ is a letter here, not a word of length $|\M|$), and equivalence $(2)$ by using the property $m\approx_{\times(|\M|+1)}m'$.
\end{description}

These three cases show that any $n$-tree in $\M'$ over $v_t$ can be transformed into an $n$-tree over $w_t$, since each type of node is preserved. The substitution property corresponds to branching several $n$-trees together, so it is necessary to treat it here.

Thus we obtain $\rho'(v_t)\sim\rho(w_t)$ for all $t$.

%
%
%

Moreover, let $\vec w=(w_t)_{t\in\N}$, we want to show the following property:
$$(EQ):~\inf\set{n':\tilde\rho(\vec w)(n')\geq d}\approx\inf\set{t:\inf\set{n:\rho(w_t)(n)\geq d}\leq t}.$$
Let $N'=\inf\set{n':\tilde\rho(\vec w)(n')\geq d}$. Notice that $\rho(w_{N'})(N')\geq d$, 
so $N'\geq\inf\set{t:\inf\set{n:\rho(w_t)(n)\geq d}\leq t}$.

Conversely, let $T=\inf\set{t:\inf\set{n:\rho(w_t)(n)\geq d}\leq t}$ and $N$ the corresponding value of $\inf\set{n:\rho(w_t)(n)\geq d}$, we have $N\leq T$ and $\rho(w_t)$ is $\alpha$-increasing, so $\rho(w_T)(T)\geq_\alpha \rho(w_T)(N)\geq d$, i.e. $T\geq_\alpha\inf\set{n':\tilde\rho(\vec w)(n')\geq d}$.

Hence we have the equivalence $(EQ)$.

Finally,\\
$\begin{array}{llr}
\widetilde{\sem{\psi_d}}(\sigma(u)))&\approx\inf\set{t:\inf\set{n:\rho(w_t)(n)\geq d}\leq t}&\\
&\approx\inf\set{n:\tilde\rho(\vec w)(n)\geq d}&\text{   by }(EQ)\\
&=\inf\set{n:\tilde\rho(b\rho(u_1)b\rho(u_2)\dots\rho(u_k)b)(n)\geq d}&\\
&\approx\inf\set{n:\rho(bu_1bu_2 \dots u_kb)(n)\geq d}&\text{  Substitution axiom}\\
&\approx f_d(u).
\end{array}$

which concludes the proof of $\widetilde{\sem{\psi_d}}(\sigma(u)))\approx f_d(u)$.
\bigskip

It remains to show that there exists a formula $\phi_d$ and an $\alpha$ such that  for all $u,v\in \Delta\times\B^*$, $\sem{\phi_d}(uv)\approx_\alpha\widetilde{\sem{\psi_d}}(\sigma(u))$.

If $\psi$ is an $\ltlq$-formula on $\M'$, we define $\psi^\bigstar$ on alphabet $\A$ by induction on $\psi$:

$\begin{array}{ll}
x^\bigstar&=(b\wedge XFb)\wedge(X\varphi_x')\\
(\psi_1\wedge\psi_2)^\bigstar&=\psi_1^\bigstar\wedge\psi_2^\bigstar\\
(\psi_1\vee\psi_2)^\bigstar&=\psi_1^\bigstar\vee\psi_2^\bigstar\\
(X\psi)^\bigstar&=\neg b U(b\wedge\psi^\bigstar)\\
(\psi_1 U\psi_2)^\bigstar&=(b\implies\psi_1^\bigstar)U(b\wedge\psi_2^\bigstar)\\
(\psi_1 \UN\psi_2)^\bigstar&=(b\implies\psi_1^\bigstar)\UN(b\wedge\psi_2^\bigstar).
\end{array}$

Where $\varphi_x'$ is defined as before for any $\varphi_x$ on alphabet $\B$.

Let us show by induction on $\psi$ that that $\sem{\psi^\bigstar}(uv)\approx\widetilde{\sem\psi}(\sigma(u))$ for $u=bu_1bu_2\dots u_kb\in\Delta$ and $v\in\B^*$:
\begin{itemize}
\item If $x\in\M'$,\\
 $\sem{x^\bigstar}(uv)=\sem{\varphi_x'}(u_1bu_2\dots u_kbv)=\sem{\varphi_x}(u_1)$, and\\
$\widetilde{\sem{x}}(\sigma(u))=\inf\set{n:\sem{x}(\rho(u_1)(n))\leq n}\approx\inf\set{n:(\rho(u_1)(n))\geq x}\approx\sem{\varphi_x}(u_1)$.
\item $\wedge$ case:

$\begin{array}{ll}
\sem{(\psi_1\wedge\psi_2)^\bigstar}(uv)&=\max(\sem{\psi_1^\bigstar}(uv),\sem{\psi_2^\bigstar}(uv))\\
&\approx\max(\widetilde{\sem{\psi_1}}(\sigma(u)),\widetilde{\sem{\psi_2}}(\sigma(u)))\\
&\approx\widetilde{\sem{\psi_1\wedge\psi_2}}(\sigma(u))
\end{array}$
\item $\vee$ case:

$\begin{array}{ll}
\sem{(\psi_1\vee\psi_2)^\bigstar}(uv)&=\min(\sem{\psi_1^\bigstar}(uv),\sem{\psi_2^\bigstar}(uv))\\
&\approx\min(\widetilde{\sem{\psi_1}}(\sigma(u)),\widetilde{\sem{\psi_2}}(\sigma(u)))\\
&\approx\widetilde{\sem{\psi_1\vee\psi_2}}(\sigma(u))
\end{array}$
\item $X$ case:

$\begin{array}{ll}
\sem{(X\psi)^\bigstar}(uv)&=\sem{\psi^\bigstar}(bu_2b\dots u_kbv)\\
&\approx\widetilde{\sem\psi}(\sigma(bu_2b\dots u_kb))\\
&\approx\widetilde{\sem{X\psi}}(\sigma(bu_1bu_2b\dots u_kb))
\end{array}$
\item $U$ case:

$\begin{array}{ll}
\sem{(\psi_1 U\psi_2)^\bigstar}(uv)&=\min_{1\leq j\leq k}(\max(\sem{\psi_2^\bigstar}(bu_jb\dots u_kbv),\max_{1\leq i\leq j}\sem{\psi_1^\bigstar}(bu_ib\dots u_kbv)))\\
&\approx\min_{1\leq j\leq k}(\max(\widetilde{\sem{\psi_2}}(\sigma(bu_jb\dots u_kb)),\max_{1\leq i\leq j}\widetilde{\sem{\psi_1}}(\sigma(bu_ib\dots u_kb))))\\
&\approx\widetilde{\sem{\psi_1 U \psi_2}}(\sigma(u))\\
\end{array}$

\item The $\UN$ case is the same than above, allowing at most $N$ mistakes for $\psi_1$.

We now just have to take $\phi_d=\psi_d^\bigstar$ to complete the proof of Lemma \ref{tech}.
\end{itemize}
\end{proof}

\begin{cor}
The class of $\ltlq$-definable cost functions is decidable.
\end{cor}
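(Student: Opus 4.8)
The plan is to turn Theorems~\ref{ltlap} and~\ref{apltl} into an effective characterization. Together those two theorems say that a regular cost function $F$ is $\ltlq$-definable if and only if it is recognized by an aperiodic stabilization semigroup, and by the remark preceding Section~\ref{synccong}'s successor this holds if and only if the syntactic stabilization semigroup $\semi_F$ is aperiodic (since $\semi_F$ is a quotient of every stabilization semigroup recognizing $F$, and aperiodicity passes to quotients). So the decision procedure I would describe is: given a presentation of $F$ (say a $B$- or $S$-automaton, or any finite stabilization semigroup recognizing it), first compute $\semi_F$, which is effective by the minimality result of~\cite{CKL} (equivalently, via the explicit description $\semi_F=\oexpr/{\equiv_F}$ of Theorem~\ref{congmin}), and then test whether the finite stabilization semigroup $\semi_F$ is aperiodic.

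The aperiodicity test is routine: if $N=|\semi_F|$, then $\semi_F$ is aperiodic if and only if $s^{N+1}=s^{N}$ for every $s\in\semi_F$ (the exponent witnessing $s^{k+1}=s^{k}$ may always be taken below $N$), so it is a finite computation on the semigroup table. One should also note that $\ltlq$-definability is only considered for regular cost functions, and whether a given presentation denotes a regular cost function is itself decidable by~\cite{Col09}; for a non-regular $F$ the answer is simply ``not $\ltlq$-definable''.

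The only genuinely non-trivial ingredient is the equivalence packaged by Theorems~\ref{ltlap} and~\ref{apltl}; once that is available the corollary is immediate, the aperiodicity check being elementary. The point requiring a little care is the input format: the argument needs a presentation from which $\semi_F$ can actually be built, for which I would cite the effectiveness of the minimal stabilization semigroup construction of~\cite{CKL}, together with the fact that $B$-automata, $S$-automata and finite stabilization semigroups are effectively interconvertible by~\cite{Col09}.
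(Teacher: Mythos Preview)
Your proposal is correct and follows essentially the same approach as the paper: invoke Theorems~\ref{ltlap} and~\ref{apltl} to reduce $\ltlq$-definability to aperiodicity of the syntactic stabilization semigroup $\semi_F$, compute $\semi_F$ effectively via~\cite{CKL}, and test aperiodicity by iterating each element at most $|\semi_F|$ times. The only minor extraneous point is your remark about deciding regularity of the input presentation, which is not needed since any $B$/$S$-automaton or finite stabilization semigroup already defines a regular cost function by definition.
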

\begin{proof}
Theorems \ref{ltlap} and \ref{apltl} imply that it is equivalent for a regular cost function to be $\ltlq$-definable or to have an aperiodic syntactic stabilization semigroup.
If $F$ is given by an automaton or a stabilization semigroup, we can compute its syntactic stabilization semigroup $\semi_F$ (see \cite{CKL}) and decide if $F$ is $\ltlq$-definable by testing aperiodicity of $\semi_F$. This can be done simply by iterating at most $|\semi_F|$ times all elements of $\semi_F$ and see if each element $a$ reaches an element $a^k$ such that $a^{k+1}=a^k$.
\end{proof}

\section{Conclusion}
We first defined $\ltlq$ as a quantitative extension of LTL. We started the study of $\ltlq$ by giving an explicit translation from $\ltlq$-formulae to $B$-automata and $S$-automata, therefore showing that the boundedness (and comparison) problem for $\ltlq$-formulae is PSPACE-complete.
 We then showed that the expressive power of $\ltlq$ in terms of cost functions is the same as aperiodic stabilization semigroups. The proof uses a new syntactic congruence, which has a general interest in the study of regular cost functions. This result implies the decidability of the $\ltlq$-definable class of cost functions.
%
\bigskip

As a further work, we can try to put \ose s in a larger framework, by doing an axiomatization of $\omega\sharp$-semigroups.
We can also extend this work to infinite words, and define an analog to Büchi automata for cost functions. To continue the analogy with classic languages results, we can define a quantitative extension of FO describing the same class as $\ltlq$, and search for analog definitions of counter-free $B$-automata and star-free $B$-regular expressions.
The translation from $\ltlq$-formulae to $B$-automata can be further studied in terms of optimality of number of counters of the resulting $B$-automaton.

\subsection*{Acknowledgments}
I am very grateful to my advisor Thomas Colcombet for our helpful discussions, and for the guidelines he gave me on this work, and to Michael Vanden Boom for helping me with language and presentation issues. I also thank the anonymous reviewers for their useful comments on the presentation.

\bibliographystyle{alpha}
\bibliography{biblio}

\begin{thebibliography}{AETP01}

\bibitem[AETP01]{AETP01}
Rajeev Alur, Kousha Etessami, Salvatore~La Torre, and Doron Peled.
\newblock Parametric temporal logic for "model measuring".
\newblock {\em ACM Trans. Comput. Log.}, 2(3):388--407, 2001.

\bibitem[BC06]{BojanczykC06}
Miko{\l}aj Boja\'{n}czyk and Thomas Colcombet.
\newblock Bounds in $\omega$-regularity.
\newblock In {\em Proceedings of LICS 2006}, pages 285--296. IEEE Computer
  Society Press, 2006.

\bibitem[Boj04]{Bojanczyk04}
Miko{\l}aj Boja{\'n}czyk.
\newblock A bounding quantifier.
\newblock In {\em Computer science logic}, volume 3210 of {\em Lecture Notes in
  Comput. Sci.}, pages 41--55. Springer, Berlin, 2004.

\bibitem[CKL10]{CKL}
Thomas Colcombet, Denis Kuperberg, and Sylvain Lombardy.
\newblock Regular temporal cost functions.
\newblock In {\em Automata, languages and programming. {P}art {II}}, volume
  6199 of {\em Lecture Notes in Comput. Sci.}, pages 563--574. Springer,
  Berlin, 2010.

\bibitem[Col09]{Col09}
Thomas Colcombet.
\newblock The theory of stabilisation monoids and regular cost functions.
\newblock In {\em Automata, languages and programming. {P}art {II}}, volume
  5556 of {\em Lecture Notes in Comput. Sci.}, pages 139--150, Berlin, 2009.
  Springer.

\bibitem[DG08]{DiekertG}
Volker Diekert and Paul Gastin.
\newblock First-order definable languages.
\newblock In {\em Logic and automata}, volume~2 of {\em Texts Log. Games},
  pages 261--306. Amsterdam Univ. Press, Amsterdam, 2008.

\bibitem[DG10]{DG}
St{\'e}phane Demri and Paul Gastin.
\newblock Specification and verification using temporal logics.
\newblock In {\em Modern applications of automata theory}, volume~2 of {\em
  IISc Research Monographs}. World Scientific, 2010.
\newblock To appear.

\bibitem[DJP04]{DJP04}
Nachum Dershowitz, D.N. Jayasimha, and Seungjoon Park.
\newblock Bounded fairness.
\newblock In Nachum Dershowitz, editor, {\em Verification: Theory and
  Practice}, volume 2772 of {\em Lecture Notes in Computer Science}, pages
  440--442. Springer Berlin / Heidelberg, 2004.

\bibitem[Has82]{Hashiguchi82}
Kosaburo Hashiguchi.
\newblock Limitedness theorem on finite automata with distance functions.
\newblock {\em J. Comput. Syst. Sci.}, 24(2):233--244, 1982.

\bibitem[Has88]{Hashiguchi88}
Kosaburo Hashiguchi.
\newblock Relative star height, star height and finite automata with distance
  functions.
\newblock In {\em Formal Properties of Finite Automata and Applications}, pages
  74--88, 1988.

\bibitem[Has90]{Has90}
Kosaburo Hashiguchi.
\newblock Improved limitedness theorems on finite automata with distance
  functions.
\newblock {\em Theoret. Comput. Sci.}, 72(1):27--38, 1990.

\bibitem[Kir05]{Kirsten05}
Daniel Kirsten.
\newblock Distance desert automata and the star height problem.
\newblock {\em Theor. Inform. Appl.}, 39(3):455--509, 2005.

\bibitem[KPV09]{PromptLTL}
Orna Kupferman, Nir Piterman, and Moshe~Y. Vardi.
\newblock From liveness to promptness.
\newblock {\em Formal Methods in System Design}, 34(2):83--103, 2009.

\bibitem[SC85]{SC85}
A.~P. Sistla and E.~M. Clarke.
\newblock The complexity of propositional linear temporal logics.
\newblock {\em J. Assoc. Comput. Mach.}, 32(3):733--749, 1985.

\bibitem[Sch65]{Schutz65}
M.-P. Schützenberger.
\newblock On finite monoids having only trivial subgroups.
\newblock {\em Information and Control 8}, pages 190--194, 1965.

\bibitem[VW86]{VarWol}
Moshe~Y. Vardi and Pierre Wolper.
\newblock Automata-theoretic techniques for modal logics of programs.
\newblock {\em J. Comput. Syst. Sci.}, 32(2):183--221, 1986.

\end{thebibliography}

\end{document}